\newcommand{\subalign}[1]{%
  \vcenter{%
    \Let@ \restore@math@cr \default@tag
    \baselineskip\fontdimen10 \scriptfont\tw@
    \advance\baselineskip\fontdimen12 \scriptfont\tw@
    \lineskip\thr@@\fontdimen8 \scriptfont\thr@@
    \lineskiplimit\lineskip
    \ialign{\hfil$\m@th\scriptstyle##$&$\m@th\scriptstyle{}##$\crcr
      #1\crcr
    }%
  }
}
\renewcommand{\procx}{\rvax_{1}^{\infty}}
\renewcommand{\procy}{\rvay_{1}^{\infty}}
\newtheorem{thm}{Theorem}
\newtheorem{defn}{Definition}
\newtheorem{coro}{Corollary}
\newtheorem{lem}{Lemma}
\newtheorem{rem}{Remark}
\newtheorem{prop}{Proposition}
\newtheorem{fact}{\textbf{Fact}}
\newtheorem{property}{\textbf{P\!\!}}
\newtheorem{assu}{\textbf{Assumption}}
\title{
Stationarity in the Realizations of the \\Causal Rate-Distortion Function for \\One-Sided Stationary Sources
} 
\author{Milan S. Derpich\thanks{Milan S. Derpich and Marco A. Guerrero are with the Department of Electronic Engineering, Universidad T\'ecnica Federico Santa Mar\'ia, Av. Espa\~na 1680, Valpara\'iso, Chile. Their work was partially funded by CONICYT grants FONDECYT 1171059 and FB0008.}, 
Marco A. Guerrero and 
Jan {\O}stergaard\thanks{Jan {\O}stergaard is with the Department of Electronic Systems, 
Aalborg University, Fredrik Bajers Vej 7, DK-9220, Denmark, jo@es.aau.dk
}
}
\begin{document}

\maketitle
\begin{abstract} 
This paper derives novel results on the characterization of the
the causal \textit{information rate-distortion function} (IRDF) 
$R_{c}^{it}(D)$
for arbitrarily-distributed one-sided stationary $\kappa$-th order Markov source $\rvax_{1}^{\infty}=\rvax(1),\rvax(2),\ldots$.
It is first shown that
Gorbunov \& Pinsker’s results on the stationarity of the realizations to the causal IRDF (stated for two-sided stationary sources) do not apply to the commonly used family of 
\textit{asymptotic average single-letter} (AASL) distortion criteria.
Moreover, we show that, in general, a reconstruction sequence cannot be both jointly stationary with a one-sided stationary source sequence and causally related to it.
This implies that, in general, the causal IRDF for one-sided stationary sources cannot be realized by a stationary distribution.
However, we prove that for an arbitrarily distributed one-sided stationary source and a large class of distortion criteria (including AASL), the search for $R_{c}^{it}(D)$ can be restricted to distributions which yield the output sequence $\rvay_{1}^{\infty}$ jointly stationary with the source  after $\kappa$ samples. 
Finally, we improve the definition of 
the stationary causal IRDF $\overline{R}_{c}^{it}(D)$ previously introduced 
by Derpich and {\O}stergaard for two-sided Markovian stationary sources and show that
$\overline{R}_{c}^{it}(D)$ for a two-sided source $\ldots,\rvax(-1),\rvax(0),\rvax(1),\ldots$
equals
$R_{c}^{it}(D)$ for the associated one-sided source $\rvax(1),\rvax(2),\ldots$.
This implies that, for the Gaussian quadratic case, 
the practical zero-delay encoder-decoder pairs proposed by Derpich \& {\O}stergaard for approaching $\overline{R}_{c}^{it}(D)$ achieve an operational data rate which exceeds $R_{c}^{it}(D)$ by less than
$1+0.5\log_{2}(2\pi\expo{}/12)\simeq 1.254$ bits per sample.
\end{abstract}

\section{Introduction}\label{sec:Intro}
The \textit{information} RDF (IRDF) $R^{it}(D)$ 
for a given one-sided
random source process $\rvax_{1}^{\infty}\eq\set{\rvax(1),\rvax(2),\ldots}$ can be defined as the infimum of the mutual information rate~\cite[Section~8.2]{gray--11} 
\begin{align}\label{eq:Irate_def}
 \bar{I}(\rvax_{1}^{\infty};\rvay_{1}^{\infty})
 \eq 
 \lim_{k\to\infty}\frac{1}{k} I(\rvax_{1}^{k};\rvay_{1}^{k})
\end{align}
between source and reconstruction $\rvay_{1}^{\infty}$ such that a given 
fidelity criterion does not exceed a distortion value $D$~\cite{berger71,covtho06,gray--11}.
If one adds to this definition the restriction that the decoder output can only depend causally upon the source, one obtains what is known as the causal~\cite{neugil82,derost12}, non-anticipative~\cite{gorpin73,pingor87,chasta14} or sequential IRDF~\cite{tatiko03,tatsah04,tankim15}.
All these are equivalent and will be denoted as $R_{c}^{it}(D)$, defined 
in terms of the mutual information $I(\rvax_{1}^{n};\rvay_{1}^{n})$~\cite{gray--11,covtho06}
as
\begin{align}\label{eq:Causal_IRDF_base}
 R_{c}^{it}(D)\eq \inf  \bar{I}(\rvax_{1}^{\infty};\rvay_{1}^{\infty}),
\end{align}
where the infimum is taken over all joint distributions of $\rvay_{1}^{\infty}$ given $\rvax_{1}^{\infty}$ such that the causality Markov chains 
(which will be referred to as the \textit{short} causality constraint)
\begin{align}\label{eq:MC_causality_0}
\rvax_{k+1}^{\infty} \longleftrightarrow 
\rvax_{1}^{k} \longleftrightarrow    \rvay_{1}^{k}, \fspace k=1,2,\ldots.
\end{align}
hold and which yield distortion 
not greater than $D$, for some fidelity criterion.
%
Notice that, if one is given a two-sided random source process 
$\rvax_{-\infty}^{\infty}=\set{\ldots,\rvax(-1),\rvax(0),\rvax(1),\ldots}$ instead, and one is interested only in encoding and reconstructing the samples $\rvax_{1}^{\infty}$, then the causality constraints may be stated as 
\begin{align}\label{eq:MC_causality_2sdd}
\rvax_{k+1}^{\infty} \longleftrightarrow 
\rvax_{-\infty}^{k} \longleftrightarrow    \rvay_{1}^{k}, \fspace k=1,2,\ldots.
\end{align}
as done in~\cite{gorpin73,pingor87,derost12}.
This notion of causality will be referred to as the \textit{long} causality constraint.

The motivation for considering in this work one-sided instead of two-sided sequences (and thus~\eqref{eq:MC_causality_0} instead of~\eqref{eq:MC_causality_2sdd}) arises from the aim of building encoder-decoder systems which operate with zero delay (the same motivation behind the causality constraint).
 To see this, notice that the causality constraint~\eqref{eq:MC_causality_2sdd} for two-sided sources 
 corresponds to the situation in which source samples in the infinite past exist and are available to the encoder.
 This may require an infinite delay before actually beginning to encode and decode.
 By contrast, the causality constraint~\eqref{eq:MC_causality_0} describes the case when the source is a one-sided process and $\rvay_{1}^{k}$ depends only upon $\rvax_{1}^{k}$ (as in~\cite{chasta14,woolin17}).

\begin{rem}\label{rem:strong_causality}
 It is important to highlight at this point that even though the causality condition~\eqref{eq:MC_causality_0} can also be applied to a two-sided source process $\rvax_{-\infty}^{\infty}$, it would not ensure causality in that case.
To see why, consider the situation in which $\rvax_{-\infty}^{\infty}$ is a binary i.i.d. source where each $\rvax_{k}$ takes the values $1$ or $0$ with equal probability.
Suppose $\rvay_{1}^{\infty}$ is built as $\rvay(k) =\rvax_{-k}\oplus \rvax_{2k}$, where $\oplus$ denotes the exclusive ``OR'' operator.
It is easy to see that $(\rvax_{1}^{\infty},\rvay_{1}^{\infty})$ satisfies~\eqref{eq:MC_causality_0}, even though $\rvay_{1}^{\infty}$ depends non-causally on 
$\rvax_{1}^{\infty}$.

The above observation reveals that if the source is two sided but 
only the samples $\rvax_{1}^{\infty}$ are encoded and the decoded process is one-sided ($\rvay_{1}^{\infty}$), then 
one needs to impose instead the (more general) causality constraint%
\begin{align}\label{eq:MC_causality_hybrid}
(\rvax_{-\infty}^{0},
\rvax_{k+1}^{\infty})
\longleftrightarrow 
\rvax_{1}^{k} \longleftrightarrow    \rvay_{1}^{k}, \fspace k=1,2,\ldots.
\end{align}
which implies~\eqref{eq:MC_causality_0}.
Besides causality, these Markov chains guarantee that even if the source is a two-sided process, its encoding and reconstruction proceeds as if it were a one-sided process.

Notice that~\eqref{eq:MC_causality_hybrid} implies~\eqref{eq:MC_causality_0} and~\eqref{eq:MC_causality_2sdd}.
For this reason,~\eqref{eq:MC_causality_hybrid} will be referred to as the \textbf{strong} causality constraints.
\end{rem}

As we shall see in sections~\ref{sec:joint_stat_and_causality} and~\ref{sec:quasi_stat_is_sufficient}, this situation, where at time $k$ the encoder can take only $\rvax_{1}^{k}$ as input, entails significant challenges due to the unavoidable need to deal with transient phenomena.

The operational significance of $R_{c}^{it}(D)$ stems from its relation to the 
causal \textit{operational} RDF (ORDF), denoted as  
$R^{op}_{c}(D)$.
The latter is defined as the infimum of the average data-rates which are achievable by a sequence of causal encoder-decoder functions~\cite{neugil82,derost12} yielding a distortion not greater than $D$.
Characterizing $R^{op}_{c}(D)$ is important because every zero-delay source code 
(suitable for applications such as low-delay streaming~\cite{chenwu15} or networked control~\cite{naifag07,silder16}) must be causal.

An IRDF is said to be achievable if it equals the ORDF under the same constraints~\cite{berger71,covtho06}.
As far as the authors are aware, the achievability of $R_{c}^{it}(D)$ has not been demonstrated yet, 
for any source and distortion measure, and thus the gap between $R_{c}^{op}(D)$ and $R_{c}^{it}(D)$ is unknown in general.
However, it is known that~\cite[Section~II]{derost12} 
\begin{align}
  R_{c}^{op}(D)\geq R_{c}^{it}(D)
\end{align}
and for Gaussian sources it is possible to construct causal codes with an operational data rate exceeding  
$R_{c}^{it}(D)$ by less than (approximately) 0.254 bits/sample (1.254 bits/sample for zero-delay codes), once the statistics which realize the latter are known~\cite{derost12}.
This underlines the importance of studying the causal IRDF $R_{c}^{it}(D)$.

To the best of the authors' knowledge, no closed-form expressions are known for $R_{c}^{it}(D)$, except when considering \textit{mean-squared-error} (MSE) distortion and for Gaussian i.i.d. or Gaussian \textit{auto-regressive} (AR)-1 sources, either scalar~\cite[Section~IV]{derost12} or vector valued~\cite{staost17}%
\footnote{Although for i.i.d. sources and for a single-letter distortion criterion a realization of the (non-causal) RDF satisfies causality~\cite{covtho06,berger71}, the formulas available in the literature for expressing it require numerical iterative procedures and cannot be regarded as ``closed-form'' except for the Gaussian case and MSE distortion.}.
However, there exist various structural properties of the causal IRDF that have been found in literature when $R_{c}^{it}(D)$ admits (or is assumed to admit) a stationary realization.

Indeed, the stationarity of the realizations of the causal IRDF has played a crucial role in simplifying the computation of $R_{c}^{it}(D)$ for Gaussian 1-st order Markovian sources and MSE distortion in~\cite{tanaka15}.
 It has also been a key implicit assumption in~\cite{tatsah04}, and an explicit assumption in works such as~\cite{chasta14} and~\cite{derost12}.
 In particular, for a stationary two-sided random source $\rvax_{-\infty}^{\infty}$,~\cite[Definition~6]{derost12} introduced the stationary causal IRDF 
 \begin{align}\label{eq:overlineRcitD_def}
  \overline{R}_{c}^{it}(D)\eq \inf \bar{I}(\rvax_{1}^{\infty};\rvay_{1}^{\infty})
 \end{align}
where the infimum is taken over all 
distributions of $\rvay_{1}^{\infty}$ given $\rvax_{1}^{\infty}$ which yield a one-sided reconstruction processes $\rvay_{1}^{\infty}$ jointly stationary with $\rvax_{1}^{\infty}$, satisfying~\eqref{eq:MC_causality_2sdd} and an asymptotic average MSE distortion constraint on $(\rvax_{1}^{\infty},\rvay_{1}^{\infty})$.
For the case of a Gaussian source, it was shown in~\cite{derost12} that an operational data-rate exceeding $\overline{R}_{c}^{it}(D)$ by less than $1+0.5\log_{2}(2\pi\expo{})\simeq 1.254$ bits/sample was achievable using a \textit{entropy-coded subtractively dithered uniform quantizer} 
(ECSDUQ)
surrounded by 
\textit{linear time-invariant} (LTI)
filters operating in steady state.
These examples illustrate the relevance of determining whether (or in which cases) the causal IRDF admits a stationary realization.
 
To the best of our knowledge, the only work which has given an answer to this question in a general framework is~\cite{gorpin73}.
Under a set of assumptions (discussed in Section~\ref{sec:revisiting_gorpin73} below), it is shown in~\cite[Theorem 4]{gorpin73} that the search for the causal IRDF for a large class of two-sided sources and distortion criteria can be restricted to reconstructions which are jointly stationary with the source.
Unfortunately, as we show in Section~\ref{subsec:analysis_of_gorpin73}, the assumptions on the fidelity criteria utilized in~\cite{gorpin73} leave out some common distortions (such as the family of asymptotic average single-letter fidelity criteria), and the statement of~\cite[Theorem 4]{gorpin73} contains an assumption whose validity has to be proved.
More importantly, the entire analysis of~\cite{gorpin73} is built for two-sided processes (using the causality constraint~\eqref{eq:MC_causality_2sdd}), which opens the question of whether its results could apply to  one-sided processes as well, with the causality constraint~\eqref{eq:MC_causality_0}.

In this paper we give an answer to these questions and use the results to prove some novel properties of the causal IRDF associated with the stationarity of its realizations.
Specifically, our main contributions are the following:

\begin{enumerate}
 \item We show in Theorem~\ref{thm:conditions_forJS_and_causal} that if a  pair of one-sided random processes $\rvax_{1}^{\infty},\rvay_{1}^{\infty}$ is jointly stationary, with the latter depending causally on the former according to~\eqref{eq:MC_causality_hybrid} (but otherwise arbitrarily distributed), then it must also satisfy the Markov chains
 \begin{align}
   \rvax_{k+1}^{\infty} 
   \longleftrightarrow
   \rvax(k)
   \longleftrightarrow
   \rvay(k),
   \fspace 
    k=1,2,\ldots,
 \end{align}
which is a fairly restrictive condition.
In particular, as we show in Theorem~\ref{thm:stationarity_and_causality_Gaussian}, if $\rvax_{1}^{\infty},\rvay_{1}^{\infty}$ are jointly Gaussian and  $\rvay_{1}^{\infty}$ depends causally upon $\rvax_{1}^{\infty}$, then joint stationarity implies $\rvax_{1}^{\infty}$ is an i.i.d. or 1st-order Markovian process.
This stands in stark contrast with what was shown in~\cite{gorpin73} for two-sided stationary processes and constitutes a counterexample of what is stated in~\cite[Theorem~III.6]{stakou15}.

\item 
Despite the above, we show in Theorem~\ref{thm:QJS_is_sufficient}
that for any $\kappa$-th order Markovian one-sided stationary source $\rvax_{1}^{\infty}$ and a large class of distortion constraints,
the search for the causal IRDF (as defined in~\eqref{eq:Causal_IRDF_base}) can be restricted to output sequences causally related to the source and 
jointly stationary with it after $\kappa$ samples, and such that 
$
\bar{I}(\rvax_{\kappa}^{\infty};\rvay_{\kappa}^{\infty})
=
\bar{I}(\rvax_{1}^{\infty};\rvay_{1}^{\infty})$.
We refer to such pairs of processes as being $\kappa$-\textit{quasi-jointly stationary} ($\kappa$-QJS) (this notion is formally introduced in Definition~\ref{def:quasi_jointly_stat} below).
A consequence of this result is that 
for any $\kappa$-th order two-sided Markovian stationary source $\rvax_{-\infty}^{\infty}$,
$\overline{R}_{c}^{it}(D)$ 
equals $R_{c}^{it}(D)$ for the corresponding one-sided stationary source $\rvax_{1}^{\infty}$.
The relevance of this finding is that for Gaussian stationary sources and asymptotic MSE distortion, an operational data rate exceeding 
$\bar{I}(\rvax_{1}^{\infty},\rvay_{1}^{\infty})$ (and thus $R_{c}^{it}(D)$)  
by less than approximately 
$0.254$ bits/sample, when operating causally, and 
$1.254$ bit/sample, in zero-delay operation, is achievable  
by using a scalar ECSDUQ as in~\cite{derost12}.
\end{enumerate}

The remainder of this paper begins with Section~\ref{sec:revisiting_gorpin73}, in which the assumptions leading to~\cite[Theorem 4]{gorpin73} are revisited and the limitations of that theorem are discussed.
In Section~\ref{sec:joint_stat_and_causality} we prove that, in general, it is not possible to have two one-sided processes which are jointly stationary and, at the same time, satisfy the causality constraint~\eqref{eq:MC_causality_0}.
Section~\ref{sec:quasi_stat_is_sufficient} presents our main theorem (Theorem~\ref{thm:QJS_is_sufficient}), which shows that the search for the causal IRDF for one-sided $\kappa$-th order Markovian stationary sources can be restricted to $\kappa$-QJS processes.
Finally, Section~\ref{sec:Conclusions} draws the main conclusions of this work.
All proofs are presented in section~\ref{sec:appendix} (the Appendix), which also contains some technical lemmas required by these proofs.

\paragraph*{Notation}
$\Rl$ denotes the real numbers, $\Z$ denotes the integers, $\Nl=\Z^{+}$ is the set of natural numbers (positive integers), 
$\Z_{0}^{-}\eq \set{\ldots,-2,-1,-0}$ and $\Nl_{0}\eq\set{0,1,\ldots}$.
For every $x\in\Rl$, the ceiling operator $\lceil x\rceil$ yields the smallest integer not less than $x$.  
We use non-italic letters for scalar random variables, such as $\rvax$.
Random sequences are denoted as $\rvax_{1}^{k}=\set{\rvax(i)}_{i=1}^{k}=\set{\rvax(1),\rvax(2),\ldots,\rvax(k)}$.
For a random (one-sided) process $\rvax_{1}^{\infty}$
we will sometimes use the short-hand notation $\rvax$ wherever this meaning is clear from the context.
When convenient, we write a random sequence $\rvay_{j}^{k}$, $j\leq k$, as the column vector
$\rvey^{j}_{k}\eq [\rvay(j)\; \rvay(j+1)\;\cdots\;\rvay(k)]^{T}$ (the indices $j$ and $k$ are swapped so that the smallest index goes above the largest one, thus mimicking the usual index order in a column vector). 
The entry on the $f$-th row and $k$-th column of a matrix $\bM$ is denoted as $[\bM]_{f,k}$, with 
$[\bM]^{j}_{k}$ being the sub-matrix of $\bM$ containing its rows $j$ to $k$, $j\leq k$.  

For a random element $\rvax$ in a given alphabet (set) $\Xsp$, we write $\Bsp(\Xsp)$ to denote a sigma-algebra associated with $\Xsp$ and $P_{\rvax}:\Bsp(\Xsp)\to[0,1]$ to denote its probability distribution (or probability measure).
We write $\rvax \sim \rvay$ to describe the fact that $\rvay$ has the same probability distribution as $\rvax$,
and $\rvax\Perp \rvay$ to state that $\rvax$ and $\rvay$ are independent.
We write the condition in which two random elements $\rvaa,\rvab$ are independent given a third random element $\rvac$  using the Markov chain notation 
$
\rvaa
\longleftrightarrow 
\rvac
\longleftrightarrow 
\rvab$.
If $\Wsp$ is a set of probability distributions, then $(\Wsp)$ denotes the set of all random elements whose probability distribution belongs to $\Wsp$.
The expectation operator is denoted as $\Expe{\cdot}$.
We write $\Xsp^{k}$ as a shorthand for $\times_{i=1}^{k}\Xsp$.
The mutual information between two random elements 
$\rvax\in \Xsp$ $\rvay\in \Ysp$
is defined as%
~\cite[Lemma 7.14]{gray--11}
\begin{align}
 I(\rvax;\rvay)
 \eq 
 \sup_{q,r}\Expe{
 \log\left(
 \frac{P_{q(\rvax),r(\rvay)}(q(\rvax),r(\rvay))} 
      {P_{q(\rvax)}(q(\rvax)) P_{r(\rvay)}(r(\rvay))}\right)  
 },
\end{align}
where the supremum is over all quantizers $q$ and $r$ of $\Xsp$ and $\Ysp$, and  
$P_{q(\rvax),r(\rvay)}$, $P_{q(\rvax)}$ and $P_{r(\rvay)}$, are the joint and marginal distributions of  
$q(\rvax)$ and $r(\rvay)$, respectively.
If $\rvax,\rvay$ have 
joint and marginal \textit{probability density functions} (PDFs)
$f_{\rvax,\rvay}$,
$f_{\rvax}$ and 
$f_{\rvay}$, respectively, then~\cite{covtho06} 
$$
I(\rvax;\rvay)\eq \Expe{\log\left(\frac{f_{\rvax,\rvay}(\rvax,\rvay)}{f_{\rvax}(\rvax) f_{\rvay}(\rvay)}\right)}.
$$
The conditional mutual information 
$I(\rvaa;\rvab|\rvac)$ 
is defined via the chain-rule (cr) of mutual information 
$I(\rvaa;\rvab|\rvac)\eq I(\rvaa;\rvab,\rvac) - I(\rvaa;\rvac)$.
The mutual information rate $\bar{I}(\rvax_{1}^{\infty};\rvay_{1}^{\infty})$ between two processes $\rvax_{1}^{\infty}$ and $\rvay_{1}^{\infty}$ is defined as in~\eqref{eq:Irate_def}.
The variance of a real-valued random variable $\rvax$ is denoted as 
$\sigsq_{\rvax}=\Expe{(\rvax-\expe{\rvax})^{2} }$.
The auto-correlation function of a random process $\rvax_{1}^{\infty}$ is denoted 
$\varrho_{\rvax}(\tau,k)\eq \Expe{\rvax(k)\rvax(k+\tau)}$, $k\geq 1$, $\tau>-k$.

The following properties of the mutual information involving any random elements $\rvaa,\rvab,\rvac$  will be utilized and referred to throughout this work:
\begin{property}\label{property:Iabc_larger_than_Iab}
$I(\rvaa;\rvab,\rvac)\geq I(\rvaa;\rvab)$, with equality if and only if $I(\rvaa;\rvac|\rvab)=0$.
\end{property}
\begin{property}\label{property:Iabc_larger_than_Iab_gvn_c}
$I(\rvaa;\rvab,\rvac)\geq I(\rvaa;\rvab|c)$, with equality if and only if $I(\rvaa;\rvac)=0$.  
\end{property}
We will also make use of the following fact:
\begin{fact}\label{fact:abc_and_barabc}
Let $\rvaa,\rvab,\rvac$ be three random elements with an arbitrary joint distribution. 
Then, there exists a random element $\bar{\rvaa}$ 
(equivalently, a joint distribution $P_{\bar{\rvaa},\rvab,\rvac}$) 
such that 
\begin{align}
  (\bar{\rvaa},\rvab)\sim  (\rvaa,\rvab)\\
  \bar{\rvaa} \longleftrightarrow \rvab \longleftrightarrow \rvac
\end{align}
\end{fact}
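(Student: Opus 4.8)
The plan is to obtain $\bar{\rvaa}$ by the standard ``gluing'' of Markov kernels: keep the pair $(\rvab,\rvac)$ as it is and resample an independent copy of $\rvaa$ from its conditional law given $\rvab$. Concretely, assuming (as is implicit throughout the paper) that the alphabets $\Xsp_{\rvaa},\Xsp_{\rvab},\Xsp_{\rvac}$ are Polish spaces equipped with their Borel $\sigma$-algebras, so that regular versions of the conditional probabilities $P_{\rvaa\mid\rvab}(\cdot\mid b)$ and $P_{\rvac\mid\rvab}(\cdot\mid b)$ exist, I would define a probability measure on $\Xsp_{\rvaa}\times\Xsp_{\rvab}\times\Xsp_{\rvac}$ by
\begin{align}
 P_{\bar{\rvaa},\rvab,\rvac}(A\times B\times C)
 \eq
 \int_{B} P_{\rvaa\mid\rvab}(A\mid b)\,P_{\rvac\mid\rvab}(C\mid b)\,P_{\rvab}(\mathrm{d}b)
\end{align}
for all measurable $A\subseteq\Xsp_{\rvaa}$, $B\subseteq\Xsp_{\rvab}$, $C\subseteq\Xsp_{\rvac}$, and then take $\bar{\rvaa},\rvab,\rvac$ to be the coordinate random elements on this space. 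Measurability of $b\mapsto P_{\rvaa\mid\rvab}(A\mid b)$ and $b\mapsto P_{\rvac\mid\rvab}(C\mid b)$, together with monotone convergence, makes the right-hand side a well-defined premeasure on product sets, and the Carath\'eodory/Hahn--Kolmogorov extension fixes a unique $\sigma$-additive measure on the product $\sigma$-algebra.

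Next I would verify the two asserted properties directly from this definition. Setting $C=\Xsp_{\rvac}$ gives $P_{\bar{\rvaa},\rvab}(A\times B)=\int_{B}P_{\rvaa\mid\rvab}(A\mid b)\,P_{\rvab}(\mathrm{d}b)=P_{\rvaa,\rvab}(A\times B)$, hence $(\bar{\rvaa},\rvab)\sim(\rvaa,\rvab)$; symmetrically, setting $A=\Xsp_{\rvaa}$ shows that $(\rvab,\rvac)$ retains its original joint law, so the construction is genuinely a coupling. Moreover, by construction the conditional distribution of $\bar{\rvaa}$ given $\rvab=b$ equals $P_{\rvaa\mid\rvab}(\cdot\mid b)$ and factorizes against the conditional distribution of $\rvac$ given $\rvab=b$; that is, $\bar{\rvaa}$ and $\rvac$ are conditionally independent given $\rvab$, which is exactly the Markov chain $\bar{\rvaa}\longleftrightarrow\rvab\longleftrightarrow\rvac$.

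The only genuinely delicate point is the existence of the regular conditional probabilities $P_{\rvaa\mid\rvab}$ and $P_{\rvac\mid\rvab}$ together with the attendant Fubini-type interchange; this is where the standing topological assumptions on the alphabets are used, and no information-theoretic input is needed. If one prefers to avoid the explicit product-space construction, the same $\bar{\rvaa}$ can be realized on an enlargement of the original probability space by drawing, conditionally on $\rvab$, a fresh sample from $P_{\rvaa\mid\rvab}(\cdot\mid\rvab)$ that is made conditionally independent of $(\rvaa,\rvac)$ given $\rvab$ --- for instance through an auxiliary $\mathrm{Uniform}[0,1]$ variable independent of $(\rvaa,\rvab,\rvac)$ and a Borel-isomorphism (inverse-CDF) representation of the kernel $P_{\rvaa\mid\rvab}$. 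Either way, the claim is the familiar gluing lemma for Markov kernels specialized to three coordinates, so beyond the measurability bookkeeping there is no real obstacle.
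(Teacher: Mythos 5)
Your construction is correct: the ``gluing'' measure $P_{\bar{\rvaa},\rvab,\rvac}(A\times B\times C)=\int_{B}P_{\rvaa\mid\rvab}(A\mid b)\,P_{\rvac\mid\rvab}(C\mid b)\,P_{\rvab}(\mathrm{d}b)$ visibly has the two required marginals and factorizes conditionally on $\rvab$, which is exactly the Markov chain $\bar{\rvaa}\longleftrightarrow\rvab\longleftrightarrow\rvac$. The paper states this as a Fact without supplying any proof (it is invoked only once, in the proof of Proposition~3), so there is no authorial argument to compare against; your write-up is the standard argument that the authors implicitly rely on, and your explicit flagging of the one nontrivial ingredient --- existence of regular conditional probabilities, hence the standing assumption that the alphabets are standard Borel --- is the right thing to make precise, since the Fact as stated for ``arbitrary'' random elements is false without some such hypothesis.
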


\section[Revisiting Gorbunov and Pinsker 1973 Paper]{Revisiting~\cite{gorpin73} and its Inapplicability to One-Sided Sources}\label{sec:revisiting_gorpin73}
In order to assess whether (or to what extent)~\cite[Theorem 4]{gorpin73} could provide support to the stationarity assumptions made in, e.g.~\cite{tatsah04,dersil08,chasta14,stakou15}, it is necessary to take a closer look at the assumptions made in~\cite{gorpin73}
and the statement of its Theorem 4. 
For that purpose, the first part of this section is an exposition of the definitions and assumptions leading to~\cite[Theorem 4]{gorpin73}.%
\footnote{We believe this re-exposition of~\cite{gorpin73} to be valuable in itself since on the one hand, it selects the minimal set of notions required to formulate and understand its Theorem 4, and on the other hand, it provides an arguably clearer presentation than the one found in~\cite{gorpin73} (an English translation from Russian), which is not easy to read due to its notation, some mathematical typos and the low resolution of its available digitized form.}
The second part is an analysis which reveals the limitations of~\cite[Theorem 4]{gorpin73} and its inapplicability to the case in which the source and reconstruction are one-sided processes.
At the same time, this section also introduces definitions and part of the notation to be utilized in the remainder of this paper (for convenience, a summary of these is presented in Table~\ref{table:symbols} below).

\subsection{A Brief Review of~\cite{gorpin73}}
Throughout~\cite{gorpin73}, the search in the infimizations associated with various types of ``nonanticipatory'' (i.e., causal) rate-distortion functions is stated over sets of \textit{joint} probability distributions between source and reconstruction (as opposed to the usual definitions, in which the search is over \textit{conditional} distributions, see~\eqref{eq:Causal_IRDF_base} and~\cite[Chapter~10]{covtho06},~\cite{berger71}). 
Since the distribution of the source is given, it is required that for every $k_{2}>k_{1}\in\Z$, all the joint distributions $P_{\rvax_{k_{1}}^{k_{2}},\rvay_{k_{1}}^{k_{2}}}$
to be considered yield $\rvax_{k_{1}}^{k_{2}}$ 
having the same (given) distribution of the source for the corresponding block, say $P_{\mathring{\rvax}_{k_{1}}^{k_{2}}}$.
This requirement can be formalized 
as requiring that $P_{\rvax_{k_{1}}^{k_{2}},\rvay_{k_{1}}^{k_{2}}}\in \Pspa^{k_{1},k_{2}}$, 
for a set of admissible joint distributions $\Pspa^{k_{1},k_{2}}$ defined as
\begin{align}\label{eq:Pkk_def}
 \Pspa^{k_{1},k_{2}}\eq \Set{P: P(E\times \Ysp_{k_{1}}^{k_{2}})  = P_{\mathring{\rvax}_{k_{1}}^{k_{2}}}(E),\quad \forall E \in\Bsp(\Xsp_{k_{1}}^{k_{2}}) },\fspace k_{1}\leq k_{2}\in\Z,
\end{align}
where $\Xsp_{k_{1}}^{k_{2}}$ and $\Ysp_{k_{1}}^{k_{2}}$ are, respectively, the alphabets to which $\rvax_{k_{1}}^{k_{2}}$  and
$\rvay_{k_{1}}^{k_{2}}$
belong.
In~\cite{gorpin73}, this admissibility requirement is embedded in the definition of the sets of distributions which meet the distortion constraint, described next.

The fidelity criterion for every pair 
of integers%
\footnote{The analysis in~\cite{gorpin73} considered both discrete- and continuous-time processes, but here we only refer to the discrete-time scenario.} 
$k_{1}\leq k_{2}$ 
is expressed in~\cite{gorpin73} as requiring $P_{\rvax_{k_{1}}^{k_{2}},\rvay_{k_{1}}^{k_{2}}}$ to belong to a non-empty set of distributions (hereafter referred to as \textit{distortion-feasible set}) $\Wsp^{k_{1},k_{2}}_{D}$,
a condition written as 
$(\rvax_{k_{1}}^{k_{2}},\rvay_{k_{1}}^{k_{2}})\in (\Wsp^{k_{1},k_{2}}_{D})$.
In this definition, 
the number $D\geq 0$ represents an admissible distortion level. 
Notice that such general formulation of a fidelity criteria does not need a distortion function and does not necessarily involve an expectation.

As mentioned above, the admissibility  requirement $P_{\rvax_{k_{1}}^{k_{2}}}\in\Pspa^{k_{1},k_{2}}$ is expressed in the distortion-feasible sets in~\cite[eqn. (2.1)]{gorpin73}.
The latter equation can be written as 
\begin{align}\label{eq:W_D_k_k_right_marginal_cond}
\Wsp_{D}^{k_{1},k_{2}}\subset \Pspa^{k_{1},k_{2}}.
\end{align}

In~\cite[eqs. (2.4) and (2.5)]{gorpin73}, the distortion-feasible sets are assumed to satisfy the ``concatenation'' condition
\begin{align}
\label{eq:concatenation}
 (\rvax_{k_{1}}^{k_{2}},\rvay_{k_{1}}^{k_{2}})\in(\Wsp_{D}^{k_{1},k_{2}})
 \wedge
 (\rvax_{k_{2}+1}^{k_{3}},\rvay_{k_{2}+1}^{k_{3}})\in(\Wsp_{D}^{k_{2}+1,k_{3}})
 \Longrightarrow
 (\rvax_{k_{1}}^{k_{3}},\rvay_{k_{1}}^{k_{3}})\in(\Wsp_{D}^{k_{1},k_{3}}).
\end{align}

With this,~\cite[eqn.~(2.9)]{gorpin73} defined the 
``nonanticipatory epsilon entropy'' of the set of distributions%
\footnote{The actual term employed in~\cite{gorpin73} is ``nonanticipatory epsilon entropy of the message $(\Wsp_{D}^{k_{1},k_{2}})$'' where the term ``message'' refers to the random ensembles in $(\Wsp_{D}^{k_{1},k_{2}})$.}
$\Wsp_{D}^{k_{1},k_{2}}$ as
\begin{align}
H^{0}(\Wsp_{D}^{k_{1},k_{2}} )\eq  \inf I(\rvax_{k_{1}}^{k_{2}};\rvay_{k_{1}}^{k_{2}}),
\end{align}
where the infimum is taken over all pairs of random sequences 
$(\rvax_{k_{1}}^{k_{2}},\rvay_{k_{1}}^{k_{2}})\in(\Wsp_{D}^{k_{1},k_{2}})$ such that the causality Markov chains
\begin{align}
 \rvax_{k+1}^{k_{2}}
 \longleftrightarrow \rvax_{k_{1}}^{k} \longleftrightarrow 
 \rvay_{k_{1}}^{k}
 , \fspace k_{1}\leq k\leq k_{2}
\end{align}
are satisfied.
Then~\cite[eq.~(2.13)]{gorpin73} defines the ``nonanticipatory message generation rate'' 
as
\begin{align}\label{eq:HD0_def}
 \overline{H^{0}_{D}}\eq \lim_{k_{2}-k_{1}\to\infty} \frac{1}{k_{2}-k_{1}}H^{0}(\Wsp_{D}^{k_{1},k_{2}} )
\end{align}
(when the limit exists).
An alternative ``nonanticipatory message generation rate'' is also considered in~\cite{gorpin73} by defining the set of distortion-admissible process distributions $\Wsp_{D}$ as follows:

\begin{defn}\label{def:W_D}
 The set $(\Wsp_{D})$ consists of all two-sided random process pairs 
 $(\rvax_{-\infty}^{\infty},\rvay_{-\infty}^{\infty})\in(\Wsp_{D})$ for which there exist integers $\cdots< k_{-1}<k_{0}<k_{1}<\cdots$ such that
$\lim_{i\to\pm \infty}k_{i}=\pm\infty$ and 
\begin{align}\label{eq:WD_redef}
 (\rvax_{k_{i}}^{k_{i+1}-1},\rvay_{k_{i}}^{k_{i+1}-1})\in (\Wsp_{D}^{k_{i},k_{i+1}-1}),\fspace \forall i\in\Z.\\[-12mm]\nonumber
\end{align}
\finenunciado
\end{defn}

With this,~\cite[eq.~(2.12)]{gorpin73} defines 
\begin{align}\label{eq:Harrow_def}
 \overrightarrow{H_{D}^{0}}\eq \inf \lim_{k_{2}-k_{1}\to\infty} 
 \frac{1}{k_{2}-k_{1}}
 I(\rvax_{k_{1}}^{k_{2}};\rvay_{k_{1}}^{k_{2}})
\end{align}
(when the limit exists), where the infimum is taken over all pairs of processes 
$(\rvax_{-\infty}^{\infty},\rvay_{-\infty}^{\infty})\in(\Wsp_{D})$ satisfying the causality Markov chains 
\begin{align}\label{eq:MC_causality_gorpin}
\rvax_{k+1}^{\infty}
\longleftrightarrow
\rvax_{-\infty}^{k}
\longleftrightarrow
\rvay_{-\infty}^{k}
,\fspace 
\forall k\in\Z.
\end{align}
Notice that these Markov chains imply~\eqref{eq:MC_causality_2sdd} and differ from the latter in that here the reconstruction $\rvay_{-\infty}^{\infty}$ is a two-sided random process.

Now assume that $\Xsp_{i}=\Xsp$ and $\Ysp_{i}=\Ysp$, for all $i\in\Z$, for some alphabets $\Xsp$ and $\Ysp$.
Define, for any given non-negative sequence $\set{a(s)}_{s=s_{1}}^{s_{2}}$, $s_{1}\leq s_{2}$ such that $1=\sumfromto{s=s_{1}}{s_{2}}a(s)$, the distribution
\begin{align}
 P_{\bar{\rvax}_{k_{1}}^{k_{2}},\bar{\rvay}_{k_{1}}^{k_{2}}}(E)
= 
 \Sumfromto{s=s_{1}}{s_{2}}
 a(s)
 P_{\rvax_{k_{1}+s}^{k_{2}+s},\rvay_{k_{1}+s}^{k_{2}+s}}(E)\label{eq:P_bar}, \fspace k_{1}\leq k_{2}, E\in\Bsp(\Xsp^{(k_{2}-k_{1}+1)}\times \Ysp^{(k_{2}-k_{1}+1)}).
\end{align}
We can now 
re-state Theorem 4 in~\cite{gorpin73} as follows:
\begin{thm}[Theorem~4~in\cite{gorpin73}]\label{thm:4_gorpin}
 Suppose that 
 \begin{enumerate}[i)]
  \item $\rvax_{-\infty}^{\infty}$ is stationary.
  \item Stationary distortion-feasible sets: For every $s\in\Z$, $k_{1} \leq k_{2}\in\Z$, 
  $\Wsp_{D}^{k_{1}+s,k_{2}+s}$ and   $\Wsp_{D}^{k_{1},k_{2}}$ are identical sets.%
  \footnote{In~\cite{gorpin73} this condition together with the stationarity of $\rvax_{-\infty}^{\infty}$ is referred to as ``a stationary source'' (see its description between (2.8) and (2.9) in~\cite{gorpin73}).}
 \item The concatenation condition~\eqref{eq:concatenation} holds.
  \item $\overrightarrow{H^{0}_{D}} = \overline{H_{D}^{0}}$.
  \item For every set of non-negative numbers
  $\set{a(s)}_{s=s_{1}}^{s_{2}}$, $s_{1}\leq s_{2}$ such that $1=\sumfromto{s=s_{1}}{s_{2}}a(s)$, 
  \begin{align}
    (\rvax_{-\infty}^{\infty},\rvay_{-\infty}^{\infty})\in(\Wsp_{D})
 \Longrightarrow
 (\bar{\rvax}_{-\infty}^{\infty},\bar{\rvay}_{-\infty}^{\infty})\in(\Wsp_{D})
  \end{align}
where the processes $(\bar{\rvax}_{-\infty}^{\infty},\bar{\rvay}_{-\infty}^{\infty})$ are distributed according to~\eqref{eq:P_bar}.  
 \end{enumerate}
Then, the analysis of the lower bound in~\eqref{eq:Harrow_def} can be confined to jointly stationary pairs of random processes 
$(\rvax_{-\infty}^{\infty},\rvay_{-\infty}^{\infty})\in(\Wsp_{D})$ satisfying the causality constraint~\eqref{eq:MC_causality_gorpin}. 
\finenunciado
\end{thm}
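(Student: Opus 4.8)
The plan is to prove the one nontrivial inequality. Every jointly stationary causal pair in $(\Wsp_{D})$ already occurs among the pairs over which the infimum defining $\overrightarrow{H_{D}^{0}}$ in \eqref{eq:Harrow_def} is taken, so it suffices to show that, given any pair $(\rvax_{-\infty}^{\infty},\rvay_{-\infty}^{\infty})\in(\Wsp_{D})$ satisfying \eqref{eq:MC_causality_gorpin} for which the limit in \eqref{eq:Harrow_def} exists and equals some $r$, there is a \emph{jointly stationary} pair in $(\Wsp_{D})$ that still obeys \eqref{eq:MC_causality_gorpin} and whose rate is at most $r$. The engine is the shift-averaging of \eqref{eq:P_bar}: for $N\in\Nl$ I would take $\bar P_{N}$ to be the joint process law given by \eqref{eq:P_bar} with $s_{1}=0$, $s_{2}=N-1$ and $a(s)=1/N$, realized by drawing a ``time-sharing'' index $S$ uniformly on $\set{0,\ldots,N-1}$ and, conditionally on $S=s$, letting $(\bar{\rvax}(k),\bar{\rvay}(k))_{k\in\Z}$ be the $s$-shift $(\rvax(k+s),\rvay(k+s))_{k\in\Z}$ of the original pair.

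Next I would record four properties of $\bar P_{N}$. \emph{(a) Source law and independence of $S$:} since $\rvax_{-\infty}^{\infty}$ is stationary, the conditional law of $\bar{\rvax}_{-\infty}^{\infty}$ given $S=s$ is the same for every $s$, hence $\bar{\rvax}_{-\infty}^{\infty}$ has the prescribed (stationary) source law and $S\Perp\bar{\rvax}_{-\infty}^{\infty}$. \emph{(b) Feasibility:} $\bar P_{N}\in(\Wsp_{D})$ by hypothesis (v) applied with these weights. \emph{(c) Causality:} conditionally on $S=s$ the pair satisfies \eqref{eq:MC_causality_gorpin}, because that Markov chain is shift-invariant, i.e.\ $I(\bar{\rvax}_{k+1}^{\infty};\bar{\rvay}_{-\infty}^{k}|\bar{\rvax}_{-\infty}^{k},S)=0$ for every $k$; combining this with $I(\bar{\rvax}_{k+1}^{\infty};S|\bar{\rvax}_{-\infty}^{k})=0$ (immediate from (a)) via the chain rule of mutual information and the monotonicity property (cf.\ Property~\ref{property:Iabc_larger_than_Iab}) gives $I(\bar{\rvax}_{k+1}^{\infty};\bar{\rvay}_{-\infty}^{k}|\bar{\rvax}_{-\infty}^{k})=0$, so $\bar P_{N}$ satisfies \eqref{eq:MC_causality_gorpin}. \emph{(d) Rate:} because all the shifted pairs share the same source marginal, the conditional law of $\bar{\rvay}_{k_{1}}^{k_{2}}$ given $\bar{\rvax}_{k_{1}}^{k_{2}}$ is the $\tfrac1N$-mixture of the conditional laws of $\rvay_{k_{1}+s}^{k_{2}+s}$ given $\rvax_{k_{1}+s}^{k_{2}+s}$; convexity of mutual information in the channel (with the input law fixed) then gives $I(\bar{\rvax}_{k_{1}}^{k_{2}};\bar{\rvay}_{k_{1}}^{k_{2}})\leq\frac1N\sum_{s=0}^{N-1}I(\rvax_{k_{1}+s}^{k_{2}+s};\rvay_{k_{1}+s}^{k_{2}+s})$, and dividing by $k_{2}-k_{1}$ and letting $k_{2}-k_{1}\to\infty$ (each summand tends to $r$, the shift $s$ being fixed) yields $\limsup_{k_{2}-k_{1}\to\infty}\frac{1}{k_{2}-k_{1}}I(\bar{\rvax}_{k_{1}}^{k_{2}};\bar{\rvay}_{k_{1}}^{k_{2}})\leq r$.

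The remaining step is to pass from the family $\set{\bar P_{N}}_{N}$ to a genuinely stationary pair. I would use that this family is asymptotically shift-invariant (a unit shift changes $\bar P_{N}$ by at most $2/N$ in total variation on any finite window) together with a mean-ergodic/compactness argument: the set $C$ of causal, distortion-feasible process laws with the prescribed source marginal is convex and shift-invariant, the shift acts affinely on it, and under the standing alphabet assumptions (which make the reconstruction laws tight) $C$ is compact in the topology of weak convergence of finite-dimensional marginals, a topology in which the rate functional is convex and lower semicontinuous. Markov--Kakutani applied to the shift on the closed convex hull of the $s$-shifts ($s\geq 0$) of the starting pair --- or, more concretely, any weak cluster point of $\set{\bar P_{N}}_{N}$ --- then produces a shift-invariant $P^{\star}\in C$, i.e.\ a jointly stationary, causal, distortion-feasible pair, whose rate is $\leq\liminf_{N}(\text{rate of }\bar P_{N})\leq r$. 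Since the starting pair was arbitrary, the infimum in \eqref{eq:Harrow_def} is unchanged when the search is restricted to jointly stationary pairs, which is the assertion of the theorem. Hypotheses (ii)--(iv) are used exactly here: (ii) makes $(\Wsp_{D})$ and the sets $H^{0}(\Wsp_{D}^{k_{1},k_{2}})$ shift-invariant, (iii) supplies the subadditivity that makes $\overline{H_{D}^{0}}$ a genuine limit, and (iv) guarantees $\overline{H_{D}^{0}}=\overrightarrow{H_{D}^{0}}$ so that the stationary infimum just bounded is indeed the quantity in \eqref{eq:Harrow_def}.

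I expect this last step to be the main obstacle: one must secure weak compactness of $C$ (tightness of the reconstruction laws), closedness of $(\Wsp_{D})$ under the limit used, and lower semicontinuity of mutual information, and then legitimately interchange the three limiting operations ($N\to\infty$, the block length $\to\infty$, and the weak limit). By contrast, property (c) above --- that shift-mixing preserves causality --- is the conceptual heart of the argument, but it becomes routine once the time-sharing variable $S$ is introduced and one exploits $S\Perp\bar{\rvax}_{-\infty}^{\infty}$, which is precisely where stationarity of the source (hypothesis (i)) enters.
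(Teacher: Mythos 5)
Your shift-averaging step is the right engine, and your items (a)--(d) are individually sound: in particular, the argument that mixing over a time-share $S$ independent of the source preserves the causality Markov chains \eqref{eq:MC_causality_gorpin} is exactly the conditional-independence manipulation isolated in Proposition~\ref{prop:abcd}, and the convexity bound on the rate is correct. The genuine gap is the final passage from the family $\set{\bar{P}_{N}}$ to an exactly stationary law. Under the theorem's hypotheses the sets $\Wsp_{D}^{k_{1},k_{2}}$ are completely abstract subsets of $\Pspa^{k_{1},k_{2}}$: nothing makes them closed under weak limits, nothing makes the reconstruction marginals tight, hypothesis~(v) covers only \emph{finite} convex combinations of shifts, and neither conditional independence (hence causality) nor the rate functional $\lim_{k}k^{-1}I$ is known to survive passage to a weak cluster point. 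So the compactness/Markov--Kakutani step you yourself flag as ``the main obstacle'' is not a technicality to be filled in later; at this level of generality it cannot be carried out, and the theorem remains unproved along your route.

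The proof of~\cite[Theorem~4]{gorpin73} --- mirrored in this paper's proof of its one-sided analogue, Theorem~\ref{thm:QJS_is_sufficient}, via Proposition~\ref{prop:construction} and the random shift~\eqref{subeq:xbar_ybar_def} --- avoids any limit over $N$ by reordering your steps. Hypothesis~(iv) is used \emph{first}, to replace the process-level infimum by the block quantity $\overline{H_{D}^{0}}$ and hence to start from a finite near-optimal block $(\breve{\rvax}_{1}^{n},\breve{\rvay}_{1}^{n})$; the concatenation condition~(iii) (not mere subadditivity) is what places the block-wise independent periodic extension $(\tilde{\rvax},\tilde{\rvay})$, obtained by repeating the block channel, inside $(\Wsp_{D})$; and because that extension is invariant under shifts by multiples of the block length $n$, a single uniform random shift over one period $\set{0,\ldots,n-1}$ --- i.e.~\eqref{eq:P_bar} with $a(s)=1/n$, covered by hypothesis~(v) --- produces an \emph{exactly} jointly stationary, causal, feasible pair whose rate is within $o(1)$ of the block rate. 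No weak limits, compactness, or semicontinuity are needed. If you restructure your argument this way, your observations (a)--(d) become precisely the lemmas required.
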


For convenience, Table~\ref{table:symbols} presents a summary of the definitions and notation described so far, together with some
which will be defined in the following sections.

\begin{table}[htbp]
\caption{Summary of the main symbols utilized in this paper.}
\label{table:symbols}
\centering
\begin{tabular}{|p{3cm}|p{12cm}|}
\hline 
%
$P_{\rvax_{k_1}^{k_2},\rvay_{k_1}^{k_2}}$ 
& 
The joint probability distribution of $\rvax_{k_1}^{k_2},\rvay_{k_1}^{k_2}$ \\ 
\hline 
%
$\Pspa^{k_1,k_2}$ 
& 
The set of all joint distributions $P_{\tilde{\rvax}_{k_1}^{k_2},\tilde{\rvay}_{k_1}^{k_2}}$ such that the associated marginal distribution
$P_{\tilde{\rvax}_{k_{1}}^{k_{2}}}$ equals the given distribution of the source sequence 
$\rvax_{k_{1}}^{k_{2}}$, i.e., $P_{\tilde{\rvax}_{k_{1}}^{k_{2}}}=P_{\rvax_{k_{1}}^{k_{2}}}$ (see~\eqref{eq:Pkk_def}). 
\\ 
\hline 
%
$\Wsp_D^{k_1,k_2}$
& 
Distortion-feasible set.
The set of all joint distributions $P_{\rvax_{k_1}^{k_2},\rvay_{k_1}^{k_2}}\in\Pspa^{k_1,k_2}$ which satisfy a given constraint given by $D\in\Rl$ (see comments before~\eqref{eq:Pkk_def}).
\\ 
\hline 
%
$(\Wsp_D^{k_1,k_2})$ 
& 
The set of all pairs of sequences 
$(\tilde{\rvax}_{k_{1}}^{k_{2}},\tilde{\rvay}_{k_{1}}^{k_{2}})$
such that 
$P_{\tilde{\rvax}_{k_{1}}^{k_{2}},\tilde{\rvay}_{k_{1}}^{k_{2}}}\in \Wsp_D^{k_1,k_2}$.
(See also the Notation subsection at the end of Section~\ref{sec:Intro}.)
\\ 
\hline 
%
$\Pspa_D^{\infty}$  
&
Generic distortion-feasible set of probability distributions for pairs of one-sided processes
$(\rvax_1^{\infty},\rvay_1^{\infty})$.
In this paper, we state some minimal conditions on $\Pspa_D^{\infty}$ in Assumption~\ref{assu:Pspa_D_inf_assu} and some additional structural properties in Assumption~\ref{assu:DFS_as_ineq_for_k}.
\\ 
\hline 
%
$\Qsp_{\kappa}$, $\kappa=1,2,\ldots$
&
The set of all joint distributions
$P_{\rvax_{1}^{\infty},\rvay_{1}^{\infty}}$ of pairs of one-sided random processes 
$(\rvax_{1}^{\infty},\rvay_{1}^{\infty})$
such that
$
(\rvax_{\kappa}^{\infty},\rvay_{\kappa}^{\infty})
$
are jointly stationary and
$
\lim_{\ell\to\infty}
\frac{1}{\ell}I(\rvax_{1}^{\ell};\rvay_{1}^{\ell})
=
\lim_{\ell\to\infty}
\frac{1}{\ell}I(\rvax_{\kappa}^{\kappa+\ell-1};\rvay_{\kappa}^{\kappa+\ell-1})
$
(see Definition~\ref{def:quasi_jointly_stat}).
\\ 
\hline 
%
$(\Csp^n),$ $n = 1,2,\ldots$ and
&
The sets of causally related one-sided pairs of $n$-sequences (see Definition~\ref{def:Csp}).
\\ 
\hline 
%
$(\Csp^{\infty})$
&
The set of one-sided pairs of processes 
causally related 
according to the \textbf{short causality constraint}~\eqref{eq:MC_causality_0} (see Definition~\ref{def:Csp}).
\\
\hline 
%
%
$\Csp_{-\infty}^{\infty}$
&
The set of causal distributions for processes of the form 
$(\rvax_{-\infty}^{\infty},\rvay_1^{\infty})$.
Such processes satisfy the \textbf{long causality constraint}~\eqref{eq:MC_causality_2sdd} 
(see Definition~\ref{def:Overline_RCitd_redef}).
\\ 
\hline  
\end{tabular} 
\end{table}

\subsection{Analysis of Theorem~\ref{thm:4_gorpin} and its Inapplicability to One-Sided Sources}\label{subsec:analysis_of_gorpin73}
We now discuss three limitations of Theorem~\ref{thm:4_gorpin} which are relevant when trying to establish whether the causal IRDF of a one-sided stationary source admits a stationary realization. 

\paragraph*{Limitation 1}
The first obvious limitation is that even if source and reconstruction are two-sided processes, every distortion criterion which considers only their ``positive-time'' part cannot be expressed by a distortion-feasible set $\Wsp_{D}$ given by Definition~\ref{def:W_D} if the sets $\set{\Wsp_{D}^{\ell,j}}_{\ell \leq j\in\Z}$ satisfy condition~ii) in Theorem~\ref{thm:4_gorpin}.
 To see this, notice that if $\ell\leq j<1$, then such distortion criterion (which neglects non-positive times) would require $\Wsp_{D}^{\ell,j}$ to admit all joint probability distributions satisfying~\eqref{eq:W_D_k_k_right_marginal_cond}.
 Combining this with condition~ii) in Theorem~\ref{thm:4_gorpin} yields that every set  
 $\Wsp_{D}^{n,m}=\Wsp_{D}^{\ell,j}$ with $m-n=j-\ell$, which amounts to imposing no restriction on the distortion at all.
 
 It is natural to think that such elemental shortcoming could be avoided by simply replacing condition~ii) in Theorem~\ref{thm:4_gorpin} by a one-sided version of the form: 
 \begin{align}\label{eq:W_k,k_stationary_one_sided}
\text{For every $t_{1}\in\Z$, $s,t_{2}\in\Nl$ such that $t_{1}\leq t_{2}$: 
  $\Wsp_{D}^{t_{1}+s,t_{2}+s}$ and   $\Wsp_{D}^{t_{1},t_{2}}$ are identical sets.}
 \end{align}
 Leaving aside the fact that this alternative condition is not sufficient for Theorem~\ref{thm:4_gorpin} to hold, it is worth pointing out that  using~\eqref{eq:W_k,k_stationary_one_sided},  
 the commonly utilized family of 
 asymptotic single-letter fidelity criteria~\cite{berger71}
 can not be expressed by 
 a distortion-feasible set $\Wsp_{D}$ given by Definition~\ref{def:W_D}, as the following lemma shows (its proof can be found in  Appendix~\ref{proof:of_lem_Wsp_D_cannot_represent_asymptotic_SLDC}). 
 \begin{lem}\label{lem:Wsp_D_cannot_represent_asymptotic_SLDC}
Let $\rho$
be any given distortion functional which takes as argument a joint distribution $P_{\rvax,\rvay}$ and yields a non-negative real value.
Let $(\Asp_{D})$ be the set of all pairs of processes 
$(\rvax_{-\infty}^{\infty},\rvay_{-\infty}^{\infty})$
where $\rvax_{-\infty}^{\infty}$ is stationary,
with pair-wise distributions $\set{P_{\rvax(k),\rvay(k)}}_{k\in\Z}$ 
which satisfy the asymptotic single-letter fidelity criterion 
 \begin{align}\label{eq:asymptotic_distortion_constraint0}
  \lim_{n\to\infty}\frac{1}{n}\sumfromto{k=1}{n}\rho(P_{\rvax(k),\rvay(k)})\leq D.
 \end{align}
Then, there doesn't exist an infinite collection  of distortion-feasible sets $\set{\Wsp_{D}^{k_{1},k_{2}}}_{k_{1}\leq k_{2}\in\Z}$ satisfying~\eqref{eq:W_k,k_stationary_one_sided} such that the associated $\Wsp_{D}$ given by Definition~\ref{def:W_D} satisfies $(\Wsp_{D})=(\Asp_{D})$.
   \finenunciado
 \end{lem}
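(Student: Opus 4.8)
The plan is to argue by contradiction: assume such a collection $\set{\Wsp_{D}^{k_{1},k_{2}}}_{k_{1}\leq k_{2}\in\Z}$ exists, satisfies the one-sided stationarity condition~\eqref{eq:W_k,k_stationary_one_sided}, and yields $(\Wsp_{D})=(\Asp_{D})$ via Definition~\ref{def:W_D}. I would then exhibit a single pair of processes that lies in exactly one of the two sets, contradicting their equality. The natural candidate is a pair $(\rvax_{-\infty}^{\infty},\rvay_{-\infty}^{\infty})$ whose per-sample distortions $\rho(P_{\rvax(k),\rvay(k)})$ are chosen to make the Cesàro average in~\eqref{eq:asymptotic_distortion_constraint0} converge to exactly $D$ (or stay $\leq D$), so that the pair belongs to $(\Asp_{D})$, while simultaneously the distortions are bursty enough that on \emph{every} finite block $[k_{i},k_{i+1}-1]$ the local behavior is ``too bad'' to lie in $(\Wsp_{D}^{k_{i},k_{i+1}-1})$, so the pair cannot belong to $(\Wsp_{D})$.

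The key steps, in order, would be: (1) First extract what membership in a single $\Wsp_{D}^{t_{1},t_{2}}$ actually constrains. Using~\eqref{eq:W_D_k_k_right_marginal_cond} and the fact that $\Wsp_{D}^{t_{1},t_{2}}$ is nonempty together with the one-sided shift-invariance~\eqref{eq:W_k,k_stationary_one_sided}, I would first establish that for each fixed block-length $L = t_{2}-t_{1}$ there is a fixed set $\Wsp_{D}^{(L)}$ of feasible $(L\!+\!1)$-block distributions, independent of the shift. (2) Then I would pin down the ``worst tolerable'' per-sample distortion compatible with a length-$L$ block: roughly, there is some finite threshold — call it the local distortion budget — that any block in $\Wsp_{D}^{(L)}$ must respect on average over its $L\!+\!1$ coordinates. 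The crucial observation is that Definition~\ref{def:W_D} tiles the time axis by \emph{finite} blocks, so membership in $(\Wsp_{D})$ forces the average distortion over each of infinitely many finite windows to stay within such a budget. (3) Now construct the adversarial pair: let $\rvax_{-\infty}^{\infty}$ be any stationary source (e.g.\ i.i.d.), and design $\rvay_{-\infty}^{\infty}$ so that $\rvay(k)$ matches $\rvax(k)$ (zero distortion) on most indices but, on a sparse sequence of increasingly long runs of indices, makes $\rho(P_{\rvax(k),\rvay(k)})$ equal to some large value $M$. By making the runs sparse enough, the long-run Cesàro average is $\leq D$, so the pair is in $(\Asp_{D})$. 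But because the runs grow without bound, for \emph{any} fixed tiling $\cdots<k_{i}<k_{i+1}<\cdots$ with $k_{i+1}-k_{i}$ bounded or even unbounded, infinitely many tiles fall entirely inside a bad run, forcing their average distortion to be $M$, which exceeds the local budget for that block-length — so the pair is not in $(\Wsp_{D})$. (If the tiling has unbounded block lengths this still works, since a bad run of length exceeding a given block length contains a full tile.) This contradiction with $(\Wsp_{D})=(\Asp_{D})$ completes the proof.

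The main obstacle I anticipate is step (2): making precise the claim that a single distortion-feasible set $\Wsp_{D}^{t_{1},t_{2}}$ imposes a genuine finite bound on the average per-sample distortion of its blocks. The hypothesis of the lemma only says $\rho$ is \emph{some} functional of pairwise distributions; it need not be a metric, bounded, or even continuous, and $\Wsp_{D}^{t_{1},t_{2}}$ is an abstract set of joint distributions, not necessarily defined by an averaged-$\rho$ inequality. So the argument must instead leverage the relationship $(\Wsp_{D})=(\Asp_{D})$ itself: I would use one direction of that equality, namely $(\Wsp_{D})\subseteq(\Asp_{D})$, applied to carefully chosen \emph{periodic} pairs built by concatenating a fixed bad block with many good blocks, to deduce that a bad block repeated too densely would push a periodic pair out of $(\Asp_{D})$ and hence — via a constructed $\Wsp_{D}$-tiling of that periodic pair — derive the needed restriction on which blocks $\Wsp_{D}^{(L)}$ can contain. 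Turning this ``closing the loop'' argument into a clean quantitative statement, and handling the freedom in choosing the tiling $\set{k_i}$ in Definition~\ref{def:W_D} (which only requires $k_i\to\pm\infty$, permitting adversarially large blocks), is where the real care is needed.
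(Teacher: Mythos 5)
Your proposal has a genuine gap in its main construction, and the ingredient that would actually close the proof is the one you relegate to the final paragraph. The flaw is in step (3): to show that your adversarial pair with growing sparse bad runs lies outside $(\Wsp_{D})$ you must defeat \emph{every} tiling $\set{k_{i}}$ admitted by Definition~\ref{def:W_D}, and your claim that ``infinitely many tiles fall entirely inside a bad run'' is false. The tiling may be chosen adaptively to the process: each bad run (length $R$, per-sample distortion $M$) can be swallowed by a single much longer tile padded with the surrounding zero-distortion samples, and because the runs must be sparse for the Ces\`aro limit to stay at or below $D$, there is always enough padding available. Your parenthetical fix (``a bad run of length exceeding a given block length contains a full tile'') fails precisely because the tile lengths are also allowed to be unbounded and are chosen after the runs are. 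Worse, even if some tile did land inside a bad run, the ``local budget'' you would derive in step (2) is only a \emph{necessary} condition for membership in $\Wsp_{D}^{(L)}$ --- it tells you which block distributions are excluded, not which are included --- so it cannot certify that the padded tiles fail to belong to their sets. Exhibiting a pair in $(\Asp_{D})\setminus(\Wsp_{D})$ requires upper-bounding abstract sets about which the hypotheses say essentially nothing, and your construction does not achieve this.

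The paper separates the two sets in the opposite direction, producing a pair in $(\Wsp_{D})\setminus(\Asp_{D})$, which only requires exhibiting \emph{one} successful tiling plus a direct Ces\`aro computation. Concretely: take a pair meeting~\eqref{eq:asymptotic_distortion_constraint0} with equality and perturb the joint distribution at a single positive index $\ell$ so that $\rho=D+1$ there; the perturbed pair is still in $(\Asp_{D})=(\Wsp_{D})$, so some tile $[\ell_{1},\ell_{2}]\ni\ell$ of its certifying tiling lies in $(\Wsp_{D}^{\ell_{1},\ell_{2}})$; the shift-invariance~\eqref{eq:W_k,k_stationary_one_sided} then lets one repeat that tile with period $L=\ell_{2}-\ell_{1}+1$ to build a process that belongs to $(\Wsp_{D})$ by construction but whose asymptotic average exceeds $D$, contradicting $(\Wsp_{D})\subset(\Asp_{D})$. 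This is exactly the ``periodize a bad block and invoke one inclusion'' idea you sketch at the end, but used directly as the entire proof rather than as a lemma feeding a construction that cannot be completed. You should discard the growing-runs pair and promote your loop-closing argument to the main argument.
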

 
\paragraph*{Limitation 2}
The second limitation associated with Theorem~\ref{thm:4_gorpin} is 
that its application requires one to prove
its condition iv), i.e., the unproven supposition that  $\overrightarrow{H^{0}_{D}} = \overline{H_{D}^{0}}$, holds.
The only work we are aware of which builds upon Theorem~\ref{thm:4_gorpin} is~\cite{stakou15}, and, accordingly,~\cite{stakou15} provides~\cite[Theorem~III.5]{stakou15}, which states that a similar equality holds.
Unfortunately, as shown in~\cite{derpic17}, the proof of~\cite[Theorem~III.5]{stakou15} is flawed.

We note that Lemma~\ref{lem:liminf_equals_inflim} in Section~\ref{sec:suf_cond_for_inflim_eq_liminf}
below 
provides two alternative sufficient conditions for an equality similar to 
$\overrightarrow{H^{0}_{D}} = \overline{H_{D}^{0}}$ (but for one-sided processes) to hold.

\paragraph*{Limitation 3} The third limitation of Theorem~\ref{thm:4_gorpin} 
for its applicability to one-sided sources is the fact that the entire framework built in~\cite{gorpin73} is stated for two-sided processes (and, crucially, for the corresponding causality restriction given by Markov chain~\eqref{eq:MC_causality_gorpin}).
This difference cannot be simply neglected while expecting Theorem~\ref{thm:4_gorpin} to remain valid.
Indeed, as we show in the next section (Theorem~\ref{thm:conditions_forJS_and_causal}), a pair of random processes $(\rvax_{1}^{\infty},\rvay_{1}^{\infty})$ can be jointly stationary and at the same time satisfy the causality Markov chain~\eqref{eq:MC_causality_0} only if $\rvay(k)$ is independent of $\rvax_{k+1}^{\infty}$ when $\rvax(k)$ is given.
Moreover, we prove that joint stationarity and causality are incompatible when the source is a $\kappa$-th order Markovian Gaussian one-sided process with $\kappa>1$.

\section{Conditions for Joint Stationarity and Causality to Hold Together
}\label{sec:joint_stat_and_causality}
In this section we address the question of whether 
there exists a one-sided reconstruction process $\rvay_{1}^{\infty}$
jointly stationary with a source  
$\rvax_{1}^{\infty}$
and which also satisfies the causality constraint~\eqref{eq:MC_causality_0}.

Each source random sample $\rvax(i)$ belongs to some given set (source alphabet) $\Xsp$  and is allowed to have an arbitrary distribution.
Recall that a random process $\rvay_{1}^{\infty}\in\Ysp^{\infty}$, 
where $\Ysp$ is the reconstruction alphabet 
and
$\Ysp^{\infty}\eq \Ysp\times \Ysp \cdots$,
is said to be \textit{jointly stationary} with $\rvax_{1}^{\infty}$ if and only if, for every $\ell\in\Nl$, the distribution of 
$(\rvax_{k}^{k+\ell},\rvay_{k}^{k+\ell})$ does not depend on $k$, for $k=1,2,\ldots$. 

The next theorem shows that, for such one-sided processes, joint-stationarity and causality may hold together only if $\rvay(k)$ is independent of $\rvax_{1}^{k-1}$ when $\rvax(k)$ is given.

\begin{thm}\label{thm:conditions_forJS_and_causal}
 If  $\rvax_{1}^{\infty}$ and $\rvay_{1}^{\infty}$ are jointly stationary and 
 $\rvay_{1}^{\infty}$ is
 causally related to 
 $\rvax_{1}^{\infty}$
 according to~\eqref{eq:MC_causality_0},
 then 
 \begin{align}\label{eq:MC_for_stat_and_causality0}
 \rvax_{k+1}^{\infty}
 \longleftrightarrow 
 \rvax(k)
 \longleftrightarrow 
 \rvay(k),
 \fspace 
 \forall k\in\set{1,2,\ldots}.
 \end{align}
 \finenunciado
\end{thm}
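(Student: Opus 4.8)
The plan is to exploit the tension between joint stationarity (which is a statement about \emph{finite-window} distributions being shift-invariant) and the short causality constraint~\eqref{eq:MC_causality_0} (which couples $\rvay(k)$ to the \emph{entire} past $\rvax_{1}^{k}$ but forbids coupling to the future). First I would fix $k$ and argue about the conditional mutual information $I\bigl(\rvax_{k+1}^{\infty};\rvay(k)\,\big|\,\rvax(k)\bigr)$; the goal~\eqref{eq:MC_for_stat_and_causality0} is exactly the statement that this quantity vanishes for every $k$. By Property~\ref{property:Iabc_larger_than_Iab}, $I\bigl(\rvax_{k+1}^{\infty};\rvay(k),\rvax(k)\bigr)\ge I\bigl(\rvax_{k+1}^{\infty};\rvax(k)\bigr)$ with equality iff the Markov chain~\eqref{eq:MC_for_stat_and_causality0} holds, so an equivalent target is
\begin{align*}
I\bigl(\rvax_{k+1}^{\infty};\rvax(k),\rvay(k)\bigr)=I\bigl(\rvax_{k+1}^{\infty};\rvax(k)\bigr).
\end{align*}

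Next I would bring in causality and stationarity together. The chain~\eqref{eq:MC_causality_0} at time $k$ gives $\rvax_{k+1}^{\infty}\longleftrightarrow\rvax_{1}^{k}\longleftrightarrow\rvay_{1}^{k}$, hence in particular $I\bigl(\rvax_{k+1}^{\infty};\rvay_{1}^{k}\,\big|\,\rvax_{1}^{k}\bigr)=0$; by the chain rule this forces $I\bigl(\rvax_{k+1}^{\infty};\rvax_{1}^{k},\rvay_{1}^{k}\bigr)=I\bigl(\rvax_{k+1}^{\infty};\rvax_{1}^{k}\bigr)$. Monotonicity then sandwiches the (finite) quantity $I\bigl(\rvax_{k+1}^{\infty};\rvax_{1}^{k-1},\rvax(k),\rvay(k)\bigr)$ between $I\bigl(\rvax_{k+1}^{\infty};\rvax_{1}^{k}\bigr)$ and itself, so it equals $I\bigl(\rvax_{k+1}^{\infty};\rvax_{1}^{k}\bigr)$ as well. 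The remaining gap is to strip the conditioning from $\rvax_{1}^{k-1}$ down to $\rvax(k)$ alone — i.e. to show $I\bigl(\rvax_{k+1}^{\infty};\rvax_{1}^{k-1},\rvax(k),\rvay(k)\bigr)=I\bigl(\rvax_{k+1}^{\infty};\rvax(k),\rvay(k)\bigr)$. This is where joint stationarity must be used, and it is the crux of the argument. The idea is that $(\rvax_{k}^{\infty},\rvay(k))\sim(\rvax_{1}^{\infty},\rvay(1))$ by joint stationarity, and for $k=1$ there is no past, so $I\bigl(\rvax_{2}^{\infty};\rvax(1),\rvay(1)\bigr)=I\bigl(\rvax_{2}^{\infty};\rvax(1)\bigr)$ trivially from the $k=1$ instance of~\eqref{eq:MC_causality_0}; stationarity promotes this to every $k$, giving $I\bigl(\rvax_{k+1}^{\infty};\rvax(k),\rvay(k)\bigr)=I\bigl(\rvax_{k+1}^{\infty};\rvax(k)\bigr)$ directly.

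Combining the two displays above closes the loop: $I\bigl(\rvax_{k+1}^{\infty};\rvax(k),\rvay(k)\bigr)=I\bigl(\rvax_{k+1}^{\infty};\rvax(k)\bigr)$, which by Property~\ref{property:Iabc_larger_than_Iab} is precisely~\eqref{eq:MC_for_stat_and_causality0}. The delicate points I expect to spend the most care on are: (a) ensuring all the mutual informations involving the infinite-dimensional block $\rvax_{k+1}^{\infty}$ are finite (or handling the infinite case by taking limits over finite truncations $\rvax_{k+1}^{k+\ell}$ and invoking continuity/monotone convergence of mutual information), since the equality-in-Property-\ref{property:Iabc_larger_than_Iab} arguments require finiteness; and (b) making the ``stationarity transfers the $k=1$ identity to general $k$'' step rigorous — one must check that the joint law of $(\rvax_{k+1}^{\infty},\rvax(k),\rvay(k))$ is the shift of the joint law of $(\rvax_{2}^{\infty},\rvax(1),\rvay(1))$, which follows from joint stationarity of $(\rvax_{1}^{\infty},\rvay_{1}^{\infty})$ applied to every finite window and a standard extension to the infinite-window law. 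If the finiteness issue is genuinely obstructive, the fallback is to prove the truncated statement $\rvax_{k+1}^{k+\ell}\longleftrightarrow\rvax(k)\longleftrightarrow\rvay(k)$ for every $\ell$ and then pass to the limit, since a Markov chain holding for all finite futures implies it for the infinite future.
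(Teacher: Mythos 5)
Your proposal is correct and rests on exactly the same observation as the paper's proof: the $k=1$ instance of the causality constraint~\eqref{eq:MC_causality_0} \emph{is} the claimed Markov chain~\eqref{eq:MC_for_stat_and_causality0} at $k=1$ (since $\rvax_{1}^{1}=\rvax(1)$ and $\rvay_{1}^{1}=\rvay(1)$), and joint stationarity transports the joint law of $(\rvax_{2}^{\infty},\rvax(1),\rvay(1))$ to that of $(\rvax_{k+1}^{\infty},\rvax(k),\rvay(k))$ for every $k$. The paper phrases this contrapositively and works directly with conditional independence rather than mutual-information equalities, which sidesteps the finiteness caveats you flag, but the argument is essentially identical and your sandwiching preamble is not actually needed.
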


\begin{proof}
If~\eqref{eq:MC_for_stat_and_causality0} does not 
hold for some $k$ and if $\rvay_{1}^{\infty}$ and $\rvax_{1}^{\infty}$ are jointly stationary, then 
\begin{align}\label{eq:MC_false}
 \rvax_{2}^{\infty} 
 \longleftrightarrow
  \rvax(1)
 \longleftrightarrow
 \rvay(1)
 \end{align}
does not hold,
which corresponds to not satisfying~\eqref{eq:MC_causality_0} for $k=1$, completing the proof.
\end{proof}
To illustrate how restrictive condition~\eqref{eq:MC_for_stat_and_causality0} is, the next theorem  shows that, for a Gaussian $\kappa$-th order Markovian stationary source $\rvax_{1}^{\infty}$, causality and joint stationarity is possible only if $\rvax_{1}^{\infty}$ is i.i.d. ($\kappa=0$) or $\kappa=1$. 
Recall that a random (vector or scalar valued) process $\rvax_{1}^{\infty}$ is $\kappa$-th order Markovian if  $\kappa$ is the smallest non-negative integer such that
\begin{align}\label{eq:Markovian_source_def}
\rvax_{1}^{i}
\longleftrightarrow
\rvax_{i+1}^{i+k}
\longleftrightarrow
  \rvax_{i+k+1}^{\infty} 
  ,\fspace 
  \forall k\geq \kappa.
\end{align}
 
\begin{thm}\label{thm:stationarity_and_causality_Gaussian}
 Suppose $\rvax_{1}^{\infty}$ is a zero-mean  Gaussian stationary process, and  
 assume that, for some $N\in\set{2,3,\ldots}$, $\rvax_{1}^{N},\rvay_{1}^{N}$ are jointly Gaussian and jointly stationary,
 with 
 $\rvay_{1}^{N}$ being causally related to $\rvax_{1}^{N}$ according to~\eqref{eq:MC_causality_0}. 
 Then $\rvax_{1}^{\infty}$ is $\kappa$-th order Markovian with
 $\kappa\leq 1$.
\finenunciado
\end{thm}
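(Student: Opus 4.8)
The plan is to leverage Theorem~\ref{thm:conditions_forJS_and_causal} together with the Gaussian structure and work entirely at the level of covariance matrices. First I would apply Theorem~\ref{thm:conditions_forJS_and_causal} (noting that the short causality constraint~\eqref{eq:MC_causality_0} for $\rvax_{1}^{N},\rvay_{1}^{N}$ gives the required Markov chains up to index $N-1$) to obtain $\rvax_{k+1}^{N-1}\leftrightarrow\rvax(k)\leftrightarrow\rvay(k)$ for the relevant range of $k$. In the jointly Gaussian setting this Markov chain has a clean algebraic meaning: the cross-covariance $\Expe{\rvay(k)\rvax(j)}$ for $j>k$ must equal the value predicted by regressing $\rvax(j)$ on $\rvax(k)$ alone, i.e. $\Expe{\rvay(k)\rvax(j)} = \frac{\Expe{\rvay(k)\rvax(k)}}{\sigsq_{\rvax}}\varrho_{\rvax}(j-k,k)$.

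Next I would exploit joint stationarity of $(\rvax_{1}^{N},\rvay_{1}^{N})$, which forces $\Expe{\rvay(k)\rvax(k)}$ to be a constant $c$ independent of $k$, and likewise makes the cross-covariance $\Expe{\rvay(k)\rvax(k+\tau)}$ depend only on $\tau$. Combined with the previous step this yields, for all $\tau\geq 1$ in the admissible range, $\Expe{\rvay(1)\rvax(1+\tau)} = \frac{c}{\sigsq_{\rvax}}\varrho_{\rvax}(\tau)$. The key observation is then to consider the reconstruction error or, more directly, to look at the conditional structure of the source itself. I would argue that the causality constraint at step $k$, phrased as $\rvax_{k+1}^{\infty}\leftrightarrow\rvax_{1}^{k}\leftrightarrow\rvay_{1}^{k}$, when intersected with the pairwise Markov chains from Theorem~\ref{thm:conditions_forJS_and_causal}, forces a relation on the source autocovariances: essentially that the partial correlations of $\rvax(k+1)$ with $\rvax_{1}^{k-1}$ given $\rvax(k)$ must all vanish, which is exactly the statement that $\rvax_{1}^{\infty}$ is first-order Markovian (or i.i.d.). Concretely, I would pick a nontrivial component of $\rvay(k)$ correlated with $\rvax(k)$ (if $c=0$ for all $k$ one handles that degenerate case separately, and there the constraint is vacuous but one then uses a limiting/perturbation argument or simply notes the statement is about when such a nontrivial $\rvay$ exists), push the cross-correlation identity through, and read off that $\varrho_{\rvax}(\tau,k)$ must satisfy the Yule--Walker-type recursion of an AR(1) process.

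The cleanest route is probably: from the Markov chain $\rvax_{k+1}^{N-1}\leftrightarrow\rvax(k)\leftrightarrow\rvay(k)$ and causality $\rvax_{k+1}^{N-1}\leftrightarrow\rvax_{1}^{k}\leftrightarrow\rvay_{1}^{k}$, deduce that $\rvay_{1}^{k}$ carries no more information about $\rvax_{k+1}^{N-1}$ than $\rvax(k)$ does; by Gaussianity this means the linear space spanned by $\rvay_{1}^{k}$, when projected onto the future $\rvax_{k+1}^{N-1}$, factors through $\rvax(k)$. Stationarity then makes this a single algebraic constraint rather than a $k$-indexed family. Writing out the $3\times 3$ (scalar) or block covariance among $(\rvax(k-1),\rvax(k),\rvax(k+1))$ and imposing that $\rvax(k+1)$ regressed on $(\rvax(k-1),\rvax(k))$ drops the $\rvax(k-1)$ coefficient, one gets $\varrho_{\rvax}(2) = \varrho_{\rvax}(1)^{2}/\varrho_{\rvax}(0)$, and inductively $\varrho_{\rvax}(\tau) = \varrho_{\rvax}(0)(\varrho_{\rvax}(1)/\varrho_{\rvax}(0))^{|\tau|}$, the AR(1) covariance; hence $\kappa\leq 1$.

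The main obstacle I anticipate is the degenerate/boundary bookkeeping: handling the case where $\rvay(k)$ is (partially) uncorrelated with $\rvax(k)$ so that the cross-correlation identity gives no leverage, and making sure the argument uses only finitely many indices up to $N$ (so that all Markov chains invoked are among the finite collection $\rvax_{1}^{N},\rvay_{1}^{N}$ with $N\geq 2$) rather than the full infinite process — the conclusion about $\rvax_{1}^{\infty}$ being $\kappa$-Markovian with $\kappa\leq 1$ must be extracted from the $N$-sample constraint plus stationarity of the source. Concretely, with $N\geq 2$ we get the chain $\rvax_{2}^{N-1}\leftrightarrow\rvax(1)\leftrightarrow\rvay(1)$; if $N\geq 3$ this already involves $\rvax(2)$ and $\rvax(1)$, and combined with the causality chain $\rvax_{2}^{\infty}\leftrightarrow\rvax_{1}\leftrightarrow\rvay_{1}$ one would need $N$ large enough to see $\varrho_{\rvax}(2)$; I would expect to need a small separate argument (or an appeal to stationarity extending the $N$-sample statistics) to cover $N=2$. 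Once past that, the reduction to the AR(1) Yule--Walker recursion is routine linear algebra.
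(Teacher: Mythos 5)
Your proposal is, in substance, the same argument as the paper's: both exploit that causality makes the regression matrix $\bA$ in $\bK_{\rvey^{1}_{N}\rvex^{1}_{N}}=\bA\bK_{\rvex^{1}_{N}}$ lower triangular while joint stationarity makes the cross-covariances Toeplitz, and the resulting identity forces the geometric recursion $\varrho_{k}=\zeta\varrho_{k-1}$ (the paper obtains it by equating rows $1$ and $2$ of the matrix identity with $\zeta=(a_{1,1}-a_{2,2})/a_{2,1}$; your route through Theorem~\ref{thm:conditions_forJS_and_causal} applied at $k=2$, rewritten as a vanishing conditional cross-covariance, yields the identical equation). The one substantive point is that the degenerate case you flag is genuine and is not disposed of by the paper either: its proof divides by $a_{2,1}$ without comment, and when the optimal estimate of $\rvay(2)$ from $\rvax_{1}^{2}$ puts zero weight on $\rvax(1)$ (e.g.\ $\rvay(k)=\rvax(k)+{}$independent noise, which satisfies every hypothesis for an arbitrary stationary Gaussian source) the constraint becomes vacuous, so no limiting or perturbation argument can close that case --- it is a nondegeneracy assumption missing from the statement, and both your proof and the paper's need it. Your second worry, about extracting a conclusion on all of $\rvax_{1}^{\infty}$ from the recursion $\varrho_{k}=\zeta\varrho_{k-1}$ established only for $k\le N-1$, is likewise shared by the paper's proof rather than being a defect peculiar to your approach.
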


\begin{proof}
Since $\rvax_{1}^{N}$ and $\rvay_{1}^{N}$ are jointly Gaussian and the latter depends causally upon the former, it holds that
\begin{align}
  \bK_{\rvey^{1}_{N}\rvex^{1}_{N}}\eq \Expe{[\rvay(1) \;\cdots\;\rvay(N)]^{T}[\rvax(1) \;\cdots\;\rvax(N)]}
  =\bA\bK_{\rvex^{1}_{N}}
  \label{eq:Kyx_A_and_Kx}
\end{align}
for some lower triangular matrix $\bA\in\Rl^{N\times N}$ having entries
$a_{i,j}\eq [\bA]_{i,j}, \; i,j\in\set{1,\ldots,N}$.
On the other hand, the fact that $\rvax_{1}^{N}$ and $\rvay_{1}^{N}$ are jointly stationary implies that 
$\bK_{\rvey^{1}_{N}\rvex^{1}_{N}}$ 
and 
$\bK_{\rvex^{1}_{N}}$
are Toeplitz matrices.
From~\eqref{eq:Kyx_A_and_Kx}, considering the entries on the first and second rows of 
$  \bK_{\rvey^{1}_{N}\rvex^{1}_{N}}$ 
and defining 
$$
\varrho_{k}\eq \Expe{\rvax(1)\rvax(1+k)},\fspace k=0,1\ldots,N-1,
$$ 
this Toeplitz condition implies that 
\begin{align*}
  a_{1,1}[\varrho_{0}\cdots \varrho_{N-2}]
  &=
  a_{2,1}[\varrho_{1}\cdots \varrho_{N-1}]
  +
  a_{2,2}[\varrho_{0}\cdots \varrho_{N-2}]
  \\
  \iff
  \underbrace{\frac{a_{1,1}-a_{2,2}}{ a_{2,1}}}_{\eq\zeta}
  [\varrho_{0}\cdots \varrho_{N-2}]
  &=
 [\varrho_{1}\cdots \varrho_{N-1}].
 \end{align*}
Therefore,
$  \varrho_{k}= \zeta\varrho_{k-1},  k=1,\ldots,N-1$,
which for a Gaussian stationary sequence $\rvax_{1}^{N}$
implies that 
$
\expe{\rvax(k)|\rvax_{1}^{k-1}}
=
\zeta \rvax(k-1)
$, $k=1,\ldots,N$.
For Gaussian random variables the latter is equivalent to the Markov chains
$
\rvax_{1}^{k-2}
\longleftrightarrow 
\rvax(k-1)
\longleftrightarrow 
\rvax(k)$, which defines 
a 1-st order Markovian process (if $\zeta\neq 0$) or an i.i.d. process (if $\zeta=0$).
This completes the proof.
\end{proof}

In the next section we will see that 
if $\rvax_{1}^{\infty}$ is $\kappa$-th order Markovian, then it is possible to build a pair $(\rvax_{1}^{\infty},\rvay_{1}^{\infty})$ causally related according to~\eqref{eq:MC_causality_0} such that $(\rvax_{\kappa}^{\infty},\rvay_{\kappa}^{\infty})$ is stationary.
Moreover, we will show in Theorem~\ref{thm:QJS_is_sufficient} below that the minimization associated with the causal IRDF can be restricted to such pairs.

\section{The Set of Quasi-Jointly Stationary Realizations is Sufficient}\label{sec:quasi_stat_is_sufficient}
In this section we show that for any $\kappa$-th order Markovian one-sided stationary source $\rvax_{1}^{\infty}$
the search for the causal IRDF (as defined in~\eqref{eq:Causal_IRDF_base} and for a large class of distortion criteria) can be restricted to output sequences $\rvay_{1}^{\infty}$ causally related to the source, 
jointly stationary with it after $\kappa$ samples, and such that 
$
\bar{I}(\rvax_{\kappa}^{\infty};\rvay_{\kappa}^{\infty})
=
\bar{I}(\rvax_{1}^{\infty};\rvay_{1}^{\infty})
$.
We refer to such pairs of processes as being \textit{quasi-jointly stationary} ($\kappa$-QJS), and define the set which contains them as follows:
\begin{defn}[Set of quasi-jointly stationary process]\label{def:quasi_jointly_stat}
The set of
$\kappa$-QJS distributions $\Qsp_{\kappa}$ is composed of all  joint distributions 
$P_{\rvax_{1}^{\infty},\rvay_{1}^{\infty}}$
of pairs of one-sided random processes $(\rvax_{1}^{\infty},\rvay_{1}^{\infty})$ which satisfy
\begin{align*}
(\rvax_{\kappa}^{\infty},\rvay_{\kappa}^{\infty}) &\text{ are jointly stationary}
  \\
  \lim_{\ell\to\infty}
  \frac{1}{\ell}I(\rvax_{1}^{\ell};\rvay_{1}^{\ell})
  &=
    \lim_{\ell\to\infty}
  \frac{1}{\ell}I(\rvax_{\kappa}^{\kappa+\ell-1};\rvay_{\kappa}^{\kappa+\ell-1}).\\[-14mm]
\end{align*}
\finenunciado  
\end{defn}

Notice that $\Qsp_{1}$ corresponds to the set of joint distributions associated with all jointly stationary one-sided process pairs.

As in~\cite{gorpin73}, we write $(\rvax_{1}^{k},\rvay_{1}^{k})\in (\Wsp_{D}^{1,k})$ when the distribution of 
$(\rvax_{1}^{k},\rvay_{1}^{k})$ belongs to the distortion-feasible set 
$\Wsp_{D}^{1,k}$, defined as in~\eqref{eq:W_D_k_k_right_marginal_cond}.

One can define a distortion-feasible set for pairs of one-sided processes 
$(\rvax_{1}^{\infty},\rvay_{1}^{\infty})$, say $(\Pspa_{D}^{\infty})$, from the finite-length distortion-feasible sets  $\set{\Wsp_{D}^{k,\ell}}_{k\leq \ell\in\Nl}$, in more than one manner.
A minimal condition we shall require for such definition is the following.
\begin{assu}\label{assu:Pspa_D_inf_assu}
The distortion-feasible set of distributions for pairs of one-sided processes $\Pspa_{D}^{\infty}$ satisfies the following:
\begin{enumerate}[i)]
\item If 
$(\tilde{\rvax}_{1}^{\infty},\tilde{\rvay}_{1}^{\infty})\in(\Pspa_{D}^{\infty})$, 
then $\tilde{\rvax}_{1}^{\infty}$ has the given probability distribution of the source process, say $P_{\mathring{\rvax}_{1}^{\infty}}$.
That is, $\Pspa_{D}^{\infty}\subset \Pspa^{1,\infty}$ (see~\eqref{eq:Pkk_def}).
\item If 
$(\tilde{\rvax}_{1}^{\infty},\tilde{\rvay}_{1}^{\infty})$ is any given pair of one-sided processes, and  
there exists an infinite collection of increasing integers $1=k_{1}<k_{2}<\cdots$ such that, 
for all $i\in\Nl$,
$(\tilde\rvax_{k_{i}}^{k_{i+1}-1},\tilde\rvay_{k_{i}}^{k_{i+1}-1})\in(\Wsp_{D}^{k_{i},k_{i+1}-1})$, then  $(\tilde\rvax_{1}^{\infty},\tilde\rvay_{1}^{\infty})\in(\Pspa_{D}^{\infty})$.
 \item For any pair of sequences 
$(\hat{\rvax}_{1}^{\ell},\hat{\rvay}_{1}^{\ell})\in(\Wsp_{D}^{1,\ell})$,  $\ell\in\Nl$,
and if $(\tilde\rvax_{1}^{\infty},\tilde\rvay_{1}^{\infty})\in(\Pspa_{D}^{\infty})$, 
then the concatenated processes
$\ddot{\rvax}_{1}^{\infty}\eq \set{\hat{\rvax}(1),\ldots,\hat{\rvax}(\ell),\tilde\rvax(1),\tilde\rvax(2),\cdots}$,
$\ddot{\rvay}_{1}^{\infty}\eq \set{\hat{\rvay}(1),\ldots,\hat{\rvay}(\ell),\tilde\rvay(1),\tilde\rvay(2),\cdots}$
satisfy $(\ddot{\rvax}_{1}^{\infty},\ddot{\rvay}_{1}^{\infty})\in(\Pspa_{D}^{\infty})$. 
\end{enumerate}
\finenunciado
\end{assu}
Notice that if $\Pspa_{D}^{\infty}$ satisfies this assumption and if the integers $k_{i}$ in Definition~\ref{def:W_D} were restricted to be positive, then we would have 
$\Wsp_{D}\subset \Pspa_{D}^{\infty}$ (see Definition~\ref{def:W_D}).
However, the one-way implications in Assumption~\ref{assu:Pspa_D_inf_assu} allow $\Pspa_{D}^{\infty}$ to be larger than $\Wsp_{D}$. 

We now define the sets of causally related pairs of sequences and processes.

\begin{defn}[Set of Causal Distributions]\label{def:Csp}
Define 
$(\Csp^{n})$ as
the set of all 
one-sided random $n$-sequences
$(\rvax_{1}^{n},\rvay_{1}^{n})$
which satisfy the causality constraint
\begin{align}\label{eq:MC_causality_base_def}
  \rvax_{k+1}^{n}
 \longleftrightarrow 
 \rvax_{1}^{k}
 \longleftrightarrow 
 \rvay_{1}^{k} 
 ,\fspace 
 k=1,\ldots,n
\end{align}
The set of causally related one-sided process pairs 
$(\Csp^{\infty})$ is defined likewise but for one-sided processes
$(\rvax_{1}^{\infty},\rvay_{1}^{\infty})$
which satisfy the causality constraint~\eqref{eq:MC_causality_0}. 
\end{defn}

With the above minimal notions, one can define two causal IRDFs, namely
\begin{align}
 R_{c}^{it}(D)
 &\eq \inf_{(\rvax_{1}^{\infty},\rvay_{1}^{\infty})\in(\Pspa_{D}^{\infty})\cap(\Csp^{\infty})} 
 \lim_{n\to\infty}
 \frac{1}{n}
 I(\rvax_{1}^{n};\rvay_{1}^{n})\label{eq:RcitD_for_infty}
\\
 \hat{R}_{c}^{it}(D)
 &\eq \lim_{n\to\infty}\inf_{(\rvax_{1}^{n},\rvay_{1}^{n})\in(\Wsp_{D}^{1,n})\cap(\Csp^{n})} \frac{1}{n}
 I(\rvax_{1}^{n};\rvay_{1}^{n}) \label{eq:RcitD_lim_inf}.
\end{align}
provided the limits exist.
The ``$\lim\inf$''  causal IRDF $\hat{R}_{c}^{it}(D)$  coincides with $\overline{H}^{0}_{D}$ if in~\eqref{eq:HD0_def} one fixes $k_{1}=1$ and lets $k_{2}\to\infty$.
By contrast, $R_{c}^{it}(D)$ differs from 
$\overrightarrow{H}^{0}_{D}$ in that the latter is associated with the (less general) distortion-feasible set $\Wsp_{D}$ (see Definition~\ref{def:W_D}).

Since our main result will be stated with the assumption that 
$R_{c}^{it}(D)= \hat{R}_{c}^{it}(D)$, we develop next two sufficient conditions for such equality to hold.

\subsection{Sufficient Conditions for $R_{c}^{it}(D)= \hat{R}_{c}^{it}(D)$}\label{sec:suf_cond_for_inflim_eq_liminf}
We begin by stating a useful construction of a pair of processes from a finite-length sequence and some if the properties of the former.
\footnote{A similar construction is introduced in the proof of~\cite[Lemma~1]{gorpin73}, for two-sided processes. 
Here we modify this idea to construct one-sided processes and reveal some of their properties.}
\begin{prop}\label{prop:construction}
Let $(\breve{\rvax}_{1}^{\infty},\breve{\rvay}_{1}^{n})$,   be given, with $\breve{\rvax}_{1}^{\infty}$  stationary and such that 
$(\breve{\rvax}_{1}^{\infty},\breve{\rvay}_{1}^{n})$ satisfies the causality condition~\eqref{eq:MC_causality_0} for $k=1,2,\ldots,n$.
Build the processes $(\tilde{\rvax}_{1}^{\infty},\tilde{\rvay}_1^{\infty})$ as follows:
\begin{subequations}\label{subeq:tilde_xy_def}
 \begin{enumerate}
\item[i)] Choose $\tilde{\rvax}_{1}^{\infty}$ with the same distribution of $\breve{\rvax}_{1}^{\infty}$, i.e.,
\begin{align}
 \tilde{\rvax}_{1}^{\infty}\sim  \breve{\rvax}_{1}^{\infty}.
\end{align}
\item[ii)] For every $\ell\in\Nl_{0}$, choose the conditional distribution 
of 
$\tilde{\rvay}_{n\ell+1}^{n(\ell+1)}$
given $\tilde{\rvax}_{1}^{\infty},\tilde{\rvay}_{1}^{n\ell} ,\tilde{\rvay}_{n(\ell+1)+1}^{\infty} $ as
\begin{align}
\label{eq:same_conditional_distributions0}
P_{\tilde{\rvay}_{n\ell+1}^{n(\ell+1)}
|
\tilde{\rvax}_{1}^{\infty},\tilde{\rvay}_{1}^{n\ell} ,\tilde{\rvay}_{n(\ell+1)+1}^{\infty} }
=
 P_{\tilde{\rvay}_{n\ell+1}^{n(\ell+1)}|\tilde{\rvax}_{n\ell+1}^{n(\ell+1)}}
 =
 P_{\breve{\rvay}_{1}^{n}|\breve{\rvax}_{1}^{n}}.
\end{align}
\end{enumerate}
\end{subequations}
Then
\begin{align}
\label{eq:i_same0}
I(\tilde{\rvax}_{n\ell+1}^{n\ell+n};\tilde{\rvay}_{n\ell+1}^{n\ell+n}) = I(\breve{\rvax}_{1}^{n};\breve{\rvay}_{1}^{n}), \fspace \forall \ell \in \Nl_0.
\end{align}
\begin{align}
\frac{1}{kn} I(\tilde{\rvax}_1^{ \ell n};\tilde{\rvay}_1^{\ell n})
\leq
\frac{1}{n} I(\breve{\rvax}_{1}^{n};\breve{\rvay}_{1}^{n}), \fspace \forall \ell \in \Nl.
\label{eq:Irate_of_block_less_than_sequence_of_blocks0}
\end{align}
Also:
\begin{enumerate}
 \item[a)] $ $
If 
$(\breve{\rvax}_{1}^{n},\breve{\rvay}_{1}^{n}) \in (\Wsp_D^{1,n})$, 
Assumption~\ref{assu:Pspa_D_inf_assu} and
\begin{align}\label{eq:W_D_are_stationary0}
\forall k,\ell\in\Nl, t\in\Nl_{0}:\fspace  \Wsp_{D}^{k,k+t} = \Wsp_{D}^{\ell,\ell+t},
\end{align}
hold, then 
$(\tilde{\rvax}_1^{\infty},\tilde{\rvay}_1^{\infty}) \in (\Pspa_D^{\infty})$.

\item[b)] If  $\breve{\rvax}_{1}^{\infty}$ is $\kappa$-th order Markovian, then 
\begin{align}\label{eq:MC_causality_kappa}
 \tilde{\rvax}_{m+k+1}^{\infty}
 \longleftrightarrow
 \tilde{\rvax}_{m+2-\kappa}^{m+k}
 \longleftrightarrow
 \tilde{\rvay}_{m+1}^{m+k},
 \fspace k\in\Nl, m\in\Nl_{0}.
\end{align}
\end{enumerate}
\finenunciado
\end{prop}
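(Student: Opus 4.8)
The plan is to exhibit a concrete realization of the construction~\eqref{subeq:tilde_xy_def} and read off all the claims from it. Draw $\tilde{\rvax}_{1}^{\infty}\sim\breve{\rvax}_{1}^{\infty}$ and, conditionally on $\tilde{\rvax}_{1}^{\infty}$, let the disjoint blocks $\{\tilde{\rvay}_{n\ell+1}^{n(\ell+1)}\}_{\ell\in\Nl_{0}}$ be mutually independent, with the $\ell$-th block distributed according to $P_{\breve{\rvay}_{1}^{n}|\breve{\rvax}_{1}^{n}}(\cdot\,|\,\tilde{\rvax}_{n\ell+1}^{n(\ell+1)})$. A direct check shows that the full conditionals of this law are exactly those prescribed by~\eqref{eq:same_conditional_distributions0}, so it is the distribution of the statement. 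Two facts will be used throughout: (A) conditionally on $\tilde{\rvax}_{1}^{\infty}$ the $\tilde{\rvay}$-blocks are independent and the $\ell$-th depends on $\tilde{\rvax}_{1}^{\infty}$ only through $\tilde{\rvax}_{n\ell+1}^{n(\ell+1)}$; and (B) since $(\breve{\rvax}_{1}^{\infty},\breve{\rvay}_{1}^{n})$ satisfies~\eqref{eq:MC_causality_0} for $k\le n$, for every $j\le n$ the prefix $\breve{\rvay}_{1}^{j}$ depends on $\breve{\rvax}_{1}^{\infty}$ only through $\breve{\rvax}_{1}^{j}$; hence, after re-indexing, any sub-window $\tilde{\rvay}_{a}^{b}$ of block $\ell$ (so $n\ell+1\le a\le b\le n(\ell+1)$) depends on $\tilde{\rvax}_{1}^{\infty}$ only through $\tilde{\rvax}_{n\ell+1}^{b}$.

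The relations~\eqref{eq:i_same0},~\eqref{eq:Irate_of_block_less_than_sequence_of_blocks0} and part~a) are then routine. For~\eqref{eq:i_same0}: stationarity of $\breve{\rvax}_{1}^{\infty}$ gives $\tilde{\rvax}_{n\ell+1}^{n(\ell+1)}\sim\breve{\rvax}_{1}^{n}$, and applying the common channel yields $(\tilde{\rvax}_{n\ell+1}^{n(\ell+1)},\tilde{\rvay}_{n\ell+1}^{n(\ell+1)})\sim(\breve{\rvax}_{1}^{n},\breve{\rvay}_{1}^{n})$, so the two mutual informations agree. For~\eqref{eq:Irate_of_block_less_than_sequence_of_blocks0}: expand $I(\tilde{\rvax}_{1}^{\ell n};\tilde{\rvay}_{1}^{\ell n})$ by the chain rule over the $\ell$ blocks and bound the $j$-th term, via Property~\ref{property:Iabc_larger_than_Iab_gvn_c}, by $I(\tilde{\rvax}_{1}^{\ell n},\tilde{\rvay}_{1}^{nj};\tilde{\rvay}_{nj+1}^{n(j+1)})$; by (A) one has $(\tilde{\rvax}_{1}^{\ell n},\tilde{\rvay}_{1}^{nj})\longleftrightarrow\tilde{\rvax}_{nj+1}^{n(j+1)}\longleftrightarrow\tilde{\rvay}_{nj+1}^{n(j+1)}$, so the data-processing inequality and~\eqref{eq:i_same0} bound this term by $I(\breve{\rvax}_{1}^{n};\breve{\rvay}_{1}^{n})$; summing over $j$ and dividing by $\ell n$ gives the claim. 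For~a): apply Assumption~\ref{assu:Pspa_D_inf_assu}\,ii) with the aligned partition $k_{i}=(i-1)n+1$; by~\eqref{eq:i_same0} and stationarity each $(\tilde{\rvax}_{k_{i}}^{k_{i+1}-1},\tilde{\rvay}_{k_{i}}^{k_{i+1}-1})$ has the law of $(\breve{\rvax}_{1}^{n},\breve{\rvay}_{1}^{n})\in(\Wsp_{D}^{1,n})=(\Wsp_{D}^{k_{i},k_{i+1}-1})$, the last equality by~\eqref{eq:W_D_are_stationary0}, so the hypothesis of Assumption~\ref{assu:Pspa_D_inf_assu}\,ii) holds and $(\tilde{\rvax}_{1}^{\infty},\tilde{\rvay}_{1}^{\infty})\in(\Pspa_{D}^{\infty})$.

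The crux is part~b). Fix $k\in\Nl$ and $m\in\Nl_{0}$, and set $\ell_{\min}=\lfloor m/n\rfloor$, $\ell_{\max}=\lfloor(m+k-1)/n\rfloor$, so the window $\tilde{\rvay}_{m+1}^{m+k}$ meets exactly the blocks $\ell_{\min},\ldots,\ell_{\max}$. \emph{Step~1} ($\tilde{\rvax}_{m+k+1}^{\infty}\longleftrightarrow\tilde{\rvax}_{1}^{m+k}\longleftrightarrow\tilde{\rvay}_{m+1}^{m+k}$): the part of each touched block $\ell$ lying in the window is a sub-window ending at index $\min(m+k,n(\ell+1))\le m+k$, so by (B) it depends on $\tilde{\rvax}_{1}^{\infty}$ only through $\tilde{\rvax}_{1}^{m+k}$; since by (A) these contributions are conditionally independent given $\tilde{\rvax}_{1}^{\infty}$, $P_{\tilde{\rvay}_{m+1}^{m+k}\mid\tilde{\rvax}_{1}^{\infty}}$ is $\sigma(\tilde{\rvax}_{1}^{m+k})$-measurable. \emph{Step~2}: since $\tilde{\rvax}_{1}^{\infty}\sim\breve{\rvax}_{1}^{\infty}$ is $\kappa$-th order Markovian,~\eqref{eq:Markovian_source_def} at $i=m+1-\kappa$ gives $\tilde{\rvax}_{1}^{m+1-\kappa}\longleftrightarrow\tilde{\rvax}_{m+2-\kappa}^{m+1}\longleftrightarrow\tilde{\rvax}_{m+2}^{\infty}$; writing $\tilde{\rvax}_{m+2}^{\infty}=(\tilde{\rvax}_{m+2}^{m+k},\tilde{\rvax}_{m+k+1}^{\infty})$ and applying the weak-union graphoid rule (for $k\ge1$ the window $\tilde{\rvax}_{m+2-\kappa}^{m+k}$ contains $\tilde{\rvax}_{m+2-\kappa}^{m+1}$) yields $\tilde{\rvax}_{1}^{m+1-\kappa}\longleftrightarrow\tilde{\rvax}_{m+2-\kappa}^{m+k}\longleftrightarrow\tilde{\rvax}_{m+k+1}^{\infty}$. \emph{Step~3}: in $P_{\tilde{\rvay}_{m+1}^{m+k}\mid\tilde{\rvax}_{m+2-\kappa}^{m+k},\tilde{\rvax}_{m+k+1}^{\infty}}$ integrate out $\tilde{\rvax}_{1}^{m+1-\kappa}$; Step~1 replaces $P_{\tilde{\rvay}_{m+1}^{m+k}\mid\tilde{\rvax}_{1}^{\infty}}$ by $P_{\tilde{\rvay}_{m+1}^{m+k}\mid\tilde{\rvax}_{1}^{m+k}}$, and Step~2 replaces $P_{\tilde{\rvax}_{1}^{m+1-\kappa}\mid\tilde{\rvax}_{m+2-\kappa}^{m+k},\tilde{\rvax}_{m+k+1}^{\infty}}$ by $P_{\tilde{\rvax}_{1}^{m+1-\kappa}\mid\tilde{\rvax}_{m+2-\kappa}^{m+k}}$, leaving $P_{\tilde{\rvay}_{m+1}^{m+k}\mid\tilde{\rvax}_{m+2-\kappa}^{m+k}}$, which is free of $\tilde{\rvax}_{m+k+1}^{\infty}$; this is~\eqref{eq:MC_causality_kappa}. (When $m+2-\kappa\le1$ the middle window equals $\tilde{\rvax}_{1}^{m+k}$ and~\eqref{eq:MC_causality_kappa} reduces to Step~1.)

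The anticipated main obstacle is Step~1 of part~b): one must observe that the last block met by the window $[m+1,m+k]$ can extend past index $m+k$, and that it is precisely the causality hypothesis~\eqref{eq:MC_causality_0}, not the block partition alone, which rules out those ``future-within-the-block'' samples from influencing the window — so that the Markov property can subsequently screen off $\tilde{\rvax}_{m+k+1}^{\infty}$ while correctly retaining the dependence on the past $\tilde{\rvax}_{1}^{m+1-\kappa}$.
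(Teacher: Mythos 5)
Your proof is correct and follows essentially the same route as the paper's: the same block-wise conditionally independent construction, the per-block distributional identity for \eqref{eq:i_same0}, a chain-rule/data-processing bound equivalent to the paper's Proposition~\ref{prop:I_for_MC} for \eqref{eq:Irate_of_block_less_than_sequence_of_blocks0}, and for part~b) the composition of the within-block causality chain with the source's $\kappa$-th order Markov chain (your ``integrate out'' step is exactly the paper's Proposition~\ref{prop:abcd}). If anything, your Step~1 of part~b) is more careful than the paper's, since it explicitly treats windows that straddle several blocks.
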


\begin{proof}
The first equality in~\eqref{eq:same_conditional_distributions0} is equivalent to the Markov chains
\begin{align}
\label{eq:MC_for_tilde_y}
\tilde{\rvay}_{n\ell+1}^{n\ell+n}
\longleftrightarrow
\tilde{\rvax}_{n\ell+1}^{n\ell+n}
\longleftrightarrow
(\tilde{\rvax}_{-\tau}^{0},\tilde{\rvax}_{ni+1}^{ni+n},\tilde{\rvay}_{ni+1}^{ni+n}), \fspace 
\ell,i \in \Nl_0, i\neq \ell.
\end{align}
The second equality in~\eqref{eq:same_conditional_distributions0} together with the fact that $\tilde{\rvax}_1^{\infty}$ is stationary imply that $(\tilde{\rvax}_{\ell n+1}^{\ell n+n},\tilde{\rvay}_{\ell n+1}^{\ell n+n}) \sim (\breve{\rvax}_{1}^{n},\breve{\rvay}_{1}^{n})$, and thus~\eqref{eq:i_same0} follows. 
On the other hand, we have that
\begin{align}
\frac{1}{\ell n} I(\tilde{\rvax}_1^{\ell n};\tilde{\rvay}_1^{\ell n})
\leq
\frac{1}{\ell n} \sum_{i=0}^{\ell -1} 
I(\tilde{\rvax}_{ni+1}^{ni+n};\tilde{\rvay}_{ni+1}^{ni+n})
\overset{\eqref{eq:i_same0}}{=}
\frac{1}{n} I(\breve{\rvax}_{1}^{n};\breve{\rvay}_{1}^{n}),
\label{eq:Irate_of_block_less_than_sequence_of_blocks}
\end{align}
where the first inequality holds due to Proposition~\ref{prop:I_for_MC}, in the Appendix (applying it successively to the sequences 
$(\tilde{\rvax}_1^{\ell n},\tilde{\rvay}_1^{\ell n})$, $\ell\in\Nl$, which can be done because they satisfy \eqref{eq:MC_for_tilde_y}). 
This proves~\eqref{eq:Irate_of_block_less_than_sequence_of_blocks0}.

The fact that $(\breve{\rvax}_{1}^{n},\breve{\rvay}_{1}^{n}) \in (\Wsp_D^{1,n})$,
the definition of $(\tilde{\rvax}_1^{\infty},\tilde{\rvay}_1^{\infty})$ 
and the stationarity condition~\eqref{eq:W_D_are_stationary0}
imply that $(\tilde{\rvax}_{n\ell +1}^{n\ell +n},\tilde{\rvay}_{n\ell +1}^{n\ell +n})\in(\Wsp_{D}^{n\ell +1,n\ell +n})$, for all $\ell \in \Nl_0$. 
The latter together with 
Assumption~\ref{assu:Pspa_D_inf_assu}
implies that $(\tilde{\rvax}_1^{\infty},\tilde{\rvay}_1^{\infty}) \in (\Pspa_D^{\infty})$.

On the other hand,~\eqref{eq:MC_for_tilde_y}  together with the fact that
$(\breve{\rvax}_{1}^{\infty},\breve{\rvay}_{1}^{n})$ satisfies~\eqref{eq:MC_causality_0} for $k=1,2,\ldots,n$, generalizes to
\begin{align}
\label{eq:MC_for_tilde_y_general}
\tilde{\rvay}_{n\ell+1}^{n\ell + {k}}
\longleftrightarrow
\tilde{\rvax}_{n\ell+1}^{n\ell+{k}}
\longleftrightarrow
(\tilde{\rvax}_{n\ell+{k}+1}^{n\ell+n},
\tilde{\rvax}_{ni+1}^{ni+n},\tilde{\rvay}_{ni+1}^{ni+n}),
\fspace \ell,i \in \Nl_0, i\neq \ell,
{k}\in\Nl.
\end{align}
The latter implies the Markov chains 
\begin{align}
  \label{eq:MC_for_tilde_y_short}
  \underbrace{\tilde{\rvay}_{n\ell+m}^{n\ell + {k}}}_{\rvab}
\longleftrightarrow
\underbrace{\tilde{\rvax}_{n\ell+1}^{n\ell+{k}}}_{\rvac,\rvad}
\longleftrightarrow
\underbrace{\tilde{\rvax}_{n\ell+{k}+1}^{\infty}}_{\rvaa},
\fspace  \ell \in \Nl_0,
1\leq m\leq k\in\Nl.
\end{align}
Supposing $\breve{\rvax}_{1}^{n}$ is $\kappa$-th order Markovian, it holds that 
\begin{align}\label{eq:MC_markovity_ofx_short}
\underbrace{\tilde{\rvax}_{n\ell+1}^{n\ell+k-\kappa}}_{\rvad}
  \longleftrightarrow
\underbrace{\tilde{\rvax}_{n\ell+k+1-\kappa}^{n\ell+{k}}}_{\rvac}
  \longleftrightarrow
  \underbrace{\tilde{\rvax}_{n\ell+{k}+1}^{\infty}}_{\rvaa},
\fspace \ell \in \Nl_0, 
  k\in\Nl.
\end{align}
Invoking Proposition~\ref{prop:abcd} with $\rvaa,\rvab,\rvac,\rvad$ according to the labeling in~\eqref{eq:MC_for_tilde_y_short} and in~\eqref{eq:MC_markovity_ofx_short}, we readily obtain that
$(\rvab,\rvad)\longleftrightarrow 
\rvac 
\longleftrightarrow 
\rvaa$, implying
\begin{align}
  \tilde{\rvay}_{n\ell+m}^{n\ell + {k}} 
\longleftrightarrow
 \tilde{\rvax}_{n\ell+m+1-\kappa}^{n\ell+{k}} 
\longleftrightarrow
 \tilde{\rvax}_{n\ell+{k}+1}^{\infty},
\fspace  \ell \in \Nl_0,
1\leq m\leq k\in\Nl.
\end{align}
Thus $(\tilde{\rvax}_{1}^{\infty},\tilde{\rvay}_1^{\infty})$ satisfies the causality condition~\eqref{eq:MC_causality_kappa}.
This completes the proof.
\end{proof}

We now state a technical lemma which is akin to~\cite[Theorem~2]{gorpin73} but for one-sided processes, the proof of which can be found in Appendix~\ref{prof:lem_liminf_leq_inflim}.
\begin{lem}\label{lem:liminf_leq_inflim}
Let $\mathring{\rvax}_{1}^{\infty}$ be a stationary one-sided source 
and suppose the distortion-feasible sets   
$\set{\Wsp_{D}^{k,\ell}}_{k\leq\ell\in\Nl}$
and  $\Pspa_{D}^{\infty}$ satisfy Assumption~\ref{assu:Pspa_D_inf_assu} and the stationarity condition
\begin{align}\label{eq:W_D_are_stationary}
\forall k,\ell\in\Nl, t\in\Nl_{0}:\fspace  \Wsp_{D}^{k,k+t} = \Wsp_{D}^{\ell,\ell+t}.
\end{align}
Then
\begin{align}
 R_{c}^{it}(D)\leq \hat{R}_{c}^{it}(D).
\end{align}
\finenunciado
\end{lem}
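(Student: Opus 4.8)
The plan is to bound $R_{c}^{it}(D)$ from above by turning any near-optimal finite block into an admissible infinite process whose mutual-information rate does not exceed the block's normalized mutual information; the machine that performs this conversion is Proposition~\ref{prop:construction}. Thus, for each fixed $n$ I write $r_{n}\eq\inf_{(\breve{\rvax}_{1}^{n},\breve{\rvay}_{1}^{n})\in(\Wsp_{D}^{1,n})\cap(\Csp^{n})}\frac{1}{n}I(\breve{\rvax}_{1}^{n};\breve{\rvay}_{1}^{n})$, so that $\hat{R}_{c}^{it}(D)=\lim_{n\to\infty}r_{n}$, and the goal reduces to showing $R_{c}^{it}(D)\leq r_{n}$ for every $n$.

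First I would fix $n$ and $\epsilon>0$ and pick a causal block $(\breve{\rvax}_{1}^{n},\breve{\rvay}_{1}^{n})\in(\Wsp_{D}^{1,n})\cap(\Csp^{n})$ with $\frac{1}{n}I(\breve{\rvax}_{1}^{n};\breve{\rvay}_{1}^{n})\leq r_{n}+\epsilon$. Since $\Wsp_{D}^{1,n}\subset\Pspa^{1,n}$, the marginal of $\breve{\rvax}_{1}^{n}$ already equals that of the source, so I can graft the conditional law $P_{\breve{\rvay}_{1}^{n}\mid\breve{\rvax}_{1}^{n}}$ onto the full stationary source to produce a pair $(\breve{\rvax}_{1}^{\infty},\breve{\rvay}_{1}^{n})$ with $\breve{\rvax}_{1}^{\infty}=\mathring{\rvax}_{1}^{\infty}$ stationary and the same block mutual information. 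Because $\breve{\rvay}_{1}^{n}$ stays conditionally independent of $\breve{\rvax}_{n+1}^{\infty}$ given $\breve{\rvax}_{1}^{n}$, composing this with the original causality shows that $(\breve{\rvax}_{1}^{\infty},\breve{\rvay}_{1}^{n})$ still satisfies~\eqref{eq:MC_causality_0} for $k=1,\dots,n$, which is precisely the hypothesis of Proposition~\ref{prop:construction}.

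Feeding this into Proposition~\ref{prop:construction} produces $(\tilde{\rvax}_{1}^{\infty},\tilde{\rvay}_{1}^{\infty})$, and I would verify it is an admissible competitor in~\eqref{eq:RcitD_for_infty}. Admissibility of the source marginal is immediate from $\tilde{\rvax}_{1}^{\infty}\sim\mathring{\rvax}_{1}^{\infty}$; membership in $(\Pspa_{D}^{\infty})$ is delivered by part~a) of the proposition, whose hypotheses (Assumption~\ref{assu:Pspa_D_inf_assu} together with the stationarity~\eqref{eq:W_D_are_stationary}, identical to~\eqref{eq:W_D_are_stationary0}) are exactly those assumed in the lemma. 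The one property not stated verbatim in the proposition is the short causality~\eqref{eq:MC_causality_0}, which I would extract from the Markov chains~\eqref{eq:MC_for_tilde_y_general} already established in its proof: writing $m=n\ell+j$ with $1\leq j\leq n$, the completed blocks making up $\tilde{\rvay}_{1}^{n\ell}$ each depend only on their own source block (the cross-block conditional independence of step~ii of the construction), while~\eqref{eq:MC_for_tilde_y_general} makes the partial block $\tilde{\rvay}_{n\ell+1}^{n\ell+j}$ depend only on $\tilde{\rvax}_{n\ell+1}^{n\ell+j}$; splicing these gives $\tilde{\rvay}_{1}^{m}\longleftrightarrow\tilde{\rvax}_{1}^{m}\longleftrightarrow\tilde{\rvax}_{m+1}^{\infty}$ for every $m$. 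Notably this uses no Markovianity of the source, so part~b) of the proposition is not invoked.

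For the rate, inequality~\eqref{eq:Irate_of_block_less_than_sequence_of_blocks0} gives $\frac{1}{\ell n}I(\tilde{\rvax}_{1}^{\ell n};\tilde{\rvay}_{1}^{\ell n})\leq\frac{1}{n}I(\breve{\rvax}_{1}^{n};\breve{\rvay}_{1}^{n})$ for all $\ell$; to reach the genuine rate I would pad an arbitrary horizon $m\in(\ell n,(\ell+1)n]$ upward, using the monotonicity of mutual information (Property~\ref{property:Iabc_larger_than_Iab}) to bound $I(\tilde{\rvax}_{1}^{m};\tilde{\rvay}_{1}^{m})\leq(\ell+1)I(\breve{\rvax}_{1}^{n};\breve{\rvay}_{1}^{n})$, so that the rate satisfies $\bar{I}(\tilde{\rvax}_{1}^{\infty};\tilde{\rvay}_{1}^{\infty})\leq\frac{1}{n}I(\breve{\rvax}_{1}^{n};\breve{\rvay}_{1}^{n})\leq r_{n}+\epsilon$. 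As $(\tilde{\rvax}_{1}^{\infty},\tilde{\rvay}_{1}^{\infty})$ is feasible, this yields $R_{c}^{it}(D)\leq r_{n}+\epsilon$; sending $\epsilon\to0$ and then $n\to\infty$ gives $R_{c}^{it}(D)\leq\lim_{n}r_{n}=\hat{R}_{c}^{it}(D)$. I expect the main obstacle to be the causality verification of the third paragraph---welding the within-block causality of~\eqref{eq:MC_for_tilde_y_general} to the cross-block conditional independence into a single chain~\eqref{eq:MC_causality_0} valid at every index $m$ for a merely stationary source---together with the care required to upgrade the blockwise estimate~\eqref{eq:Irate_of_block_less_than_sequence_of_blocks0} to a bound on the true mutual-information rate.
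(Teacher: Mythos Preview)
Your proposal is correct and follows essentially the same route as the paper's proof: pick a near-optimal causal block, graft its conditional law onto the stationary source, apply Proposition~\ref{prop:construction} to obtain $(\tilde{\rvax}_{1}^{\infty},\tilde{\rvay}_{1}^{\infty})$, verify feasibility via part~a), and control the rate with~\eqref{eq:Irate_of_block_less_than_sequence_of_blocks0} plus the padding argument based on Property~\ref{property:Iabc_larger_than_Iab}. One point worth noting is that the paper's own proof only records $(\tilde{\rvax}_{1}^{\infty},\tilde{\rvay}_{1}^{\infty})\in(\Pspa_{D}^{\infty})$ and does not spell out membership in $(\Csp^{\infty})$; your explicit extraction of the short causality~\eqref{eq:MC_causality_0} from~\eqref{eq:MC_for_tilde_y_general} together with the cross-block conditional independence of step~ii) in the construction (and without invoking part~b) or any Markovianity) is exactly the argument needed, so on that score your write-up is in fact more complete than the paper's.
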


Next, we propose a possible definition of $\Pspa_{D}^{\infty}$ general enough to encompass  the asymptotic single-letter fidelity criteria described by~\eqref{eq:asymptotic_distortion_constraint0}.
For that purpose, 
we need to define 
\begin{align}
 \Pspa^{1,\infty} \eq 
 \Set{P_{\dot{\rvax}_{1}^{\infty},\dot{\rvay}_{1}^{\infty}}:
 P_{\dot{\rvax}_{k}^{\ell},\dot{\rvay}_{k}^{\ell}}\in\Pspa^{k,\ell}, 
 \quad \forall k,\ell\in\Nl, \ 
 k\leq \ell
 }
\end{align}
and require the distortion-feasible sets to satisfy the following assumption:

\begin{assu}\label{assu:DFS_as_ineq_for_k}
 The distortion-feasible sets $\set{\Wsp_{D}^{k,\ell}}_{k\leq \ell\in\Nl}$ can be expressed as
 \begin{align}\label{eq:Wsp_k_l_assu}
  \Wsp_{D}^{k,\ell}= 
  \Set{P_{\rvax_{k}^{\ell},\rvay_{k}^{\ell}}  
  :
  \rho^{k,\ell}(P_{\rvax_{k}^{\ell},\rvay_{k}^{\ell}})\leq D
  }
 \end{align}
for some non-negative distortion maps 
$\rho^{k,\ell} :\Pspa^{k,\ell}\to \Rl_{0}^{+}
$, $k\leq \ell\in\Nl$.
Moreover, 
the distortion-feasible set for one-sided sequences, $\Pspa_{D}^{\infty}$, has the form 
\begin{align}\label{eq:Pdinfty_def}
 \Pspa_{D}^{\infty} = \Set{P_{\rvax_{1}^{\infty},\rvay_{1}^{\infty}} \in\Pspa^{1,\infty}
 :
 \limsup_{k\to\infty}\rho^{1,k}(P_{\rvax_{1}^{k},\rvay_{1}^{k}})\leq D
 }
\end{align}

\finenunciado
\end{assu}
Notice that with such construction, $\Pspa_{D}^{\infty}$ does not necessarily satisfy Assumption~\ref{assu:Pspa_D_inf_assu}.
Also, the distortion-feasible sets $\set{\Wsp_{D}^{k,\ell}}_{k\leq \ell\in\Nl}$ with the specific form given by~\eqref{eq:Wsp_k_l_assu} do not necessarily satisfy the stationarity condition~\eqref{eq:W_D_are_stationary}.

This definition, based on the limit of a sequence of  distortion functions, is clearly capable of representing the general asymptotic single-letter criteria of~\eqref{eq:asymptotic_distortion_constraint0} while satisfying Assumption~\ref{assu:Pspa_D_inf_assu}. 
Recall that, as shown in Lemma~\ref{lem:Wsp_D_cannot_represent_asymptotic_SLDC}, it is not possible to do this with the distortion-feasible set $\Wsp_{D}$ from~\cite{gorpin73}, given by Definition~\ref{def:W_D}. 
In addition, the construction of $\Pspa_{D}^{\infty}$ provided by~\eqref{eq:Pdinfty_def}
allows for several specific criteria commonly found in the literature, such as
the one utilized in~\cite{derost12} and in the definition of a rate-distortion achievable pair in~\cite[p.~306]{covtho06}.

We are now in the position to provide two independent conditions that are sufficient to ensure $ R_{c}^{it}(D)=\hat{R}_{c}^{it}(D)$ (the proof is given in Appendix~\ref{proof:liminf_equals_inflim}).

\begin{lem}\label{lem:liminf_equals_inflim}
Consider the same conditions given in the statement of Lemma~\ref{lem:liminf_leq_inflim}.
If, in addition,  any of  the two following conditions holds
\begin{enumerate}[i)]
 \item For every $D>0$, there exists $N<\infty$ such that, 
 \begin{align}\label{eq:weak_consistencia}
\forall n\geq N:\fspace 
 (\rvax_{1}^{\infty},\rvay_{1}^{\infty})\in (\Pspa_{D}^{\infty} )
 \Longrightarrow
 (\rvax_{1}^{n},\rvay_{1}^{n})\in (\Wsp_{D}^{1,n}) 
 \end{align}
\item $R_{c}^{it}(D)$ is continuous and Assumption~\ref{assu:DFS_as_ineq_for_k} holds,
 \end{enumerate}
then
\begin{align}\label{eq:RcitD_equals_Rcitd_hat}
 R^{it}_{c}(D) = \hat{R}_{c}^{it}(D).
\end{align} 
\finenunciado
\end{lem}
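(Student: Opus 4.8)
The plan is to establish the reverse of the inequality $R_{c}^{it}(D)\leq \hat{R}_{c}^{it}(D)$ already given by Lemma~\ref{lem:liminf_leq_inflim}; combined with that lemma this will yield both the existence of the limit defining $\hat{R}_{c}^{it}(D)$ and the equality~\eqref{eq:RcitD_equals_Rcitd_hat}. For brevity write $R_{n}(D')\eq \inf \frac{1}{n}I(\rvax_{1}^{n};\rvay_{1}^{n})$, the infimum being over $(\rvax_{1}^{n},\rvay_{1}^{n})\in(\Wsp_{D'}^{1,n})\cap(\Csp^{n})$, so that $\hat{R}_{c}^{it}(D)=\lim_{n\to\infty}R_{n}(D)$ and, from (the proof of) Lemma~\ref{lem:liminf_leq_inflim}, $R_{c}^{it}(D)\leq \liminf_{n\to\infty}R_{n}(D)$. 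The common idea is to take an arbitrary $(\rvax_{1}^{\infty},\rvay_{1}^{\infty})\in(\Pspa_{D}^{\infty})\cap(\Csp^{\infty})$ and note that its length-$n$ truncations become feasible for the finite-horizon problems once $n$ is large: the causality chains~\eqref{eq:MC_causality_0} restrict to $(\rvax_{1}^{n},\rvay_{1}^{n})\in(\Csp^{n})$ for every $n$ (conditional independence survives marginalization), and $\rvax_{1}^{n}$ carries the prescribed source marginal by Assumption~\ref{assu:Pspa_D_inf_assu}, so the only point to check is distortion membership.

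Under condition~i) this is immediate: for $n\geq N$ we get $(\rvax_{1}^{n},\rvay_{1}^{n})\in(\Wsp_{D}^{1,n})$, hence $\frac{1}{n}I(\rvax_{1}^{n};\rvay_{1}^{n})\geq R_{n}(D)$ for all large $n$, so $\lim_{n\to\infty}\frac{1}{n}I(\rvax_{1}^{n};\rvay_{1}^{n})\geq \limsup_{n\to\infty}R_{n}(D)$; taking the infimum over all admissible pairs yields $R_{c}^{it}(D)\geq \limsup_{n\to\infty}R_{n}(D)$. Together with $R_{c}^{it}(D)\leq \liminf_{n\to\infty}R_{n}(D)$ this forces $\liminf_{n}R_{n}(D)=\limsup_{n}R_{n}(D)=R_{c}^{it}(D)$, which is~\eqref{eq:RcitD_equals_Rcitd_hat}.

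Under condition~ii) we use Assumption~\ref{assu:DFS_as_ineq_for_k}: since $(\Pspa_{D}^{\infty})$ has the form~\eqref{eq:Pdinfty_def}, $\limsup_{k}\rho^{1,k}(P_{\rvax_{1}^{k},\rvay_{1}^{k}})\leq D$, so for every fixed $\epsilon>0$ and all large $n$ one has $\rho^{1,n}(P_{\rvax_{1}^{n},\rvay_{1}^{n}})\leq D+\epsilon$, i.e.\ $(\rvax_{1}^{n},\rvay_{1}^{n})\in(\Wsp_{D+\epsilon}^{1,n})\cap(\Csp^{n})$ and $\frac{1}{n}I(\rvax_{1}^{n};\rvay_{1}^{n})\geq R_{n}(D+\epsilon)$. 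Arguing as before and taking the infimum over pairs gives, for every $D'>0$ and $\epsilon>0$, $\limsup_{n\to\infty}R_{n}(D'+\epsilon)\leq R_{c}^{it}(D')$ (the bound being vacuous when $(\Pspa_{D'}^{\infty})\cap(\Csp^{\infty})$ is empty). Specializing $D'=D-\eta$, $\epsilon=\eta$ for $\eta\in(0,D)$ gives $\limsup_{n\to\infty}R_{n}(D)\leq R_{c}^{it}(D-\eta)$; letting $\eta\downarrow 0$ and invoking the (left-)continuity of $R_{c}^{it}$ yields $\limsup_{n\to\infty}R_{n}(D)\leq R_{c}^{it}(D)$. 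As in case~i), combining with $R_{c}^{it}(D)\leq \liminf_{n\to\infty}R_{n}(D)$ squeezes $R_{n}(D)$ to the common value $R_{c}^{it}(D)$.

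The routine checks are that a $(\Csp^{\infty})$-pair truncates to a $(\Csp^{n})$-pair with the correct source marginal, and that the hypotheses of Lemma~\ref{lem:liminf_leq_inflim} --- the stationarity condition~\eqref{eq:W_D_are_stationary} and Assumption~\ref{assu:Pspa_D_inf_assu} --- remain in force at the nearby distortion levels used in case~ii); under Assumption~\ref{assu:DFS_as_ineq_for_k} the latter amounts to the distortion maps $\rho^{k,\ell}$ depending only on $\ell-k$. The genuine obstacle, and the only place where the extra hypotheses of this lemma are really used, is case~ii): the $\limsup$-type constraint defining $(\Pspa_{D}^{\infty})$ forces the truncations only into $(\Wsp_{D+\epsilon}^{1,n})$ rather than $(\Wsp_{D}^{1,n})$, and it is exactly the continuity of $R_{c}^{it}$ that lets this distortion slack be sent to $0$ in the limit --- without it the finite-horizon problems could stay strictly harder than the infinite-horizon one.
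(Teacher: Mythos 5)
Your proof is correct and follows essentially the same route as the paper's: both establish the reverse inequality $R_{c}^{it}(D)\geq \hat{R}_{c}^{it}(D)$ by showing that the length-$n$ truncations of a (near-)optimal infinite-horizon pair are eventually feasible for the finite-horizon problems --- directly via~\eqref{eq:weak_consistencia} in case~i), and at distortion level $D$ starting from a pair in $(\Pspa_{D-\eta}^{\infty})$ with the slack absorbed by continuity of $R_{c}^{it}$ in case~ii). Your $\limsup/\liminf$ packaging is just a reformulation of the paper's $\e_{1},\e_{2},\e_{\delta}$ bookkeeping, and your explicit remark that the causality chains survive truncation to $(\Csp^{n})$ is a detail the paper leaves implicit.
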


\subsection{Main Result}
With the above notions, we can state the main result of this section, akin to 
Theorem~\ref{thm:4_gorpin} but for one-sided processes and for the corresponding causality condition given by~\eqref{eq:MC_causality_0} (the proof is presented in Appendix~\ref{proof:QJS_sufficient}):

\begin{thm}\label{thm:QJS_is_sufficient}
 Let the source $\mathring{\rvax}_{1}^{\infty}$ be a one-sided stationary $\kappa$-th order Markovian process 
 and suppose that 
 $\hat{R}^{it}(D)=R_{c}^{it}(D)$, where $R_{c}^{it}(D)$ and $\hat{R}^{it}(D)$ are as defined in~\eqref{eq:RcitD_for_infty} and~\eqref{eq:RcitD_lim_inf}, respectively. 
Furthermore, suppose that the distortion-feasible sets 
$\Pspa_{D}^{\infty}$, 
$\set{\Wsp_{D}^{k,\ell}}_{k\leq\ell\in\Z}$ satisfy Assumption~\ref{assu:Pspa_D_inf_assu} and 
\begin{enumerate}
\item The ``shift-invariance'' condition:
\begin{align}\label{eq:shift_invariant_cond}
\forall L\in\Nl:\fspace 
 (\procx,\procy)\in (\Pspa_{D}^{\infty}) 
 \Longrightarrow
 \frac{1}{L}\Sumfromto{\ell=1}{L} P_{\rvax_{\ell}^{\infty},\rvay_{\ell}^{\infty}}
 \in\Pspa_{D}^{\infty}.
\end{align}

\item  The stationarity condition given by~\eqref{eq:W_D_are_stationary}.

\item The ``first-samples condition'':
There exists a pair of random sequences 
$(\rvax_{1}^{\kappa-1},\dot{\rvay}_{1}^{\kappa-1})\in(\Wsp_{D}^{1,\kappa-1})\cap(\Csp^{\kappa-1})$ 
and such that $I(\rvax_{1}^{\kappa-1};\dot{\rvay}_{1}^{\kappa-1})<\infty$.
\end{enumerate}
Then the minimization in the definition of $R_{c}^{it}(D)$ in~\eqref{eq:RcitD_for_infty} can be restricted to pairs of
processes $(\rvax_{1}^{\infty},\rvay_{1}^{\infty})$ with distributions in
$\Qsp_{\kappa}\cap\Pspa_{D}^{\infty}\cap \Csp^{\infty}$ which, in addition, satisfy 
$(\rvax_{\kappa}^{\infty},\rvay_{\kappa}^{\infty})\in(\Pspa_{D}^{\infty})$.
\finenunciado
\end{thm}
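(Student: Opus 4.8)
The plan is to start from an arbitrary feasible pair $(\rvax_{1}^{\infty},\rvay_{1}^{\infty})\in(\Pspa_{D}^{\infty})\cap(\Csp^{\infty})$ achieving a mutual information rate arbitrarily close to $R_{c}^{it}(D)$, and to construct from it a new feasible pair whose distribution lies in $\Qsp_{\kappa}\cap\Pspa_{D}^{\infty}\cap\Csp^{\infty}$ (with the additional property $(\rvax_{\kappa}^{\infty},\rvay_{\kappa}^{\infty})\in(\Pspa_{D}^{\infty})$) and whose mutual information rate is no larger. Since $R_{c}^{it}(D)=\hat{R}_{c}^{it}(D)$ by hypothesis, it suffices to approximate $\hat{R}_{c}^{it}(D)$, i.e.\ to work with finite-length pairs $(\breve{\rvax}_{1}^{n},\breve{\rvay}_{1}^{n})\in(\Wsp_{D}^{1,n})\cap(\Csp^{n})$ with $\frac1n I(\breve{\rvax}_{1}^{n};\breve{\rvay}_{1}^{n})\le \hat{R}_{c}^{it}(D)+\epsilon$, extend $\breve{\rvax}_{1}^{n}$ to a stationary source $\breve{\rvax}_{1}^{\infty}$ (the given source), and feed this into Proposition~\ref{prop:construction} to obtain a block-periodic pair $(\tilde{\rvax}_{1}^{\infty},\tilde{\rvay}_{1}^{\infty})$. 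By parts~a) and~b) of that proposition (using the stationarity condition~\eqref{eq:W_D_are_stationary} and Assumption~\ref{assu:Pspa_D_inf_assu}), this pair is in $(\Pspa_{D}^{\infty})\cap(\Csp^{\infty})$, its mutual information rate is $\le\frac1n I(\breve{\rvax}_{1}^{n};\breve{\rvay}_{1}^{n})$ by~\eqref{eq:Irate_of_block_less_than_sequence_of_blocks0}, and the Markov structure~\eqref{eq:MC_causality_kappa} holds.

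The next step is to pass from the block-periodic pair to a (quasi-)jointly stationary one. Here I would use the shift-invariance condition~\eqref{eq:shift_invariant_cond}: averaging the shifted distributions $\frac1L\sum_{\ell=1}^{L}P_{\tilde\rvax_{\ell}^{\infty},\tilde\rvay_{\ell}^{\infty}}$ over one period $L=n$ (or taking a Cesàro/subsequence limit as in the Krylov--Bogolyubov construction of an invariant measure) produces a pair whose restriction to times $\ge\kappa$ is genuinely stationary, while the distortion constraint is preserved because $\Pspa_{D}^{\infty}$ is closed under such averaging. Convexity/lower-semicontinuity of mutual information rate under this mixing, together with~\eqref{eq:Irate_of_block_less_than_sequence_of_blocks0}, keeps the rate below $\hat{R}_{c}^{it}(D)+\epsilon$. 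One then has to verify the second defining property of $\Qsp_{\kappa}$, namely $\bar I(\rvax_{1}^{\infty};\rvay_{1}^{\infty})=\bar I(\rvax_{\kappa}^{\infty};\rvay_{\kappa}^{\infty})$; this follows from the Markov chains~\eqref{eq:MC_causality_kappa} (which make the first $\kappa-1$ samples contribute a vanishing per-symbol term to the rate) combined with the finiteness supplied by the ``first-samples condition'', which lets us splice in $(\rvax_{1}^{\kappa-1},\dot{\rvay}_{1}^{\kappa-1})\in(\Wsp_{D}^{1,\kappa-1})\cap(\Csp^{\kappa-1})$ with $I(\rvax_{1}^{\kappa-1};\dot{\rvay}_{1}^{\kappa-1})<\infty$ via part~iii) of Assumption~\ref{assu:Pspa_D_inf_assu}, so that the tail $(\rvax_{\kappa}^{\infty},\rvay_{\kappa}^{\infty})$ itself is distortion-feasible.

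I expect the main obstacle to be the interchange of limits and the control of the transient (first $\kappa-1$) samples: one must simultaneously (a) keep the constructed distribution inside $\Pspa_{D}^{\infty}$ after the stationarizing average — which is exactly what the shift-invariance hypothesis~\eqref{eq:shift_invariant_cond} is designed to guarantee — and (b) ensure that replacing the head of the sequence does not inflate the mutual information rate nor violate causality, which is where the first-samples condition and the causality-propagation Markov chains from Proposition~\ref{prop:construction} do the work. Making the rate bound survive the $L\to\infty$ averaging cleanly will likely require a lower-semicontinuity argument for $\bar I$ under the weak topology on process distributions, or else a direct finite-horizon estimate followed by $\limsup$; I would carry out the finite-horizon estimate first and only invoke semicontinuity if needed, to avoid topological technicalities. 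The rest — checking $(\tilde{\rvax}_{\kappa}^{\infty},\tilde{\rvay}_{\kappa}^{\infty})\in(\Pspa_{D}^{\infty})$ and assembling the final pair with distribution in $\Qsp_{\kappa}\cap\Pspa_{D}^{\infty}\cap\Csp^{\infty}$ — is then bookkeeping.
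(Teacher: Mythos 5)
Your proposal follows essentially the same route as the paper's proof: reduce to a finite-horizon pair via $\hat{R}_{c}^{it}(D)=R_{c}^{it}(D)$, build the block-periodic pair with Proposition~\ref{prop:construction}, stationarize by averaging over one period of shifts (the paper realizes this average as a uniform random shift $\rvat$ and bounds $I(\bar{\rvax}_{1}^{m};\bar{\rvay}_{1}^{m})\leq I(\bar{\rvax}_{1}^{m};\bar{\rvay}_{1}^{m}|\rvat)$, i.e.\ exactly your ``direct finite-horizon estimate'' rather than any semicontinuity argument), and splice in a head via the first-samples condition to restore causality and obtain a $\kappa$-QJS pair. The only bookkeeping detail you should make explicit is that the shift must be taken by an extra full period (the paper uses $k+n+\rvat$) so that the $\kappa-1$ ``past'' samples $\bar{\rvax}_{-\kappa+2}^{0}$ exist inside the constructed process and can be identified with the head $\hat{\rvax}_{1}^{\kappa-1}$; otherwise the causality chains~\eqref{eq:MC_causality_kappa} cannot be composed with the spliced first samples.
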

\vspace{2mm}

Putting aside the obvious difference between Theorem~\ref{thm:QJS_is_sufficient} and Theorem~\ref{thm:4_gorpin} arising from the fact that the former considers as sources one-sided processes and the latter two-sided processes, it is worth drawing a parallel between these two theorems.
The requirement of Assumption~\ref{assu:Pspa_D_inf_assu} in Theorem~\ref{thm:QJS_is_sufficient} is weaker than the requirement of $\Wsp_{D}$ to conform to Definition~\ref{def:W_D} in  Theorem~\ref{thm:4_gorpin}.
Thus, Theorem~\ref{thm:QJS_is_sufficient} holds for a larger class of fidelity criteria.
The assumption that $\hat{R}^{it}(D)=R_{c}^{it}(D)$ can be seen as the equivalent of condition~iv) in Theorem~\ref{thm:4_gorpin} translated to the setting of one-sided sequences.
The same is true with condition~2) in Theorem~\ref{thm:QJS_is_sufficient} with respect to condition~ii) in Theorem~\ref{thm:4_gorpin}.
However, no conditions are stated in~\cite{gorpin73} which suffice for condition~iv) in Theorem~\ref{thm:4_gorpin} to hold.
In contrast, we have provided Lemma~\ref{lem:liminf_equals_inflim}, which gives two independent conditions under which $\hat{R}^{it}(D)=R_{c}^{it}(D)$ is satisfied.  
The other assumptions in Theorem~\ref{thm:QJS_is_sufficient} differ from those in Theorem~\ref{thm:4_gorpin}.
Condition 1) in Theorem~\ref{thm:QJS_is_sufficient} is weaker than condition~v) in Theorem~\ref{thm:4_gorpin}.
Condition 3) in Theorem~\ref{thm:QJS_is_sufficient} is absent in Theorem~\ref{thm:4_gorpin}, and is required in our proof as a consequence of the transient behavior arising from treating one-sided processes (see Theorem~\ref{thm:conditions_forJS_and_causal} in Section~\ref{sec:joint_stat_and_causality}). 

\begin{rem}
 Among the distortion criteria which satisfy the conditions of Theorem~\ref{thm:QJS_is_sufficient}, we find the family of asymptotic single-letter constraints of~\eqref{eq:asymptotic_distortion_constraint0} (by letting 
 $
 \Wsp_{D}^{k,\ell} = 
 \set{ P_{\dot{\rvax}_{k}^{\ell},\dot{\rvay}_{k}^{\ell}}: 
 P_{\dot{\rvax}_{k}^{\ell}}=P_{\rvax_{k}^{\ell}},
 (\ell-k+1)^{-1}\sumfromto{i=k}{\ell}\rho(P_{\rvax(i),\rvay(i)})\leq D}
 $).
 Recall that this class of distortion criteria cannot be expressed using a distortion-feasible set $\Wsp_{D}$ conforming to Definition~\ref{def:W_D}, and hence it is not covered by Theorem~\ref{thm:4_gorpin}.
 \finenunciado
\end{rem}

As pointed out by Remark~\ref{rem:strong_causality} in the Introduction, if now one supposes that $\mathring{\rvax}_{1}^{\infty}$ is the positive-time part of a two-sided stationary process $\mathring{\rvax}_{-\infty}^{\infty}$, then the fact that
$(\mathring{\rvax}_{1}^{\infty},\rvay_{1}^{\infty})\in (\Csp^{\infty})$ does not guarantee that 
$\rvay_{1}^{\infty}$ depends causally on $\mathring{\rvax}_{1}^{\infty}$.
For the latter to hold in this situation, it is required that $(\mathring{\rvax}_{-\infty}^{\infty},\rvay_{1}^{\infty})$ satisfies~\eqref{eq:MC_causality_hybrid}.
This implies that for this case, the definition of the causal IRDF as stated in~\eqref{eq:RcitD_for_infty} needs to be extended to 
\begin{align}
  R_{c\text{(strong)}}^{it}(D) \eq \inf_{
  \subalign{ (\rvax_{-\infty}^{\infty},\,\rvay_{1}^{\infty})\, :\, &  \rvax_{-\infty}^{\infty}\sim \,\mathring{\rvax}_{-\infty}^{\infty} \\
  & (\rvax_{-\infty}^{\infty},\,\rvay_{1}^{\infty}) \text{ satisfies \eqref{eq:MC_causality_hybrid}}\\
  & (\rvax_{1}^{\infty},\rvay_{1}^{\infty})\in (\Pspa_{D}^{\infty})
  }
  }
  \lim_{n\to\infty}
  \frac{1}{n}I(\rvax_{1}^{n};\rvay_{1}^{n}).
  \label{eq:Causal_RDF_strong}
\end{align}
Notice that when the source lacks a negative-time part,~\eqref{eq:MC_causality_hybrid} simplifies to~\eqref{eq:MC_causality_0}, and thus $R_{c\text{(strong)}}^{it}(D)$ becomes equal to $R_{c}^{it}(D)$.

The above observations raise the question of whether Theorem~\ref{thm:QJS_is_sufficient} can be extended for the case in which the one-sided source is the positive-time part of a two-sided stationary process.
This implies considering~$R_{c\text{(strong)}}^{it}(D)$ instead of $R_{c}^{it}(D)$, or, equivalently, the strong causality constraint~\eqref{eq:MC_causality_hybrid} instead of the short causality constraint~\eqref{eq:MC_causality_0}.

It turns out that Theorem~\ref{thm:QJS_is_sufficient} can indeed be extended for this situation, thanks to the 
following proposition, the proof of which can be found in Section~\ref{proof:of_prop:there_exits_ybar_strongly_causal}.

\begin{prop}\label{prop:there_exits_ybar_strongly_causal}
Suppose that $\rvax_{1}^{\infty}$ is the positive-time part of a two-sided process $\rvax_{-\infty}^{\infty}$ and that 
$(\rvax_{1}^{\infty},\rvay_{1}^{\infty})$ satisfies the one-sided causality condition~\eqref{eq:MC_causality_0}, i.e.,
\begin{align}\label{eq:short_causality}
 \rvax_{k+1}^{\infty} \longleftrightarrow
  \rvax_{1}^{k}
  \longleftrightarrow
  \rvay_{1}^{k},\fspace k\in\Nl.
\end{align}  
Then there exists (or, equivalently, one can construct) a one-sided random process $\bar{\rvay}_{1}^{\infty}$ 
such that 
\begin{align}
  (\rvax_{1}^{\infty},\bar{\rvay}_{1}^{\infty})
  \sim 
  (\rvax_{1}^{\infty}, \rvay_{1}^{\infty})\label{eq:likewise}
  \\
  (\rvax_{-\infty}^{0},\rvax_{k+1}^{\infty})
  \longleftrightarrow
  \rvax_{1}^{k}
  \longleftrightarrow
  \bar{\rvay}_{1}^{k}\label{eq:strong_causality_ybar}
\end{align}
\finenunciado
\end{prop}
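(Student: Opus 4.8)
The plan is to construct $\bar{\rvay}_{1}^{\infty}$ by ``gluing'' the given conditional law of $\rvay_{1}^{\infty}$ given $\rvax_{1}^{\infty}$ onto the law of the full two-sided source, in the spirit of Fact~\ref{fact:abc_and_barabc}. Concretely, let $P_{\rvay_{1}^{\infty}\mid\rvax_{1}^{\infty}}$ denote a regular conditional distribution of $\rvay_{1}^{\infty}$ given $\rvax_{1}^{\infty}$ (which exists under the usual standard-Borel assumptions on the alphabets), and define the joint law of $(\rvax_{-\infty}^{\infty},\bar{\rvay}_{1}^{\infty})$ by keeping the given law $P_{\rvax_{-\infty}^{\infty}}$ of the two-sided source and setting the conditional law of $\bar{\rvay}_{1}^{\infty}$ given $\rvax_{-\infty}^{\infty}$ equal to $P_{\rvay_{1}^{\infty}\mid\rvax_{1}^{\infty}}$, evaluated at the positive-time part of $\rvax_{-\infty}^{\infty}$. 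Since this kernel depends on $\rvax_{-\infty}^{\infty}$ only through $\rvax_{1}^{\infty}$, the conditional law of $\bar{\rvay}_{1}^{\infty}$ given the $\sigma$-algebra of $\rvax_{-\infty}^{\infty}$ coincides with its conditional law given the $\sigma$-algebra of $\rvax_{1}^{\infty}$; that is, $\bar{\rvay}_{1}^{\infty}\longleftrightarrow\rvax_{1}^{\infty}\longleftrightarrow\rvax_{-\infty}^{0}$, and in particular $\rvax_{-\infty}^{0}\longleftrightarrow\rvax_{1}^{\infty}\longleftrightarrow\bar{\rvay}_{1}^{k}$ for every $k\in\Nl$.

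Next I would verify~\eqref{eq:likewise}. Marginalizing the joint law of $(\rvax_{-\infty}^{\infty},\bar{\rvay}_{1}^{\infty})$ over the negative-time coordinates $\rvax_{-\infty}^{0}$ and using that the positive-time part of $\rvax_{-\infty}^{\infty}$ is precisely the source $\rvax_{1}^{\infty}$, one recovers the law $P_{\rvax_{1}^{\infty}}$ composed with the kernel $P_{\rvay_{1}^{\infty}\mid\rvax_{1}^{\infty}}$, which is exactly $P_{\rvax_{1}^{\infty},\rvay_{1}^{\infty}}$. Hence $(\rvax_{1}^{\infty},\bar{\rvay}_{1}^{\infty})\sim(\rvax_{1}^{\infty},\rvay_{1}^{\infty})$, which is~\eqref{eq:likewise}.

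It then remains to establish the strong-causality chains~\eqref{eq:strong_causality_ybar}. Two ingredients are needed. First, from the construction, $\rvax_{-\infty}^{0}\longleftrightarrow\rvax_{1}^{\infty}\longleftrightarrow\bar{\rvay}_{1}^{k}$, equivalently $\rvax_{-\infty}^{0}\longleftrightarrow(\rvax_{1}^{k},\rvax_{k+1}^{\infty})\longleftrightarrow\bar{\rvay}_{1}^{k}$. Second, since $(\rvax_{1}^{\infty},\bar{\rvay}_{1}^{\infty})$ has the same joint distribution as $(\rvax_{1}^{\infty},\rvay_{1}^{\infty})$, and the latter satisfies the short-causality condition~\eqref{eq:short_causality}, the pair $(\rvax_{1}^{\infty},\bar{\rvay}_{1}^{\infty})$ satisfies it too, i.e.\ $\rvax_{k+1}^{\infty}\longleftrightarrow\rvax_{1}^{k}\longleftrightarrow\bar{\rvay}_{1}^{k}$ for all $k\in\Nl$ (conditional independence is a property of the joint law). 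Combining these two Markov chains through Proposition~\ref{prop:abcd} (with the labeling $\rvab=\rvax_{-\infty}^{0}$, $\rvac=\rvax_{1}^{k}$, $\rvad=\rvax_{k+1}^{\infty}$, $\rvaa=\bar{\rvay}_{1}^{k}$) yields $(\rvax_{-\infty}^{0},\rvax_{k+1}^{\infty})\longleftrightarrow\rvax_{1}^{k}\longleftrightarrow\bar{\rvay}_{1}^{k}$ for every $k\in\Nl$, which is~\eqref{eq:strong_causality_ybar} and completes the argument.

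The only delicate points are measure-theoretic rather than combinatorial: the existence of the regular conditional probability $P_{\rvay_{1}^{\infty}\mid\rvax_{1}^{\infty}}$ and of the resulting glued measure on the two-sided space, and the handling of conditional independence across the infinite-dimensional blocks $\rvax_{-\infty}^{0}$ and $\rvax_{k+1}^{\infty}$. The former is routine under standard-Borel assumptions on the alphabets; the latter is cleanest if one argues throughout in terms of conditional-independence (Markov-chain) statements and their stability under Proposition~\ref{prop:abcd}, thereby sidestepping the subtraction-defined conditional mutual information, which could be an indeterminate $\infty-\infty$ for continuous-valued sources. I expect this bookkeeping, not any conceptual hurdle, to be the main thing to be careful about.
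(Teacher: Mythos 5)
Your proposal is correct and follows essentially the same route as the paper: the paper invokes Fact~\ref{fact:abc_and_barabc} to produce $\bar{\rvay}_{1}^{\infty}$ with $(\rvax_{1}^{\infty},\bar{\rvay}_{1}^{\infty})\sim(\rvax_{1}^{\infty},\rvay_{1}^{\infty})$ and $\rvax_{-\infty}^{0}\longleftrightarrow\rvax_{1}^{\infty}\longleftrightarrow\bar{\rvay}_{1}^{\infty}$ (your gluing construction is just an explicit instantiation of that fact), and then combines the inherited short-causality chain with this conditional-independence property via Proposition~\ref{prop:abcd} under exactly the same labeling you chose.
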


Thanks to Proposition~\ref{prop:there_exits_ybar_strongly_causal}, we have the following extension of Theorem~\ref{thm:QJS_is_sufficient}.

\begin{thm}[Extension of Theorem~\eqref{thm:QJS_is_sufficient}]\label{thm:QJS_is_sufficient_strong}
 Let the source $\mathring{\rvax}_{1}^{\infty}$ be the positive-time part of a two-sided stationary $\kappa$-th order Markovian process~$\mathring{\rvax}_{-\infty}^{\infty}$.
 Under the same assumptions made in the statement of Theorem~\ref{thm:QJS_is_sufficient}, the infimization in the definition of $R_{c}^{it}(D)$ in~\eqref{eq:Causal_RDF_strong} can be restricted to pairs of processes 
 $(\rvax_{-\infty}^{\infty},\rvay_{1}^{\infty})$ which satisfy~\eqref{eq:MC_causality_hybrid} and such that  
 $(\rvax_{1}^{\infty},\rvay_{1}^{\infty})\in(\Qsp_{\kappa}\cap \Pspa_{D}^{\infty})$ and 
 $(\rvax_{\kappa}^{\infty},\rvay_{\kappa}^{\infty})\in (\Pspa_{D}^{\infty})$.
 Moreover, $R_{c\text{(strong)}}^{it}(D)= R_{c}^{it}(D)$. 
 \finenunciado
\end{thm}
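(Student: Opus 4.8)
The plan is to deduce Theorem~\ref{thm:QJS_is_sufficient_strong} from Theorem~\ref{thm:QJS_is_sufficient} and Proposition~\ref{prop:there_exits_ybar_strongly_causal} in two stages: first establishing the identity $R_{c\text{(strong)}}^{it}(D)=R_{c}^{it}(D)$, and then transporting the restriction-of-the-search conclusion from $R_{c}^{it}(D)$ to $R_{c\text{(strong)}}^{it}(D)$. The one structural remark that makes both stages go through is that every ingredient of the strong-causality problem \emph{other than} the causality chains~\eqref{eq:MC_causality_hybrid}---namely the objective $\lim_{n\to\infty}\tfrac1n I(\rvax_{1}^{n};\rvay_{1}^{n})$, membership of the one-sided pair (and of its index-$\kappa$ shift) in $\Pspa_{D}^{\infty}$, and membership in $\Qsp_{\kappa}$ (Definition~\ref{def:quasi_jointly_stat})---depends on $(\rvax_{1}^{\infty},\rvay_{1}^{\infty})$ only through its joint law. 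Hence any modification of $\rvay_{1}^{\infty}$ that preserves the joint law of $(\rvax_{1}^{\infty},\rvay_{1}^{\infty})$ leaves feasibility and cost intact.

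First I would prove $R_{c}^{it}(D)\le R_{c\text{(strong)}}^{it}(D)$: given any $(\rvax_{-\infty}^{\infty},\rvay_{1}^{\infty})$ feasible in~\eqref{eq:Causal_RDF_strong}, its positive-time part $(\rvax_{1}^{\infty},\rvay_{1}^{\infty})$ satisfies~\eqref{eq:MC_causality_0} (implied by~\eqref{eq:MC_causality_hybrid}), has $\rvax_{1}^{\infty}$ with the prescribed source law, and lies in $(\Pspa_{D}^{\infty})$, so it is feasible in~\eqref{eq:RcitD_for_infty} with the same objective value. For the reverse inequality I would take any $(\rvax_{1}^{\infty},\rvay_{1}^{\infty})$ feasible in~\eqref{eq:RcitD_for_infty} and invoke Proposition~\ref{prop:there_exits_ybar_strongly_causal} (applicable because here $\mathring{\rvax}_{1}^{\infty}$ is the positive-time part of $\mathring{\rvax}_{-\infty}^{\infty}$) to obtain $\bar{\rvay}_{1}^{\infty}$ with $(\rvax_{1}^{\infty},\bar{\rvay}_{1}^{\infty})\sim(\rvax_{1}^{\infty},\rvay_{1}^{\infty})$ by~\eqref{eq:likewise} and satisfying~\eqref{eq:strong_causality_ybar} relative to some $\rvax_{-\infty}^{\infty}\sim\mathring{\rvax}_{-\infty}^{\infty}$. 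Then $(\rvax_{-\infty}^{\infty},\bar{\rvay}_{1}^{\infty})$ is feasible in~\eqref{eq:Causal_RDF_strong}, and $I(\rvax_{1}^{n};\bar{\rvay}_{1}^{n})=I(\rvax_{1}^{n};\rvay_{1}^{n})$ for every $n$, so it attains the same objective (in particular the limit exists). Taking infima yields the equality $R_{c\text{(strong)}}^{it}(D)=R_{c}^{it}(D)$.

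Next I would fix $\varepsilon>0$ and apply Theorem~\ref{thm:QJS_is_sufficient} (its hypotheses being exactly those assumed here) to produce a pair $(\rvax_{1}^{\infty},\rvay_{1}^{\infty})$ with joint distribution in $\Qsp_{\kappa}\cap\Pspa_{D}^{\infty}\cap\Csp^{\infty}$, with $(\rvax_{\kappa}^{\infty},\rvay_{\kappa}^{\infty})\in(\Pspa_{D}^{\infty})$, and with rate at most $R_{c}^{it}(D)+\varepsilon$. Applying Proposition~\ref{prop:there_exits_ybar_strongly_causal} to this pair, I replace $\rvay_{1}^{\infty}$ by $\bar{\rvay}_{1}^{\infty}$ and augment the source to $\rvax_{-\infty}^{\infty}\sim\mathring{\rvax}_{-\infty}^{\infty}$; the resulting $(\rvax_{-\infty}^{\infty},\bar{\rvay}_{1}^{\infty})$ satisfies~\eqref{eq:MC_causality_hybrid} by~\eqref{eq:strong_causality_ybar}, and, since $(\rvax_{1}^{\infty},\bar{\rvay}_{1}^{\infty})\sim(\rvax_{1}^{\infty},\rvay_{1}^{\infty})$, it inherits $(\rvax_{1}^{\infty},\bar{\rvay}_{1}^{\infty})\in(\Qsp_{\kappa}\cap\Pspa_{D}^{\infty})$, $(\rvax_{\kappa}^{\infty},\bar{\rvay}_{\kappa}^{\infty})\in(\Pspa_{D}^{\infty})$, and the same rate $\le R_{c}^{it}(D)+\varepsilon=R_{c\text{(strong)}}^{it}(D)+\varepsilon$ (the limit defining the rate exists because $(\rvax_{\kappa}^{\infty},\bar{\rvay}_{\kappa}^{\infty})$ is jointly stationary and, by the defining $\Qsp_{\kappa}$ identity, that limit coincides with the one over the index-$\kappa$ block). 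Letting $\varepsilon\downarrow 0$ shows that the infimum in~\eqref{eq:Causal_RDF_strong} restricted to pairs of the stated form equals $R_{c\text{(strong)}}^{it}(D)$, which together with the identity above gives the theorem. I expect the genuinely hard work to be entirely contained in Proposition~\ref{prop:there_exits_ybar_strongly_causal} (constructing a strongly causal $\bar{\rvay}_{1}^{\infty}$ with the same one-sided joint law) and in Theorem~\ref{thm:QJS_is_sufficient}; granted those, the only step that needs care is the joint-law-only dependence of the constraints noted at the outset, i.e.\ checking that replacing $\rvay_{1}^{\infty}$ by an equidistributed $\bar{\rvay}_{1}^{\infty}$ disturbs nothing except supplying the causality chains, which Proposition~\ref{prop:there_exits_ybar_strongly_causal} provides.
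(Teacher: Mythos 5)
Your proposal is correct and follows essentially the same route as the paper: one inequality comes from the fact that the strong causality chains~\eqref{eq:MC_causality_hybrid} imply the short ones~\eqref{eq:MC_causality_0}, and the reverse inequality together with the restriction claim comes from combining Theorem~\ref{thm:QJS_is_sufficient} with Proposition~\ref{prop:there_exits_ybar_strongly_causal}, using the observation that feasibility in $\Pspa_{D}^{\infty}$, membership in $\Qsp_{\kappa}$, and the rate all depend only on the joint law of $(\rvax_{1}^{\infty},\rvay_{1}^{\infty})$. The only (immaterial) difference is organizational: you establish the equality first by applying Proposition~\ref{prop:there_exits_ybar_strongly_causal} to arbitrary feasible pairs and then transport the restriction, whereas the paper derives both conclusions in one pass by applying the proposition only to the near-optimal $\kappa$-QJS pairs supplied by Theorem~\ref{thm:QJS_is_sufficient}.
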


\begin{proof}
 The only difference between the RHS of~\eqref{eq:RcitD_for_infty} and~\eqref{eq:Causal_RDF_strong} is that they consider the causality constraints~\eqref{eq:MC_causality_0} and~\eqref{eq:MC_causality_hybrid}, respectively. 
 First, notice that 
 \begin{align}
  (\rvax_{-\infty}^{\infty},\rvay_{1}^{\infty}) \text{ satisfies~\eqref{eq:MC_causality_hybrid} } 
  \Longrightarrow 
  (\rvax_{1}^{\infty},\rvay_{1}^{\infty}) \text{ satisfies~\eqref{eq:MC_causality_0} }
 \end{align}
and thus $R_{c\text{(strong)}}^{it}(D)\geq R_{c}^{it}(D)$.
On the other hand, from Theorem~\ref{thm:QJS_is_sufficient}, the infimization yielding $R_{c}^{it}(D)$ can be carried out considering only pairs of processes 
$(\rvax_{1}^{\infty},\rvay_{1}^{\infty})$ satisfying~\eqref{eq:MC_causality_0} and
$(\rvax_{1}^{\infty},\rvay_{1}^{\infty})\in(\Pspa_{D}^{\infty}\cap \Qsp_{\kappa})$, 
$(\rvax_{\kappa}^{\infty},\rvay_{\kappa}^{\infty})\in(\Pspa_{D}^{\infty})$.
But
as a consequence of Proposition~\ref{prop:there_exits_ybar_strongly_causal}, for every such 
$(\rvax_{1}^{\infty},\rvay_{1}^{\infty})$,
there exists a process $\bar{\rvay}_{1}^{\infty}$ such that  
$(\rvax_{-\infty}^{\infty},\bar{\rvay}_{1}^{\infty})$ satisfies~\eqref{eq:MC_causality_hybrid} and 
 \begin{align}
 (\rvax_{1}^{\infty},\bar{\rvay}_{1}^{\infty})&\in(\Pspa_{D}^{\infty}\cap \Qsp_{\kappa}), \label{eq:inPD_Qk}\\
 (\rvax_{\kappa}^{\infty},\bar{\rvay}_{\kappa}^{\infty})&\in(\Pspa_{D}^{\infty}),\label{eq:inPD}\\
 \bar{I}(\rvax_{1}^{\infty};\bar{\rvay}_{1}^{\infty}) &= \bar{I}(\rvax_{1}^{\infty}; \rvay_{1}^{\infty}).
\end{align}
This implies that the minimization associated with $R_{c\text{(strong)}}^{it}(D)$ can be restricted to pairs $(\rvax_{-\infty}^{\infty},\bar{\rvay}_{1}^{\infty})$ satisfying~\eqref{eq:MC_causality_hybrid},~\eqref{eq:inPD_Qk} and~\eqref{eq:inPD} (proving the first claim of the theorem) and 
that  
$R_{c\text{(strong)}}^{it}(D)\leq R_{c}^{it}(D)$.
The latter and the reverse inequality confirms that 
$R_{c\text{(strong)}}^{it}(D)= R_{c}^{it}(D)$, concluding the proof.
\end{proof}

\subsection{Correspondence Between $R_{c}^{it}(D)$ and $\overline{R}_{c}^{it}(D)$}
The purpose of this section is to establish a correspondence between $R_{c}^{it}(D)$ and $\overline{R}_{c}^{it}(D)$ (introduced in~\cite{derost12}).
As we discuss next, drawing an appropriate comparison between these two causal IRDFs requires two modifications to the definition of  $\overline{R}_{c}^{it}(D)$ already described on page~\pageref{eq:overlineRcitD_def}. 

The first modification consists of extending 
$\overline{R}_{c}^{it}(D)$ to account for arbitrary fidelity criteria embodied in an arbitrary distortion-feasible set 
$\Pspa_{D}^{\infty}\subset \Pspa^{1,\infty}$.

The second modification is necessary in order to make $\overline{R}_{c}^{it}(D)$ a tighter lower bound to the corresponding infimal operational data rate.
To see why, it is necessary to recall how 
$\bar{I}(\rvax_{1}^{\infty};\rvay_{1}^{\infty})$
lower bounds the operational data rate of encoding $\rvax_{1}^{\infty}$ and decoding it as $\rvay_{1}^{\infty}$.
For this purpose, let $\rveb(k)$ be the random binary sequence produced by the encoder from time $1$ to $k$ and let $\abs{\rveb(k)}$ be the length of $\rveb(k)$ (in bits).
Since the code must be uniquely decodable%
\footnote{If zero-delay operation is required, it must be an instantaneous code, which is a sub-class of uniquely decodable codes~\cite[\S~5.1]{covtho06}.}%
, the bit-string $\rveb(k)$ satisfies the Kraft inequality~\cite[\S~5.1]{covtho06}. 
In general, $\rveb(k)$ can be generated by using not only $\rvax_{1}^{k}$ but also  $\rvax_{-\infty}^{0}$,
 and thus
\begin{align}
 \Expe{\abs{\rveb(k)} }
 \overset{(a)}{\geq }
 H(\rveb(k))
 =
 I(\rveb(k);\rveb(k))
 \overset{(b)}{\geq }
 I(\rvax_{-\infty}^{k};\rvay_{1}^{k})
 \overset{(\textbf{P1})}{\geq }
 I(\rvax_{1}^{k};\rvay_{1}^{k}),
\end{align}
where~$(a)$ follows from~\cite[Theorem 5.3.1]{covtho06} and~$(b)$ is a consequence of the data-processing inequality~\cite[Theorem 2.8.1]{covtho06}.
Thus, 
$I(\rvax_{1}^{k};\rvay_{1}^{k})$ lower bounds the operational data rate 
$k^{-1} \Expe{\abs{\rveb(k)} }$
as tightly as 
 $k^{-1}I(\rvax_{-\infty}^{k};\rvay_{1}^{k})$ if and only if 
\begin{align}\label{eq:MC_only_from_1}
  \rvay_{1}^{k}
  \longleftrightarrow
  \rvax_{1}^{k}
  \longleftrightarrow
  \rvax_{-\infty}^{0}.
\end{align}
This Markov chain, combined with the causality constraint~\eqref{eq:MC_causality_2sdd} 
\begin{align}
  \rvay_{1}^{k}
  \longleftrightarrow
  \rvax_{-\infty}^{k}
  \longleftrightarrow
  \rvax_{k+1}^{\infty},
\end{align}
implies from Proposition~\ref{prop:abcd} (in the Appendix) that 
\begin{align}
  \rvay_{1}^{k}
  \longleftrightarrow
  \rvax_{1}^{k}
  \longleftrightarrow
  \rvax_{k+1}^{\infty},
\end{align}
which is precisely the causality constraint for one-sided sources~\eqref{eq:MC_causality_0}.
But, as we have shown in theorems~\ref{thm:conditions_forJS_and_causal} and~\ref{thm:stationarity_and_causality_Gaussian}, such causality constraint is, in general, incompatible with the joint stationarity of 
$(\rvax_{1}^{\infty},\rvay_{1}^{\infty})$.
As a consequence, since such joint-stationarity is required by $\overline{R}_{c}^{it}(D)$, Markov chain~\eqref{eq:MC_only_from_1} cannot hold.
This means that when the causality constraint for a two-sided source established by~\eqref{eq:MC_causality_2sdd} is imposed,
$
  \lim_{\ell,n\to\infty}\frac{1}{n}I(\rvax_{-\ell}^{n};\rvay_{1}^{n})
$
is a tighter lower bound to the operational data rate than $\bar{I}(\rvax_{1}^{\infty};\rvay_{1}^{\infty})$.

Following these observations, we propose here the following modified definition of $\overline{R}_{c}^{it}(D)$.
\begin{defn}[An Improved and More General Definition of $\overline{R}_{c}^{it}(D)$]\label{def:Overline_RCitd_redef}
For any given $\rvax_{-\infty}^{\infty}$ two-sided stationary  source, redefine the  causal stationary IRDF introduced in~\cite[Definition~6]{derost12} as
\begin{align}
 \overline{R}_{c}^{it}(D) \eq 
 \inf_{
\subalign{ 
 (\rvax_{-\infty}^{\infty},\rvay_{1}^{\infty}): 
 &(\rvax_{1}^{\infty},\rvay_{1}^{\infty})
 \in
 (\Pspa_{D}^{\infty}) \cap 
 (\Qsp_{1})
 \\
 &(\rvax_{-\infty}^{\infty},\rvay_{1}^{\infty})
 \in(\Csp_{-\infty}^{\infty})
 }
}
  \lim_{\ell\to-\infty} \lim_{n\to\infty}\frac{1}{n}I(\rvax_{\ell}^{n};\rvay_{1}^{n})
  \label{eq:Overline_RCitd_redef}
\end{align}
where 
$(\Qsp_{1})$ is the set of all pairs one-sided jointly stationary random processes, and 
$(\Csp_{-\infty}^{\infty})$ is the set of all pairs of random processes $(\rvax_{-\infty}^{\infty},\rvay_{1}^{\infty})$ which satisfy the causality constraint~\eqref{eq:MC_causality_2sdd} .
\end{defn}

We can now state the following corollary of Theorem~\ref{thm:QJS_is_sufficient}, the proof of which can be found in Appendix~\ref{proof:coro_overlineRcitD_equals_RcitD}:
\begin{coro}\label{coro:overlineRcitD_equals_RcitD}
 Under the same assumptions of Theorem~\ref{thm:QJS_is_sufficient}, it holds that 
 \begin{align}
  \overline{R}_{c}^{it}(D)=R_{c}^{it}(D).
 \end{align}
\finenunciado
\end{coro}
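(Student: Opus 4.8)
The plan is to prove Corollary~\ref{coro:overlineRcitD_equals_RcitD} by establishing two inequalities, $\overline{R}_{c}^{it}(D)\geq R_{c}^{it}(D)$ and $\overline{R}_{c}^{it}(D)\leq R_{c}^{it}(D)$, leveraging Theorem~\ref{thm:QJS_is_sufficient} and Proposition~\ref{prop:there_exits_ybar_strongly_causal}. First I would unpack the two definitions: $R_{c}^{it}(D)$ from~\eqref{eq:RcitD_for_infty} is an infimum of $\bar I(\rvax_1^\infty;\rvay_1^\infty)$ over $(\Pspa_D^\infty)\cap(\Csp^\infty)$, while $\overline{R}_{c}^{it}(D)$ from~\eqref{eq:Overline_RCitd_redef} is an infimum of $\lim_{\ell\to-\infty}\lim_{n\to\infty}\frac1n I(\rvax_\ell^n;\rvay_1^n)$ over the (more restrictive) set of pairs $(\rvax_{-\infty}^\infty,\rvay_1^\infty)$ with $(\rvax_1^\infty,\rvay_1^\infty)\in(\Pspa_D^\infty)\cap(\Qsp_1)$ and $(\rvax_{-\infty}^\infty,\rvay_1^\infty)\in(\Csp_{-\infty}^\infty)$. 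The key structural observation is that the long causality constraint~\eqref{eq:MC_causality_2sdd} together with joint stationarity of $(\rvax_1^\infty,\rvay_1^\infty)$ is, by Theorems~\ref{thm:conditions_forJS_and_causal} and~\ref{thm:stationarity_and_causality_Gaussian}, extremely restrictive; but I do not need to exploit that restrictiveness directly — I only need to compare the two optimization problems.

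For the direction $\overline{R}_{c}^{it}(D)\leq R_{c}^{it}(D)$: by Theorem~\ref{thm:QJS_is_sufficient}, the infimum defining $R_{c}^{it}(D)$ can be restricted to pairs $(\rvax_1^\infty,\rvay_1^\infty)$ with distribution in $\Qsp_\kappa\cap\Pspa_D^\infty\cap\Csp^\infty$ satisfying additionally $(\rvax_\kappa^\infty,\rvay_\kappa^\infty)\in(\Pspa_D^\infty)$. For such a pair I would invoke Proposition~\ref{prop:there_exits_ybar_strongly_causal} to obtain $\bar{\rvay}_1^\infty$ with $(\rvax_1^\infty,\bar{\rvay}_1^\infty)\sim(\rvax_1^\infty,\rvay_1^\infty)$ and satisfying the strong causality chain~\eqref{eq:strong_causality_ybar}, hence in particular the long causality constraint~\eqref{eq:MC_causality_2sdd}, so $(\rvax_{-\infty}^\infty,\bar{\rvay}_1^\infty)\in(\Csp_{-\infty}^\infty)$. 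Because the joint law of $(\rvax_1^\infty,\bar{\rvay}_1^\infty)$ is unchanged, we keep membership in $\Qsp_1$ (note $\Qsp_\kappa\subset\Qsp_1$ is not automatic — here I would use instead that the pair is actually jointly stationary in the relevant sense; if $\kappa>1$ one needs to argue that the restriction in Theorem~\ref{thm:QJS_is_sufficient} can be further strengthened to $\Qsp_1$, or alternatively that $\overline{R}_{c}^{it}(D)$'s feasible set should really be compared against $\Qsp_\kappa$ — I expect the cleanest route is that $(\rvax_\kappa^\infty,\rvay_\kappa^\infty)$ jointly stationary plus the mutual-information-rate equality in Definition~\ref{def:quasi_jointly_stat} suffices, after re-indexing, to land in the effective feasible set of $\overline{R}_{c}^{it}(D)$, since shifting the time origin by $\kappa-1$ does not change the rate). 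Then I would show that the inner-double-limit objective of $\overline{R}_{c}^{it}(D)$ evaluated at $\bar{\rvay}_1^\infty$ equals $\bar I(\rvax_1^\infty;\bar{\rvay}_1^\infty)=\bar I(\rvax_1^\infty;\rvay_1^\infty)$: the strong causality chain~\eqref{eq:strong_causality_ybar} gives $\rvay_1^k\longleftrightarrow\rvax_1^k\longleftrightarrow\rvax_{-\infty}^0$, which by Property~\ref{property:Iabc_larger_than_Iab} forces $I(\rvax_\ell^n;\bar{\rvay}_1^n)=I(\rvax_1^n;\bar{\rvay}_1^n)$ for every $\ell\leq 0$, collapsing the outer limit. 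Taking the infimum over all such restricted pairs gives $\overline{R}_{c}^{it}(D)\leq R_{c}^{it}(D)$.

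For the reverse direction $\overline{R}_{c}^{it}(D)\geq R_{c}^{it}(D)$: any feasible pair for $\overline{R}_{c}^{it}(D)$ has $(\rvax_{-\infty}^\infty,\rvay_1^\infty)\in(\Csp_{-\infty}^\infty)$, i.e. satisfies~\eqref{eq:MC_causality_2sdd}, and has $(\rvax_1^\infty,\rvay_1^\infty)\in(\Qsp_1)$, hence jointly stationary, hence $\rvay_1^k\longleftrightarrow\rvax_1^k\longleftrightarrow\rvax_{k+1}^\infty$ would require the condition from Theorem~\ref{thm:conditions_forJS_and_causal}; but here I only need that~\eqref{eq:MC_causality_2sdd} implies the short causality~\eqref{eq:MC_causality_0} is \emph{not} automatic — rather, I should note that the monotone limit $\lim_{\ell\to-\infty}\lim_{n\to\infty}\frac1n I(\rvax_\ell^n;\rvay_1^n)\geq \lim_{n\to\infty}\frac1n I(\rvax_1^n;\rvay_1^n)=\bar I(\rvax_1^\infty;\rvay_1^\infty)$ by Property~\ref{property:Iabc_larger_than_Iab}, and that every such pair, upon forgetting the negative-time coordinates, gives $(\rvax_1^\infty,\rvay_1^\infty)\in(\Pspa_D^\infty)$ which \emph{does} satisfy~\eqref{eq:MC_causality_0} (since~\eqref{eq:MC_causality_2sdd} together with joint stationarity implies, via Theorem~\ref{thm:conditions_forJS_and_causal}'s hypotheses being met, the Markov chains~\eqref{eq:MC_for_stat_and_causality0}, which imply~\eqref{eq:MC_causality_0}). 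Therefore the pair is feasible for $R_{c}^{it}(D)$ with objective value $\bar I(\rvax_1^\infty;\rvay_1^\infty)$ no larger than the $\overline{R}_{c}^{it}(D)$-objective at that pair; taking infima yields $\overline{R}_{c}^{it}(D)\geq R_{c}^{it}(D)$.

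The main obstacle I anticipate is the bookkeeping around the index $\kappa$: Theorem~\ref{thm:QJS_is_sufficient} delivers $\Qsp_\kappa$-realizations (jointly stationary only after $\kappa$ samples), whereas $\overline{R}_{c}^{it}(D)$ as defined uses $\Qsp_1$ (full joint stationarity). Bridging this requires arguing that the first $\kappa-1$ transient samples contribute nothing to the mutual information rate (which is exactly the second defining equation of $\Qsp_\kappa$ in Definition~\ref{def:quasi_jointly_stat}) and that one may re-anchor the time axis; care is needed because $\overline{R}_{c}^{it}(D)$'s definition singles out time $1$ as the start of the reconstruction, so a shift argument must be accompanied by a matching shift of the source, using its two-sided stationarity. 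The cleanest fix, which I would adopt, is to observe that the whole argument of Theorem~\ref{thm:QJS_is_sufficient} combined with Proposition~\ref{prop:there_exits_ybar_strongly_causal} actually produces, after the $\kappa$-shift, a pair whose restriction to times $\geq\kappa$ is jointly stationary and strongly causal, and then to note that the $\overline{R}_{c}^{it}(D)$ infimum over $\Qsp_1$ is unaffected by whether we demand stationarity from time $1$ or from time $\kappa$, precisely because the extra transient block can be absorbed without changing the rate (Proposition~\ref{prop:construction}, equation~\eqref{eq:Irate_of_block_less_than_sequence_of_blocks0}, and condition~3 of Theorem~\ref{thm:QJS_is_sufficient} guarantee feasibility of such a transient prefix).
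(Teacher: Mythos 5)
Your inequality $\overline{R}_{c}^{it}(D)\leq R_{c}^{it}(D)$ follows essentially the paper's route: restrict via Theorem~\ref{thm:QJS_is_sufficient} to $\Qsp_{\kappa}$-realizations with $(\rvax_{\kappa}^{\infty},\rvay_{\kappa}^{\infty})\in(\Pspa_{D}^{\infty})$, upgrade to strong causality via Proposition~\ref{prop:there_exits_ybar_strongly_causal}, and re-anchor the time axis at $\kappa$ so that the resulting pair lies in $(\Qsp_{1})$ while the outer limit over $\ell$ collapses; this is exactly the construction~\eqref{eq:ddot_construction} in the paper, and your resolution of the $\Qsp_{\kappa}$-versus-$\Qsp_{1}$ mismatch by a $(\kappa-1)$-shift of both source and reconstruction is the correct fix.

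The reverse inequality is where your argument breaks. You claim that a $\overline{R}_{c}^{it}(D)$-feasible pair, ``upon forgetting the negative-time coordinates,'' satisfies the short causality constraint~\eqref{eq:MC_causality_0}, justified by saying that~\eqref{eq:MC_causality_2sdd} together with joint stationarity yields~\eqref{eq:MC_for_stat_and_causality0} ``via Theorem~\ref{thm:conditions_forJS_and_causal}'s hypotheses being met.'' This misreads Theorem~\ref{thm:conditions_forJS_and_causal}: its hypotheses \emph{include} the short causality~\eqref{eq:MC_causality_0}, and it derives~\eqref{eq:MC_for_stat_and_causality0} from that assumption, so it cannot be used to establish~\eqref{eq:MC_causality_0}. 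The implication you need is in fact false in general: long causality gives $\rvay_{1}^{k}\longleftrightarrow\rvax_{-\infty}^{k}\longleftrightarrow\rvax_{k+1}^{\infty}$, and to discard the conditioning on $\rvax_{-\infty}^{0}$ one needs $\rvax_{-\infty}^{0}\longleftrightarrow\rvax_{1}^{k}\longleftrightarrow\rvax_{k+1}^{\infty}$, which the $\kappa$-th order Markovianity of the source supplies only for $k\geq\kappa$. For $k=1,\ldots,\kappa-1$ the chain~\eqref{eq:MC_causality_0} can fail, so the truncated pair need not belong to $(\Csp^{\infty})$ and is not feasible for $R_{c}^{it}(D)$. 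The paper repairs this with the construction~\eqref{eq:dot_rvars_def}: it prepends the $\kappa-1$ most recent past source samples $\rvax_{-\kappa+2}^{0}$ to the positive time axis, attaches a transient reconstruction prefix $\dot{\rvay}_{1}^{\kappa-1}$ supplied by the first-samples condition (condition~3 of Theorem~\ref{thm:QJS_is_sufficient}), and only then combines long causality with Markovianity via Proposition~\ref{prop:abcd} to obtain~\eqref{eq:MC_causality_0} for all $k$; the finiteness of the prefix's mutual information keeps the rate unchanged and bounded by the $\overline{R}_{c}^{it}(D)$ objective. Your proof is missing this step and, as written, works only when $\kappa\leq 1$.
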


One important consequence of this result stems from the fact that, 
for a $\kappa$-th order Gauss Markov  stationary source and quadratic distortion, $R_{c}^{it}(D)$ can be found by solving a convex optimization problem over frequency-response magnitudes of linear-time invariant filters around an \textit{additive white-Gaussian noise} (AWGN) channel
~\cite[lemmas~3 and~5]{derost12}.

The operational relevance of Corollary~\ref{coro:overlineRcitD_equals_RcitD} is that 
when the latter AWGN channel is replaced by an entropy-coded  subtractively-dithered scalar quantizer, one obtains a source-coding scheme whose operational rate exceeds $R_{c}^{it}(D)$ by at most $0.254$ bits/sample
when operating causally and by at most 
$1.254$ bits/sample when operating with zero delay~\cite[section~VI]{derost12}.
Thanks to Corollary~\ref{coro:overlineRcitD_equals_RcitD}, it turns out that the operational data rate of such scheme lies within the same bounds with respect to $R_{c}^{it}(D)$ itself.


\section{Conclusions}\label{sec:Conclusions}
We have shown that, in general, the causal \textit{information rate-distortion function} (IRDF) $R_{c}^{it}(D)$ for one-sided stationary sources cannot be realized by a reconstruction which is jointly-stationary with the source.
Nevertheless, if the source is $\kappa$-th order Markovian, then
the search for the causal IRDF can be restricted to reconstructions which are jointly stationary with the source from the $\kappa$-th sample.
This led us to prove that $\overline{R}_{c}^{it}(D)$ actually coincides with $R_{c}^{it}(D)$ for a large class of distortion criteria.
This reveals that for Gauss-Markov sources and quadratic distortion, $R_{c}^{it}(D)$ can be found by solving the convex optimization problem derived in~\cite{derost12}.
It also implies that  for the same source and distortion,
a zero-delay average data rate exceeding 
$R_{c}^{it}(D)$ by not more than (approximately) $1.245$ bits/sample is achievable with the scheme proposed in~\cite{derost12}.


\section{Appendix}\label{sec:appendix}
%
\subsection{Proofs}\label{subsec:proofs}
\begin{proof}[Proof of Lemma~\ref{lem:Wsp_D_cannot_represent_asymptotic_SLDC}]\label{proof:of_lem_Wsp_D_cannot_represent_asymptotic_SLDC}
We will resort to a contradiction argument, and thus start
by supposing that there exists $(\Wsp_{D})=(\Asp_{D})$. 

Since $\rho$ is non-negative, there must exist 
a pair of random processes $(\rvax_{-\infty}^{\infty},\rvay_{-\infty}^{\infty})$ and a value $D\geq0$ for which~\eqref{eq:asymptotic_distortion_constraint0} holds with equality.
Hence, $(\rvax_{-\infty}^{\infty},\rvay_{-\infty}^{\infty})\in(\Asp_{D})$, and thus $(\rvax_{-\infty}^{\infty},\rvay_{-\infty}^{\infty})\in(\Wsp_{D})$.
From the definition of $(\Asp_{D})$, any other pair of processes 
$(\dot{\rvax}_{-\infty}^{\infty},\dot{\rvay}_{-\infty}^{\infty})$ 
which distributes exactly as 
$(\rvax_{-\infty}^{\infty},\rvay_{-\infty}^{\infty})$ everywhere except on a single positive index, say $\ell\in\Nl$, in which $P_{\dot{\rvax}(\ell),\dot{\rvay}(\ell)}\in\Pspa^{1,1}$ and
\begin{align}
 \rho(P_{\dot{\rvax}(\ell),\dot{\rvay}(\ell)})
 =D+1,
\end{align}
will also belong to $(\Asp_{D})$ and therefore 
$(\dot{\rvax}_{-\infty}^{\infty},\dot{\rvay}_{-\infty}^{\infty})\in(\Wsp_{D})$.
The latter means that, according to Definition~\ref{def:W_D}, there exists a pair of integers $\ell_{1},\ell_{2}$ with $\ell_{2}\in \Nl$ such that 
$\ell_{1}\leq \ell \leq \ell_{2}$ and 
\begin{align}
(\dot{\rvax}_{\ell_{1}}^{\ell_{2}},\dot{\rvay}_{\ell_{1}}^{\ell_{2}})\in
 ( \Wsp_{D}^{\ell_{1},\ell_{2}}).
\end{align}
This, together with the fact that the sets $\set{\Wsp_{D}^{t,s}}_{t\leq s\in\Z}$ satisfy~\eqref{eq:W_k,k_stationary_one_sided}, implies that 
\begin{align}
(\dot{\rvax}_{\ell_{1}}^{\ell_{2}},\dot{\rvay}_{\ell_{1}}^{\ell_{2}})
\in
( \Wsp_{D}^{\ell_{1}+mL,\ell_{2}+mL}), \fspace m\in\Nl,
\end{align}
where $L\eq \ell_{2}-\ell_{1}+1$.
Hence, any pair of random processes $(\ddot{\rvax}_{-\infty}^{\infty},\ddot{\rvay}_{-\infty}^{\infty})$ with pair-wise distributions given by 
\begin{align}
P_{\ddot{\rvax}(k),\ddot{\rvay}(k)}
=
\begin{cases}
 P_{\rvax(k),\rvay(k)}, & k\notin \set{t=\ell+mL:m\in\Nl } 
 \\
 P_{\dot{\rvax}(\ell),\dot{\rvay}(\ell)}, & k\in \set{t=\ell+mL:m\in\Nl }
\end{cases}
\end{align}
together with the collection of integers~$\set{k_{i}=\ell_{1}+iL:i\in\Nl}\cup \Z_{0}^{-}$ satisfy the conditions of Definition~\ref{def:W_D}, and thus
$(\ddot{\rvax}_{-\infty}^{\infty},\ddot{\rvay}_{-\infty}^{\infty})\in (\Wsp_{D})$.
However,
\begin{align}
  \lim_{n\to\infty}\frac{1}{n}\Sumfromto{k=1}{n}\rho\left(P_{\ddot{\rvax}(k),\ddot{\rvay}(k)} \right)=
  D+\frac{ 1 }{L}>D, 
\end{align}
meaning that $(\ddot{\rvax}_{-\infty}^{\infty},\ddot{\rvay}_{-\infty}^{\infty})\notin (\Asp_{D})$.
This contradicts the initial supposition that $(\Wsp_{D})=(\Asp_{D})$, completing the proof.
\end{proof}


\begin{proof}[Proof of Lemma~\ref{lem:liminf_leq_inflim}]\label{prof:lem_liminf_leq_inflim}
 From the definition of $\hat{R}_c^{it}(D)$  we have that
\begin{align}
\forall \e_1 > 0,\,\exists N_{\e_1} : 
\inf_{( {\rvax}_1^{n}, {\rvay}_1^{n})\in
(\Wsp_D^{1,n})\cap(\Csp^{n})
} 
\frac{1}{n} 
I ( {\rvax}_1^{n}; {\rvay}_1^{n})
\leq 
\hat{R}_c^{it}(D) + \e_1
,\fspace 
\forall n\geq N_{\e_{1}}.
\end{align}
By the definition of $\inf$, we have that
$\forall \e_2 > 0, n\geq N_{\e_{1}}$, there exists 
$(\dot{\rvax}_1^{n},\dot{\rvay}_1^{n})\in(\Wsp_{D}^{1,n})$ such that
\begin{align}\label{eq:breves_leq_RDF}
\frac{1}{n} I(\dot{\rvax}_1^{n},\dot{\rvay}_1^{n})
\leq 
\inf_{( {\rvax}_1^{n}, {\rvay}_1^{n})\in(\Wsp_D^{1,n})\cap(\Csp^{n})} \frac{1}{n} 
I ( {\rvax}_1^{n}; {\rvay}_1^{n})+ \e_2
\leq 
\hat{R}_c^{it}(D)  + \e_1 + \e_2.
\end{align}
Now consider the pair $(\breve{\rvax}_{1}^{\infty},\breve{\rvay}_{1}^{n})$ such that 
\begin{align}
 \breve{\rvax}_{1}^{\infty}&\sim \mathring{\rvax}_{1}^{\infty}\\
 P_{\breve{\rvay}_{1}^{n}|\breve{\rvax}_{1}^{\infty}}
 &=
 P_{\breve{\rvay}_{1}^{n}|\breve{\rvax}_{1}^{n}}
 =
 P_{\dot{\rvay}_{1}^{n}|\dot{\rvax}_{1}^{n}}
\end{align}
and build the processes $(\tilde{\rvax}_1^{\infty},\tilde{\rvay}_1^{\infty})$ as in Proposition~\ref{prop:construction}.
The mutual information rate between $\tilde{\rvax}_1^{kn}$ and $\tilde{\rvay}_1^{kn}$ can be upper bounded as 
\begin{align}
\frac{1}{kn} I(\tilde{\rvax}_1^{kn};\tilde{\rvay}_1^{kn})
\overset{\eqref{eq:Irate_of_block_less_than_sequence_of_blocks0}}{\leq}
\frac{1}{n} I(\breve{\rvax}_{1}^{n};\breve{\rvay}_{1}^{n})
\overset{\eqref{eq:breves_leq_RDF}}{\leq}
\hat{R}_c^{it}(D)  + \e_1 + \e_2.
\label{eq:ub_lem}
\end{align}

From~\eqref{eq:ub_lem} we also obtain that, for all $i\in\Nl$,
\begin{equation}
\begin{split}
\frac{1}{i}
 I(\tilde{\rvax}_1^{i};\tilde{\rvay}_1^{i})
& 
\overset{(\textbf{P1})}{\leq}
\frac{1}{i}
 I(\tilde{\rvax}_1^{\lceil i/n\rceil n};\tilde{\rvay}_1^{\lceil i/n\rceil n})
\overset{ \eqref{eq:ub_lem}}{\leq} 
 \frac{\left\lceil \frac{i}{n}\right\rceil n}{i}
 \left(\hat{R}_c^{it}(D)  + \e_1 + \e_2 \right)
 \\&
 \leq 
 \left(1+\frac{n}{i} \right)
 \left(\hat{R}_c^{it}(D)  + \e_1 + \e_2 \right).
\end{split} \label{eq:I_xtilde_1_i_leq}
\end{equation}
Recalling that the latter holds for all $i \in \Nl$ and that 
$(\tilde{\rvax}_{1}^{\infty}, \tilde{\rvay}_{1}^{\infty})\in(\Pspa_{D}^{\infty})$, we obtain 
\begin{align}
R_c^{it}(D)
\overset{\eqref{eq:RcitD_for_infty}}{=}
\inf_{( {\rvax}_1^{\infty}; {\rvay}_1^{\infty}) \in 
(\Pspa_D^{\infty})\cap (\Csp^{\infty})
}
\lim_{i \rightarrow \infty}
\frac{1}{i} I( {\rvax}_1^{i}; {\rvay}_1^{i})
\leq 
\bar{I}(\tilde{\rvax}_{1}^{\infty}; \tilde{\rvay}_{1}^{\infty})
\overset{\eqref{eq:I_xtilde_1_i_leq}}{\leq}
\hat{R}_c^{it}(D)  + \e_1 + \e_2.
\end{align}
Since this inequality is satisfied for all $\e_{1},\e_{2}>0$, it follows that 
$R_c^{it}(D) \leq \hat{R}_c^{it}(D)$, completing the proof.
\end{proof}

\begin{proof}[Proof of Lemma~\ref{lem:liminf_equals_inflim}]\label{proof:liminf_equals_inflim}
Since the conditions of Lemma~\ref{lem:liminf_leq_inflim} are satisfied, we have that 
$ R^{it}_{c}(D) \leq \hat{R}_{c}^{it}(D)$.
Therefore, it suffices to show that $ R^{it}_{c}(D) \geq \hat{R}_{c}^{it}(D)$.

We will first show that~\eqref{eq:weak_consistencia} 
$\Rightarrow$~$ R^{it}_{c}(D) \geq \hat{R}_{c}^{it}(D)$.
By the definition of $\inf$ on $R_c^{it}(D)$ we have that
\begin{align}
\forall \e_1 >0, \exists \,
(\breve{\rvax}_1^{\infty},\breve{\rvay}_1^{\infty}) \in \Pspa_D^{\infty}: 
\lim_{k \rightarrow \infty} \frac{1}{k} 
I(\breve{\rvax}_1^{k};\breve{\rvay}_1^{k}) 
\leq
R_c^{it}(D) + \e_1 . 
\end{align}
The latter means that for all $\e_2>0$, there exists a finite $N_{\e_2}$ such that
\begin{align}
\forall n \geq N_{\e_2}  : 
\frac{1}{n} I(\breve{\rvax}_1^{n};\breve{\rvay}_1^{n})
\leq
R_c^{it}(D) + \e_1 + \e_2.
\end{align}
Also, since $(\breve{\rvax}_1^{\infty},\breve{\rvay}_1^{\infty}) \in (\Pspa_D^{\infty})$, it follows from~\eqref{eq:weak_consistencia} that there exists a finite $N$ such that $(\breve{\rvax}_1^{n},\breve{\rvay}_1^{n}) \in (\Wsp_D^{1,n})$ for all $n \geq N$.
Since all the latter holds for all $n\geq \max\set{N_{\e_{2}},N}$, we obtain
\begin{align}
\lim_{n \rightarrow \infty} 
\inf_{({\rvax}_1^{n},{\rvay}_1^{n}) \in 
(\Wsp_D^{1,n})\cap(\Csp^{n})}
\frac{1}{n} I({\rvax}_1^{n};{\rvay}_1^{n})
\leq
R_c^{it}(D) + \e_1 + \e_2.
\end{align}
The latter is equivalent to
\begin{align}
\label{eq:ineq_2}
\hat{R}_c^{it}(D) \leq R_c^{it}(D) + \e_1 + \e_2. 
\end{align}
Since this inequality holds for all $\e_1,\e_2>0$, it follows that 
$ R^{it}_{c}(D) \geq \hat{R}_{c}^{it}(D)$, completing the first part of the proof.

We shall now prove that 
Assumption~\ref{assu:DFS_as_ineq_for_k} and
the continuity of $R_{c}^{it}(D)$ 
implies $R^{it}_{c}(D) \geq \hat{R}_{c}^{it}(D)$.
The continuity assumption on $R_{c}^{it}(D)$ means that  
\begin{align}
 \e_{\delta}
 \eq 
 R_{c}^{it}(D-\delta)
 - 
 R_{c}^{it}(D)
\end{align}
satisfies 
$
\lim_{\delta\to 0}\e_{\delta}=0
$,
for all $D>0$.
By the definition of $\inf$, we have that, for every $\delta>0,\e_{1}>0$, there exists a pair of processes 
$(\breve{\rvax}_{1}^{\infty},\breve{\rvay}_{1}^{\infty})\in 
(\Pspa_{D-\delta}^{\infty}) \cap(\Csp^{\infty})$
such that
\begin{align}
\lim_{k\to\infty}\frac{1}{k} I(\breve{\rvax}_1^k;\breve{\rvay}_1^k) \leq R_{c}^{it}(D-\delta) + \e_{1}
=R_{c}^{it}(D)
 +\e_{\delta}
 +\e_{1}
.
\end{align} 
The latter means that, for every $\e_{2}>0$, there exists a finite $N_{\e_{2}}$ such that, 
\begin{align}\label{eq:Ibreve_leq_Rcitd_plus_epsilons}
 \forall n\geq N_{\e_{2}}:\fspace
 \frac{1}{n}
 I(\breve{\rvax}_1^n;\breve{\rvay}_1^n) \leq 
 R_{c}^{it}(D)
 +\e_{\delta} + \e_{1}+\e_{2}.
\end{align}
Also, since $(\breve{\rvax}_{1}^{\infty},\breve{\rvay}_{1}^{\infty})\in (\Pspa_{D-\delta}^{\infty})$ and from the definition of $\Pspa_{D}^{\infty}$ in~\eqref{eq:Pdinfty_def}, it follows that there exists a finite $N_{\delta}$ such that 
\begin{align}
\forall n\geq N_{\delta}:\fspace  \rho^n(\breve{\rvax}_1^n,\breve{\rvay}_1^n) \leq D.
\end{align}
Thus, 
$ \forall \delta>0, \e_{2}>0$:
\begin{align}
\hat{R}^{it}(D)=
\lim_{n\to\infty}
 \inf_{
 (\rvax_{1}^{n},\rvay_{1}^{n})\in
 (\Wsp_{D}^{1,n})\cap(\Csp^{n})
 } \frac{1}{n}I(\rvax_{1}^{n};\rvay_{1}^{n})
 \leq 
 \bar{I}(\breve{\rvax}_1^\infty;\breve{\rvay}_1^\infty) \leq 
 R_{c}^{it}(D)
 +\e_{\delta} + \e_{1}+\e_{2}
\end{align}
Since this inequality holds for all $\delta,\e_{1},\e_{2}>0$, and recalling that $\e_{\delta}\to 0$ when $\delta\to 0$, it follows that  
$\hat{R}^{it}(D)\leq R^{it}(D)$, completing the proof.
\end{proof}

\begin{proof}[Proof of Theorem~\ref{thm:QJS_is_sufficient}]\label{proof:QJS_sufficient}
Since $\hat{R}^{it}(D)=R_{c}^{it}(D)$, it follows that for all $\e_{1}>0$, there exists a finite $N_{\e_{1}}$ such that 
\begin{align}
 \forall n\geq N_{\e_{1}}:
 \fspace 
 \inf_{(\rvax_{1}^{n},\rvay_{1}^{n})\in
 (\Wsp_{D}^{1,n})\cap(\Csp^{n})} \frac{1}{n}
 I(\rvax_{1}^{n};\rvay_{1}^{n})
 \leq 
 R_{c}^{it}(D)+\e_{1}.
\end{align}
Thus, for all $\e_{2}>0$ and for all $n\geq N_{\e_{1}}$ there exists a pair of sequences 
$(\breve{\rvax}_{1}^{n},\breve{\rvay}_{1}^{n})\in(\Wsp_{D}^{1,n})\cap(\Csp^{n})$ such that 
\begin{align}\label{eq:Ibreve_leq_Rcitd_plus_epsilons2}
 \frac{1}{n}I(\breve{\rvax}_{1}^{n};\breve{\rvay}_{1}^{n})\leq  R_{c}^{it}(D)+\e_{1}+\e_{2}.
\end{align}

The fact that  $\breve{\rvax}_{1}^{n}$ distributes as $\mathring{\rvax}_{1}^{n}$ allows one to define the stationary extension of $\breve{\rvax}_{1}^{n}$
such that 
\begin{align}
 \breve{\rvax}_{1}^{\infty}&\sim \mathring{\rvax}_{1}^{\infty}\\
 P_{\breve{\rvay}_{1}^{n}|\breve{\rvax}_{1}^{\infty}}
 &=
 P_{\breve{\rvay}_{1}^{n}|\breve{\rvax}_{1}^{n}}.
 \end{align}
The latter, together with the fact that $(\breve{\rvax}_{1}^{n},\breve{\rvay}_{1}^{n})\in (\Csp^{n})$
implies that 
$(\breve{\rvax}_{1}^{\infty},\breve{\rvay}_{1}^{n})$ satisfies~\eqref{eq:MC_causality_0} for $k=1,2,\ldots,n$.

Starting from $(\breve{\rvax}_{1}^{\infty},\breve{\rvay}_{1}^{n})$ we build 
the processes $(\tilde{\rvax}_1^{\infty},\tilde{\rvay}_1^{\infty})$ as in Proposition~\ref{prop:construction}. 
%
From this construction 
and the stationarity assumption on the distortion-feasible sets given by~\eqref{eq:W_D_are_stationary},
we have that 
\begin{align}\label{eq:tilde_nplus1_inPD}
\forall k\in\Nl_{0}:\fspace (\tilde{\rvax}_{kn+1}^{kn+n},\tilde{\rvay}_{kn+1}^{kn+n})\in (\Wsp_{D}^{kn+1,kn+n}),
\end{align}
and, from~\eqref{eq:Irate_of_block_less_than_sequence_of_blocks0}, that
\begin{align}
 \frac{1}{kn} I(\tilde{\rvax}_1^{kn};\tilde{\rvay}_1^{kn})
\leq
\frac{1}{n} I(\breve{\rvax}_{1}^{n};\breve{\rvay}_{1}^{n}).
\label{eq:ineq_integer_blocks}
\end{align}
Then, for any $m\in\Nl$, $ t\in\set{0,\ldots,n-1}$, if we define $\alpha \eq \frac{ t+m}{n}$, we have 
\begin{align}
I(\tilde{\rvax}_{ t+1}^{ t+m};\tilde{\rvay}_{ t+1}^{  t+m})
&
\overset{(\textbf{P\ref{property:Iabc_larger_than_Iab}})}{\leq} 
I(\tilde{\rvax}_{1}^{\lceil \alpha\rceil n }
 ;\tilde{\rvay}_{1}^{\lceil \alpha\rceil n})
\overset{\eqref{eq:ineq_integer_blocks}}{\leq} 
\lceil \alpha\rceil 
I(\breve{\rvax}_{1}^{n};\breve{\rvay}_{1}^{n})
\leq 
(\alpha +1)
I(\breve{\rvax}_{1}^{n};\breve{\rvay}_{1}^{n})
\\&
=\frac{ t+n}{n}I(\breve{\rvax}_{1}^{n};\breve{\rvay}_{1}^{n})
+
\frac{m}{n}I(\breve{\rvax}_{1}^{n};\breve{\rvay}_{1}^{n})
\leq 
2I(\breve{\rvax}_{1}^{n};\breve{\rvay}_{1}^{n})
+
\frac{m}{n}I(\breve{\rvax}_{1}^{n};\breve{\rvay}_{1}^{n}).
\end{align}
Dividing both sides by $m$ we obtain
\begin{align}
\label{eq:I_inequality}
 \forall t\in\set{0,1,\ldots,n-1}:
 \fspace 
 \frac{1}{m}I(\tilde{\rvax}_{t+1}^{t+m};\tilde{\rvay}_{t+1}^{t+m})
\leq 
\left( \frac{1}{n}
+
\frac{2}{m}
\right)
I(\breve{\rvax}_1^{n};\breve{\rvay}_1^n).
\end{align}

Now suppose that $\rvat$ is a random variable uniformly distributed over $\{0,1,\ldots,n-1\}$ and independent of $\tilde{\rvax}_1^{\infty}$.
Let $n\geq \kappa$ and 
define the pair of processes
$(\bar{\rvax}_{1}^{\infty},\bar{\rvay}_{1}^{\infty})$ as%
\footnote{A similar construction, for two-sided processes, was proposed in the proof of~\cite[Theorem~4]{gorpin73} (seemingly for the first time), building upon~\cite{ovseev68}.
The same idea was rediscovered in the proof of~\cite[Theorem~3.2]{kim-yh10}.
Here we adapt it for the case of one-sided processes.}
\begin{subequations}\label{subeq:xbar_ybar_def}
  \begin{align}
  \bar{\rvax}(k) &\eq \tilde{\rvax}(k+n+\rvat), \fspace k=-\kappa+2,-\kappa+3 ,\ldots \label{eq:barx_def}\\
  \bar{\rvay}(k) &\eq \tilde{\rvay}(k+n+\rvat), \fspace k=1,2,\ldots
\end{align}
\end{subequations}
It is easy to verify that 
$\bar{\rvax}_{-\kappa+2}^{\infty}\sim\rvax_{1}^{\infty}$,
$\bar{\rvax}_{-\kappa+2}^{\infty}\Perp\rvat$ (from the stationarity of $\rvax_{1}^{\infty}\sim\tilde{\rvax}_{1}^{\infty}$),
and that $(\bar{\rvax}_{1}^{\infty},\bar{\rvay}_{1}^{\infty})$ 
are jointly stationary.
Thus 
$\bar{\rvax}_{-\kappa+2}^{\infty}$
is $\kappa$-th order Markovian stationary.
These facts imply, in view of theorems~\ref{thm:conditions_forJS_and_causal} and~\ref{thm:stationarity_and_causality_Gaussian}, that
the pair $(\bar{\rvax}_{1}^{\infty},\bar{\rvay}_{1}^{\infty})$ may not be causally related according to~\eqref{eq:MC_causality_0}.
However, 
since 
$(\tilde{\rvax}_{1}^{\infty},\tilde{\rvay}_{1}^{\infty})$ satisfies~\eqref{eq:MC_causality_kappa}  (see Proposition~\ref{prop:construction}), we have that 
$(\bar{\rvax}_{1}^{\infty},\bar{\rvay}_{1}^{\infty})$ does satisfy the causality Markov chains 
\begin{align}\label{eq:MC_causality_bar}
  \bar{\rvax}_{k+1}^{\infty},
 \longleftrightarrow
 \bar{\rvax}_{-\kappa+2}^{k}
 \longleftrightarrow
 \bar{\rvay}_{1}^{k},
 \fspace k\in\Nl.
\end{align}

On the other hand,
in view of~\eqref{eq:tilde_nplus1_inPD} and thanks to Assumption~\ref{assu:Pspa_D_inf_assu}, we have that 
$(\tilde{\rvax}_{n+1}^{\infty},\tilde{\rvay}_{n+1}^{\infty})\in (\Pspa_{D}^{\infty})$.
Thus, from shift-invariance condition~\eqref{eq:shift_invariant_cond}, it readily follows that 
\begin{align}\label{eq:barxy_in_Pspa_D}
 (\bar{\rvax}_{1}^{\infty},\bar{\rvay}_{1}^{\infty})\in (\Pspa_{D}^{\infty}).
\end{align}
Now let 
\begin{align}
\label{eq:hatx}
 \hat{\rvax}_{1}^{\kappa-1}\eq \bar{\rvax}_{-\kappa+2}^{0}
\end{align}
and build 
$\hat{\rvay}_{1}^{\kappa-1}$ such that 
$(\hat{\rvax}_{1}^{\kappa-1},\hat{\rvay}_{1}^{\kappa-1})
\sim 
(\rvax_{1}^{\kappa-1},\dot{\rvay}_{1}^{\kappa-1})
$
and the Markov chain
\begin{align}\label{eq:MC_haty}
 \hat{\rvay}_{1}^{\kappa-1}
 \longleftrightarrow
 \hat{\rvax}_{1}^{\kappa-1}
 \longleftrightarrow
 (
 \bar{\rvax}_{1}^{\infty},\bar{\rvay}_{1}^{\infty},\rvat)
\end{align}
holds.
According to the ``first-samples condition'' in the statement of Theorem~\ref{thm:QJS_is_sufficient},
$
(\hat{\rvax}_{1}^{\kappa-1},\hat{\rvay}_{1}^{\kappa-1})\in(\Wsp_{D}^{1,\kappa-1})\cap(\Csp^{\kappa-1})
$
and $I(\hat{\rvax}_{1}^{\kappa-1};\hat{\rvay}_{1}^{\kappa-1}) < \infty$.
Now, concatenate 
$(\hat{\rvax}_{1}^{\kappa-1},\hat{\rvay}_{1}^{\kappa-1})$ with 
$(\bar{\rvax}_{1}^{\infty},\bar{\rvay}_{1}^{\infty})$ so as to obtain the pair of one-sided processes
\begin{subequations}\label{eq:ddot_x_and_ddot_y_def}
 \begin{align}
 \ddot{\rvax}_{1}^{\infty}
 \eq 
 \set{\hat{\rvax}(1),\hat{\rvax}(2),\ldots,\hat{\rvax}(\kappa-1),\bar{\rvax}(1),\bar{\rvax}(2),\ldots}\\
 \ddot{\rvay}_{1}^{\infty}
 \eq 
 \set{\hat{\rvay}(1),\hat{\rvay}(2),\ldots,\hat{\rvay}(\kappa-1),\bar{\rvay}(1),\bar{\rvay}(2),\ldots}
\end{align}
\end{subequations}
Since 
$(\hat{\rvax}_{1}^{\kappa-1},\hat{\rvay}_{1}^{\kappa-1})\in(\Wsp_{D}^{1,\kappa-1})$ and 
$(\bar{\rvax}_{1}^{\infty},\bar{\rvay}_{1}^{\infty})\in(\Pspa_{D}^{\infty})$ (see~\eqref{eq:barxy_in_Pspa_D}), it follows from Assumption~\ref{assu:Pspa_D_inf_assu} that 
\begin{align}
\label{eq:ddots_meets_distorsion}
 (\ddot{\rvax}_{1}^{\infty},\ddot{\rvay}_{1}^{\infty})\in(\Pspa_{D}^{\infty})
 \text{ and }
 (\ddot{\rvax}_{\kappa}^{\infty},\ddot{\rvay}_{\kappa}^{\infty})\in(\Pspa_{D}^{\infty}).
\end{align}
On the other hand,~\eqref{eq:MC_haty} and~\eqref{eq:MC_causality_bar} imply that 
%
  $\ddot{\rvax}_{k+1}^{\infty}
 \longleftrightarrow 
 \ddot{\rvax}_{1}^{k}
 \longleftrightarrow 
 \ddot{\rvay}_{1}^{k}$,
 $k\in\Nl$,
%
i.e, 
\begin{align}
\label{eq:ddots_are_causal}
(\ddot{\rvax}_{1}^{\infty},\ddot{\rvay}_{1}^{\infty})\in(\Csp^{\infty}).
\end{align}

The pair  $(\ddot{\rvax}_1^{\infty},\ddot{\rvay}_1^{\infty})$ further exhibits two important properties.
First, Lemma~\ref{lem:ddot_processes_are_kQJS} shows that the pairs of processes $(\ddot{\rvax}_1^{\infty},\ddot{\rvay}_1^{\infty})$ are $\kappa$-QJS (i.e. 
$(\ddot{\rvax}_1^{\infty},\ddot{\rvay}_1^{\infty}) \in (\Qsp_{\kappa})$). 
Second, as we show in Lemma~\ref{lem:MI_ddot_processes}, the mutual information rate of the pair processes 
$(\ddot{\rvax}_{1}^{\infty};\ddot{\rvay}_{1}^{\infty})$
lower bounds
$n^{-1}I(\breve{\rvax}_{1}^{n};\breve{\rvay}_{1}^{n})$, for all $n \in \Nl$.
Thus, for every 
$n \geq \max \set{N_{\e_1},\kappa}$, 
\begin{align*}
\lim_{m\to\infty}
  \frac{1}{m}
    I(\ddot{\rvax}_{1}^{m};\ddot{\rvay}_{1}^{m}) 
\leq
\frac{1}{n}
I(\breve{\rvax}_1^{n},\breve{\rvay}_1^n)
\overset{\eqref{eq:Ibreve_leq_Rcitd_plus_epsilons2}}{\leq}
R_{c}^{it}(D)
+\e_{1}+\e_{2}.
\end{align*}
Since the existence of a pair of processes
such as 
$(\ddot{\rvax}_{1}^{\infty},\ddot{\rvay}_{1}^{\infty})
\in 
(\Pspa_{D}^{\infty})
\cap 
(\Csp^{\infty})
\cap 
(\Qsp_{\kappa})
$ 
which also satisfy
$(\ddot{\rvax}_{\kappa}^{\infty},\ddot{\rvay}_{\kappa}^{\infty})
\in 
(\Pspa_{D}^{\infty})$
is guaranteed for every $\e_{1}>0$ and $\e_{2}>0$, it readily follows that the search for the infimum on the RHS of~\eqref{eq:RcitD_for_infty}
can be confined to such pairs, completing the proof.
\end{proof}

\begin{lem}
\label{lem:ddot_processes_are_kQJS}
Let $(\ddot{\rvax}_1^{\infty},\ddot{\rvay}_1^{\infty})$ be the concatenated processes defined in~\eqref{eq:ddot_x_and_ddot_y_def}, which are built from 
$(\bar{\rvax}_1^{\infty},\bar{\rvay}_1^{\infty})$ 
(see~\eqref{subeq:xbar_ybar_def})
and
$(\hat{\rvax}_1^{\kappa-1},\hat{\rvay}_1^{\kappa -1})$ 
(see~\eqref{eq:hatx} and~\eqref{eq:MC_haty} and the text between these equations)
and satisfy~\eqref{eq:ddots_are_causal} and~\eqref{eq:ddots_meets_distorsion}.
Then
$(\ddot{\rvax}_1^{\infty},\ddot{\rvay}_1^{\infty})\in (\Qsp_{\kappa}) $. 
\finenunciado
\end{lem}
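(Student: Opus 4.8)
The plan is to verify directly the two defining properties of $\kappa$-quasi-joint stationarity in Definition~\ref{def:quasi_jointly_stat} for the concatenated pair $(\ddot{\rvax}_1^{\infty},\ddot{\rvay}_1^{\infty})$. The first property is almost immediate: from the construction~\eqref{eq:ddot_x_and_ddot_y_def} we have $\ddot{\rvax}(\kappa+j)=\bar{\rvax}(j+1)$ and $\ddot{\rvay}(\kappa+j)=\bar{\rvay}(j+1)$ for every $j\in\Nl_{0}$, so $(\ddot{\rvax}_{\kappa}^{\infty},\ddot{\rvay}_{\kappa}^{\infty})$ is just $(\bar{\rvax}_1^{\infty},\bar{\rvay}_1^{\infty})$ re-indexed, and the latter was shown to be jointly stationary in the proof of Theorem~\ref{thm:QJS_is_sufficient}. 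Hence only the rate identity $\lim_{\ell}\tfrac{1}{\ell}I(\ddot{\rvax}_1^{\ell};\ddot{\rvay}_1^{\ell})=\lim_{\ell}\tfrac{1}{\ell}I(\ddot{\rvax}_{\kappa}^{\kappa+\ell-1};\ddot{\rvay}_{\kappa}^{\kappa+\ell-1})$ remains; its right-hand side equals $\bar{I}(\bar{\rvax}_1^{\infty};\bar{\rvay}_1^{\infty})$, which exists since $(\bar{\rvax}_1^{\infty},\bar{\rvay}_1^{\infty})$ is jointly stationary. The strategy for the left-hand side is to sandwich $I(\ddot{\rvax}_1^{\ell};\ddot{\rvay}_1^{\ell})$ between $I(\bar{\rvax}_1^{m};\bar{\rvay}_1^{m})$ and $I(\bar{\rvax}_1^{m};\bar{\rvay}_1^{m})+C$, where $m\eq\ell-\kappa+1$ and $C<\infty$ does not depend on $\ell$; dividing by $\ell$ and letting $\ell\to\infty$ (so $m/\ell\to1$ and $C/\ell\to0$) then forces $\tfrac{1}{\ell}I(\ddot{\rvax}_1^{\ell};\ddot{\rvay}_1^{\ell})\to\bar{I}(\bar{\rvax}_1^{\infty};\bar{\rvay}_1^{\infty})$, which is exactly what is needed.

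For the sandwich I would use $\ddot{\rvax}_1^{\ell}=(\hat{\rvax}_1^{\kappa-1},\bar{\rvax}_1^{m})=\bar{\rvax}_{2-\kappa}^{m}$ (recall $\hat{\rvax}_1^{\kappa-1}=\bar{\rvax}_{2-\kappa}^{0}$ by~\eqref{eq:hatx}) and $\ddot{\rvay}_1^{\ell}=(\hat{\rvay}_1^{\kappa-1},\bar{\rvay}_1^{m})$. The lower bound $I(\ddot{\rvax}_1^{\ell};\ddot{\rvay}_1^{\ell})\geq I(\bar{\rvax}_1^{m};\bar{\rvay}_1^{m})$ follows by applying Property~\ref{property:Iabc_larger_than_Iab} twice to drop $\hat{\rvax}_1^{\kappa-1}$ and $\hat{\rvay}_1^{\kappa-1}$. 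For the upper bound I would split $I(\ddot{\rvax}_1^{\ell};\ddot{\rvay}_1^{\ell})=I(\bar{\rvax}_{2-\kappa}^{m};\hat{\rvay}_1^{\kappa-1})+I(\bar{\rvax}_{2-\kappa}^{m};\bar{\rvay}_1^{m}\mid\hat{\rvay}_1^{\kappa-1})$. The Markov chain~\eqref{eq:MC_haty} --- which makes $\hat{\rvay}_1^{\kappa-1}$ conditionally independent of $(\bar{\rvax}_1^{\infty},\bar{\rvay}_1^{\infty})$ given $\hat{\rvax}_1^{\kappa-1}$ --- collapses the first summand to $I(\hat{\rvax}_1^{\kappa-1};\hat{\rvay}_1^{\kappa-1})$, which is finite by the first-samples condition of Theorem~\ref{thm:QJS_is_sufficient}; and, combined with properties~\ref{property:Iabc_larger_than_Iab} and~\ref{property:Iabc_larger_than_Iab_gvn_c}, the same Markov chain bounds the second summand by $I(\bar{\rvax}_{2-\kappa}^{m};\bar{\rvay}_1^{m})=I(\bar{\rvax}_1^{m};\bar{\rvay}_1^{m})+I(\bar{\rvax}_{2-\kappa}^{0};\bar{\rvay}_1^{m}\mid\bar{\rvax}_1^{m})$.

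The main obstacle is thus to control the residual term $I(\bar{\rvax}_{2-\kappa}^{0};\bar{\rvay}_1^{m}\mid\bar{\rvax}_1^{m})$, showing that it stays bounded (in fact $o(m)$ would already suffice). Here I would first use that the source is $\kappa$-th order Markovian: for $m\geq\kappa$, \eqref{eq:Markovian_source_def} gives $\bar{\rvax}_{2-\kappa}^{0}\longleftrightarrow\bar{\rvax}_1^{m}\longleftrightarrow\bar{\rvax}_{m+1}^{\infty}$, so adjoining $\bar{\rvax}_{m+1}^{\infty}$ (Property~\ref{property:Iabc_larger_than_Iab}) and applying the chain rule yields $I(\bar{\rvax}_{2-\kappa}^{0};\bar{\rvay}_1^{m}\mid\bar{\rvax}_1^{m})\leq I(\bar{\rvax}_{2-\kappa}^{0};\bar{\rvay}_1^{\infty}\mid\bar{\rvax}_1^{\infty})$, a quantity independent of $m$. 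Its finiteness is then read off from the block structure of the construction~\eqref{subeq:tilde_xy_def}--\eqref{subeq:xbar_ybar_def}: conditionally on $\bar{\rvax}_1^{\infty}$ and on the finitely-valued shift used to build $\bar{\rvay}$, the reconstruction retains information about the finite segment $\bar{\rvax}_{2-\kappa}^{0}$ only through the (at most two) source blocks straddling the non-positive indices, hence through $\order{n}$ coordinates of $\bar{\rvay}$, so the mutual information is finite. With this the sandwich is complete, the left-hand limit exists and equals $\bar{I}(\bar{\rvax}_1^{\infty};\bar{\rvay}_1^{\infty})$, and therefore $(\ddot{\rvax}_1^{\infty},\ddot{\rvay}_1^{\infty})\in(\Qsp_{\kappa})$.
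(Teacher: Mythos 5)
Your overall strategy is the paper's own: you drop the first $\kappa-1$ samples for the lower bound, and for the upper bound you arrive (via a slightly different ordering of the chain rule) at exactly the paper's three-term decomposition
$I(\ddot{\rvax}_1^{\ell};\ddot{\rvay}_1^{\ell})\leq I(\bar{\rvax}_1^{m};\bar{\rvay}_1^{m})+I(\hat{\rvax}_1^{\kappa-1};\hat{\rvay}_1^{\kappa-1})+I(\bar{\rvax}_{2-\kappa}^{0};\bar{\rvay}_1^{m}\,|\,\bar{\rvax}_1^{m})$
(compare~\eqref{eq:three_MI}), after which everything hinges on showing that the third term is $o(m)$. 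Your preliminary reduction of that term to the $m$-independent quantity $I(\bar{\rvax}_{2-\kappa}^{0};\bar{\rvay}_1^{\infty}\,|\,\bar{\rvax}_1^{\infty})$ via the $\kappa$-th order Markovianity of the source is fine and not in the paper, which instead bounds the finite-$m$ term directly.

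The gap is in the final finiteness claim, which is the step that carries the whole lemma. ``The reconstruction retains information about $\bar{\rvax}_{2-\kappa}^{0}$ only through the blocks straddling the non-positive indices, hence through finitely many coordinates of $\bar{\rvay}$, so the mutual information is finite'' is a non sequitur: mutual information between finitely many (real- or abstract-valued) random variables can perfectly well be infinite, so counting coordinates establishes nothing. What actually makes the term finite is a quantitative bound: after adjoining $\rvat$ (harmless, since $\bar{\rvax}_{-\kappa+2}^{\infty}\Perp\rvat$) and exploiting the conditional independence of each $\tilde{\rvay}$-block from everything else given its own $\tilde{\rvax}$-block (the Markov chains~\eqref{eq:MC_for_tilde_y}), the residual collapses to at most $I(\tilde{\rvax}_{n+1}^{2n};\tilde{\rvay}_{n+1}^{2n})=I(\breve{\rvax}_1^{n};\breve{\rvay}_1^{n})$, and this is finite because $\breve{\rvay}_1^{n}$ was chosen to come within $\e_1+\e_2$ of the finite infimum defining $\hat{R}_c^{it}(D)$ (see~\eqref{eq:Ibreve_leq_Rcitd_plus_epsilons2}). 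This is precisely the computation the paper carries out in~\eqref{eq:first_steps}--\eqref{eq:middle_I} with the auxiliary variables $\rvaa_1,\ldots,\rvaa_4,\rvab_{23},\rvab_4$; you need to supply that chain of identities (or an equivalent), not merely assert the conclusion.
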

\begin{proof}
%
By construction, $(\ddot{\rvax}_\kappa^{\infty},\ddot{\rvay}_\kappa^{\infty})$ are jointly stationary. 
Thus, 
all that remains to prove is that 
$ 
\bar{I}(\ddot{\rvax}_{\kappa}^{\infty};\ddot{\rvay}_{\kappa}^{\infty})
=
\bar{I}(\ddot{\rvax}_{1}^{\infty};\ddot{\rvay}_{1}^{\infty})
$ (see Definition~\ref{def:quasi_jointly_stat}).
For this purpose, notice that for all $i>\kappa$,
\begin{align}
   I(\ddot{\rvax}_{1}^{i};\ddot{\rvay}_{1}^{i})
   &
   \overset{\text{(cr)}}{=}
   I(\ddot{\rvax}_{1}^{i};\ddot{\rvay}_{\kappa}^{i})
   +
   I(\ddot{\rvax}_{1}^{i};\ddot{\rvay}_{1}^{\kappa-1}|\ddot{\rvay}_{\kappa}^{i})
   \\&
   \overset{\text{(cr)}}{=}
   I(\ddot{\rvax}_{1}^{i};\ddot{\rvay}_{\kappa}^{i})
   +
   I(\ddot{\rvax}_{1}^{i},\ddot{\rvay}_{\kappa}^{i};\ddot{\rvay}_{1}^{\kappa-1})
   -
   I(\ddot{\rvay}_{\kappa}^{i};\ddot{\rvay}_{1}^{\kappa-1})
   \\&
   \overset{(a)}{\leq}
   I(\ddot{\rvax}_{1}^{i};\ddot{\rvay}_{\kappa}^{i})
   +
   I(\ddot{\rvax}_{1}^{i},\ddot{\rvay}_{\kappa}^{i};\ddot{\rvay}_{1}^{\kappa-1})
   \\&
   \overset{\text{(cr)}}{=}
   I(\ddot{\rvax}_{1}^{i};\ddot{\rvay}_{\kappa}^{i})
   +
   I(\ddot{\rvax}_{1}^{\kappa-1};\ddot{\rvay}_{1}^{\kappa-1})
   +
   I(\ddot{\rvax}_{\kappa}^{i},\ddot{\rvay}_{\kappa}^{i} ; \ddot{\rvay}_{1}^{\kappa-1}|\ddot{\rvax}_{1}^{\kappa-1})
   \\&
   \overset{(b)}{=}
   I(\ddot{\rvax}_{1}^{i};\ddot{\rvay}_{\kappa}^{i})
   +
   I(\ddot{\rvax}_{1}^{\kappa-1};\ddot{\rvay}_{1}^{\kappa-1})
\\&
\overset{\text{(cr)}}{=}
  I(\ddot{\rvax}_{\kappa}^{i};\ddot{\rvay}_{\kappa}^{i})
  +
  I(\ddot{\rvay}_{\kappa}^{i}; \ddot{\rvax}_{1}^{\kappa-1} |\ddot{\rvax}_{\kappa}^{i})
  +
  I(\ddot{\rvay}_{1}^{\kappa-1};\ddot{\rvax}_{1}^{\kappa-1})
  \label{eq:three_MI} 
  \end{align}
where all the equalities labeled ``(cr)'' stem from the chain-rule of mutual information,
$(a)$ holds because $I(\rvaa;\rvab) \geq 0$ , and 
$(b)$ is a consequence of the fact that 
$\ddot{\rvax}_{1}^{\kappa-1}=\hat{\rvax}_{1}^{\kappa-1}$, 
$\ddot{\rvay}_{1}^{\kappa-1}=\hat{\rvay}_{1}^{\kappa-1}$,
and Markov chain~\eqref{eq:MC_haty}.
The mutual information in the middle of~\eqref{eq:three_MI} can be upper bounded as
\begin{align}
  I(\ddot{\rvay}_{\kappa}^{i}; \ddot{\rvax}_{1}^{\kappa-1} |\ddot{\rvax}_{\kappa}^{i})
  &
  \overset{(\text{cr})}{=}
  I(\ddot{\rvay}_{\kappa}^{i},\rvat; \ddot{\rvax}_{1}^{\kappa-1} |\ddot{\rvax}_{\kappa}^{i})
  -
  I(\rvat; \ddot{\rvax}_{1}^{\kappa-1} |\ddot{\rvax}_{\kappa}^{i},\ddot{\rvay}_{\kappa}^{i})
  \\
  &
  \overset{(a)}{\leq}
  I(\ddot{\rvay}_{\kappa}^{i},\rvat; \ddot{\rvax}_{1}^{\kappa-1} |\ddot{\rvax}_{\kappa}^{i})
  \\&
  \overset{(\text{cr})}{=}
  I(\ddot{\rvay}_{\kappa}^{i}; \ddot{\rvax}_{1}^{\kappa-1} |\ddot{\rvax}_{\kappa}^{i},\rvat)
  +
  I(\rvat; \ddot{\rvax}_{1}^{\kappa-1} |\ddot{\rvax}_{\kappa}^{i})
  \\
  &
  \overset{(b)}{=}
  I(\ddot{\rvay}_{\kappa}^{i}; \ddot{\rvax}_{1}^{\kappa-1} |\ddot{\rvax}_{\kappa}^{i},\rvat),
\label{eq:upper_bound}
\end{align}
where the equalities labeled $(\text{cr})$ are due to the chain rule of mutual information,
$(a)$ follows because mutual information is non-negative, and 
$(b)$ holds because $\bar{\rvax}_{-\kappa+2}^{\infty}\Perp\rvat$ implies 
$\ddot{\rvax}_{1}^{\infty}\Perp \rvat$.

Using the definition of $\ddot{\rvax}_1^{\infty},\ddot{\rvay}_1^{\infty}$ (see~\eqref{eq:ddot_x_and_ddot_y_def}) we have that, for the case $i\geq n+\kappa$,
\begin{align}
   I(\ddot{\rvay}_{\kappa}^{i}; \ddot{\rvax}_{1}^{\kappa-1} |\ddot{\rvax}_{\kappa}^{i},\rvat)
   &=
 I(\tilde{\rvay}_{n+1+\rvat}^{n+1+\rvat+i-\kappa}; 
 \tilde{\rvax}_{n+\rvat-\kappa+2}^{n+\rvat} 
 | \tilde{\rvax}_{n+1+\rvat}^{n+1+\rvat+i-\kappa},\rvat) 
 \\&
 \overset{(\textbf{P1})}{\leq}
  I(\tilde{\rvay}_{n+1}^{n+1+\rvat+i-\kappa}; 
 \tilde{\rvax}_{1}^{n+\rvat} 
 | \tilde{\rvax}_{n+1+\rvat}^{n+1+\rvat+i-\kappa},\rvat) 
\\&
 \overset{\hphantom{(a)}}{=}
  I(\rvab_{23},\rvab_{4}; 
 \rvaa_{1},\rvaa_{2} 
 | \rvaa_{3},\rvaa_{4},\rvat),\label{eq:first_steps}
 \end{align}
%
where the random elements
\begin{subequations}\label{eq:ab_defs}
\begin{align}
  \rvaa_{1}&\eq \tilde{\rvax}_{1}^{n}
  &
  \rvaa_{2}&\eq \tilde{\rvax}_{n+1}^{n+\rvat}
  \\
  \rvaa_{3}&\eq \tilde{\rvax}_{n+\rvat+1}^{2n}
  &
  \rvaa_{4}&\eq \tilde{\rvax}_{2n+1}^{ n+1+\rvat+i-\kappa}
  \\
    \rvab_{23}&\eq \tilde{\rvay}_{n+1}^{2n}
  &
    \rvab_{4}&\eq \tilde{\rvay}_{2n+1}^{ n+1+\rvat+i-\kappa}
  \end{align}
  \end{subequations}
are introduced so as to streamline the presentation of the following steps.
The relations between all these variables is illustrated in Fig.~\ref{fig:I_desarrollo}.  
Notice that for these sequences the Markov chains~\eqref{eq:MC_for_tilde_y} translate into
\begin{align}
  \rvab_{23} \longleftrightarrow&  \rvaa_{2},\rvaa_{3} \longleftrightarrow \rvaa_{1},\rvaa_{4},\rvab_{4} \label{eq:MCb23}\\
  \rvab_{4}  \longleftrightarrow& \rvaa_{4} \longleftrightarrow \rvaa_{1},\rvaa_{2},\rvaa_{3},\rvab_{23} \label{eq:MCb4}
\end{align}
With this, we can continue from~\eqref{eq:first_steps} and deduce that 
\begin{align}
    I(\ddot{\rvay}_{\kappa}^{i}; \ddot{\rvax}_{1}^{\kappa-1} |\ddot{\rvax}_{\kappa}^{i},\rvat)
   &\leq
  I(\rvab_{23},\rvab_{4};  \rvaa_{1},\rvaa_{2}  | \rvaa_{3},\rvaa_{4},\rvat)
  \\&
  \overset{(\text{cr})}{=}
  I(\rvab_{23};  \rvaa_{1},\rvaa_{2}  | \rvaa_{3},\rvaa_{4},\rvat)
  +
  I(\rvab_{4};  \rvaa_{1},\rvaa_{2}  |\rvab_{23}, \rvaa_{3},\rvaa_{4},\rvat)
  \\&
  \overset{\eqref{eq:MCb4}}{=}
  I(\rvab_{23};  \rvaa_{1},\rvaa_{2}  | \rvaa_{3},\rvaa_{4},\rvat)
  \\&
  \overset{(\text{cr})}{=}
  I(\rvab_{23};  \rvaa_{1},\rvaa_{2},\rvaa_{3},\rvaa_{4}  | \rvat)
  -
  I(\rvab_{23};  \rvaa_{3},\rvaa_{4} | \rvat)
  \\&
  \overset{(a)}{\leq}
  I(\rvab_{23};  \rvaa_{1},\rvaa_{2},\rvaa_{3},\rvaa_{4}  | \rvat)
  \\&
  \overset{(\text{cr})}{=}
    I(\rvab_{23};  \rvaa_{2},\rvaa_{3} | \rvat)
    +
    I(\rvab_{23};  \rvaa_{1},\rvaa_{4}  | \rvaa_{2},\rvaa_{3},\rvat)
 \\&
  \overset{\eqref{eq:MCb23}}{=}
    I(\rvab_{23};  \rvaa_{2},\rvaa_{3} | \rvat)
    \\&
    \overset{\eqref{eq:ab_defs}}{=}
  I(\tilde{\rvay}_{n+1}^{2n}; \tilde{\rvax}_{n+1}^{2n}| \rvat)
  \\&
  \overset{(b)}{=}
   I(\breve{\rvay}_{1}^{n}; \breve{\rvax}_{1}^{n})\label{eq:middle_I}
\end{align}
where $(a)$ follows because mutual information is non-negative and 
$(b)$ 
is due to~\eqref{eq:i_same0} and the fact that $(\tilde{\rvax}_1^{\infty},\tilde{\rvay}_1^{\infty}) \Perp \rvat$.
\begin{figure}[htbp]
  \centering
   \input{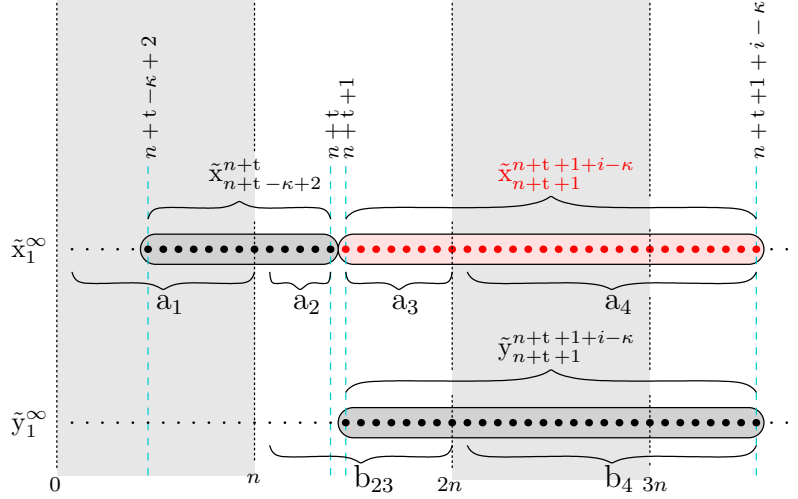}
\caption{Schematic representation of the change of variables introduced in~\eqref{eq:first_steps}.
Each dot represents one element in the sequences $\tilde{\rvax}_{1}^{\infty}$ and $\tilde{\rvay}_{1}^{\infty}$, with time increasing from left to right.}
\label{fig:I_desarrollo}
\end{figure}

Substituting~\eqref{eq:middle_I} into~\eqref{eq:upper_bound} and then the latter into~\eqref{eq:three_MI}, we arrive at 
\begin{align}
  I(\ddot{\rvax}_{\kappa}^{i};\ddot{\rvay}_{\kappa}^{i})
 \overset{\text{(P\ref{property:Iabc_larger_than_Iab})}}{\leq}
  I(\ddot{\rvax}_{1}^{i};\ddot{\rvay}_{1}^{i})
&
  \leq
  I(\ddot{\rvay}_{\kappa}^{i};\ddot{\rvax}_{\kappa}^{i})
  +
  I( \breve{\rvay}_{1}^{n}; \breve{\rvax}_{1}^{n})
  +
  I(\ddot{\rvay}_{1}^{\kappa-1};\ddot{\rvax}_{1}^{\kappa-1}).
\end{align}
Dividing by $i$ and taking the limit as $i\to\infty$, we conclude that, for all $n\in\Nl$,  
\begin{align}\label{eq:ddot_xy_is_kappa_QJS}
  \lim_{i\to\infty}
  \frac{1}{i}
    I(\ddot{\rvax}_{\kappa}^{i};\ddot{\rvay}_{\kappa}^{i})
    =
 \lim_{i\to\infty}
  \frac{1}{i}     I(\ddot{\rvax}_{1}^{i};\ddot{\rvay}_{1}^{i}),
\end{align}
proving the claim that 
$(\ddot{\rvax}_{1}^{\infty},\ddot{\rvay}_{1}^{\infty})\in (\Qsp_{\kappa})$.
\end{proof}

\begin{lem}
\label{lem:MI_ddot_processes}
Let $(\ddot{\rvax}_1^{\infty},\ddot{\rvay}_1^{\infty})$ the concatenated processes defined in~\eqref{eq:ddot_x_and_ddot_y_def}, built from 
$(\bar{\rvax}_1^{\infty},\bar{\rvay}_1^{\infty})$ 
(see~\eqref{subeq:xbar_ybar_def})
and
$(\hat{\rvax}_1^{\kappa-1},\hat{\rvay}_1^{\kappa -1})$ 
(see~\eqref{eq:hatx} and~\eqref{eq:MC_haty} and the text between these equations).
Then
%
\begin{align*}
\lim_{m \to \infty} 
\frac{1}{m}
I(\ddot{\rvax}_1^{m};\ddot{\rvay}_1^{m})
\leq 
\frac{1}{n} I(\breve{\rvax}_1^{n};\breve{\rvay}_1^{n}).
\end{align*}
\finenunciado
\end{lem}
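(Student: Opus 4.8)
The plan is to reduce the claim, by way of Lemma~\ref{lem:ddot_processes_are_kQJS}, to an estimate on the shifted pair $(\bar{\rvax}_1^{\infty},\bar{\rvay}_1^{\infty})$ alone, and then to bound that estimate by combining the block structure of $(\tilde{\rvax}_1^{\infty},\tilde{\rvay}_1^{\infty})$ (via~\eqref{eq:Irate_of_block_less_than_sequence_of_blocks0}) with the averaging effect of the random shift $\rvat$.

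First I would invoke Lemma~\ref{lem:ddot_processes_are_kQJS}, which states that $(\ddot{\rvax}_1^{\infty},\ddot{\rvay}_1^{\infty})\in(\Qsp_{\kappa})$; by Definition~\ref{def:quasi_jointly_stat} this means both limits below exist and
\[
\lim_{m\to\infty}\frac{1}{m}I(\ddot{\rvax}_1^{m};\ddot{\rvay}_1^{m})
=\lim_{i\to\infty}\frac{1}{i}I(\ddot{\rvax}_{\kappa}^{\kappa+i-1};\ddot{\rvay}_{\kappa}^{\kappa+i-1}).
\]
Since by construction $\ddot{\rvax}_{\kappa}^{\kappa+i-1}=\bar{\rvax}_1^{i}$ and $\ddot{\rvay}_{\kappa}^{\kappa+i-1}=\bar{\rvay}_1^{i}$ (see~\eqref{eq:ddot_x_and_ddot_y_def} and~\eqref{subeq:xbar_ybar_def}), it suffices to show $\lim_{i\to\infty}\frac{1}{i}I(\bar{\rvax}_1^{i};\bar{\rvay}_1^{i})\leq\frac{1}{n}I(\breve{\rvax}_1^{n};\breve{\rvay}_1^{n})$. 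If the right-hand side is infinite this is trivial, so I may assume $I(\breve{\rvax}_1^{n};\breve{\rvay}_1^{n})<\infty$.

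Next I would condition on $\rvat$. Using that $\bar{\rvax}_{-\kappa+2}^{\infty}\Perp\rvat$ (established in the proof of Theorem~\ref{thm:QJS_is_sufficient}) together with Property~\ref{property:Iabc_larger_than_Iab}, one gets $I(\bar{\rvax}_1^{i};\bar{\rvay}_1^{i})\leq I(\bar{\rvax}_1^{i};\bar{\rvay}_1^{i},\rvat)=I(\bar{\rvax}_1^{i};\rvat)+I(\bar{\rvax}_1^{i};\bar{\rvay}_1^{i}\mid\rvat)=I(\bar{\rvax}_1^{i};\bar{\rvay}_1^{i}\mid\rvat)$. Because $\rvat$ is independent of the entire pair $(\tilde{\rvax}_1^{\infty},\tilde{\rvay}_1^{\infty})$ and $\bar{\rvax}_1^{i}=\tilde{\rvax}_{n+\rvat+1}^{n+\rvat+i}$, $\bar{\rvay}_1^{i}=\tilde{\rvay}_{n+\rvat+1}^{n+\rvat+i}$, this conditional mutual information equals $\frac{1}{n}\sum_{t=0}^{n-1}I(\tilde{\rvax}_{n+t+1}^{n+t+i};\tilde{\rvay}_{n+t+1}^{n+t+i})$. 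For each $t\in\{0,1,\ldots,n-1\}$ I then set $j_t\eq\lceil(n+t+i)/n\rceil\,n$, a multiple of $n$ with $j_t\geq n+t+i$; two applications of Property~\ref{property:Iabc_larger_than_Iab} (first enlarging the $\rvay$-window, then the $\rvax$-window, to $1,\ldots,j_t$) give $I(\tilde{\rvax}_{n+t+1}^{n+t+i};\tilde{\rvay}_{n+t+1}^{n+t+i})\leq I(\tilde{\rvax}_1^{j_t};\tilde{\rvay}_1^{j_t})$, and~\eqref{eq:Irate_of_block_less_than_sequence_of_blocks0} (applied with $j_t/n$ blocks) bounds the latter by $\tfrac{j_t}{n}I(\breve{\rvax}_1^{n};\breve{\rvay}_1^{n})=\lceil(n+t+i)/n\rceil\,I(\breve{\rvax}_1^{n};\breve{\rvay}_1^{n})$. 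Since $t<n$ we have $\lceil(n+t+i)/n\rceil<(n+t+i)/n+1<3+i/n$, so averaging over $t$ yields $I(\bar{\rvax}_1^{i};\bar{\rvay}_1^{i})\leq(3+i/n)I(\breve{\rvax}_1^{n};\breve{\rvay}_1^{n})$; dividing by $i$ and letting $i\to\infty$ finishes the proof.

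I expect the only delicate point to be the conditioning identity $I(\bar{\rvax}_1^{i};\bar{\rvay}_1^{i}\mid\rvat)=\frac{1}{n}\sum_{t}I(\tilde{\rvax}_{n+t+1}^{n+t+i};\tilde{\rvay}_{n+t+1}^{n+t+i})$: it relies on $\rvat$ being independent of the joint process $(\tilde{\rvax}_1^{\infty},\tilde{\rvay}_1^{\infty})$ rather than merely of $\tilde{\rvax}_1^{\infty}$, a strengthening of the construction of $\rvat$ that is harmless and is anyway already used in the proof of Lemma~\ref{lem:ddot_processes_are_kQJS}. Everything else is index bookkeeping and the elementary ceiling estimate.
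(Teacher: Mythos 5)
Your proposal is correct and follows essentially the same route as the paper's proof: reduce to $\bar{I}(\bar{\rvax}_{1}^{\infty};\bar{\rvay}_{1}^{\infty})$ via the $\kappa$-QJS property of $(\ddot{\rvax}_{1}^{\infty},\ddot{\rvay}_{1}^{\infty})$, pass to the conditional mutual information given $\rvat$ using $\rvat\Perp\bar{\rvax}_{1}^{\infty}$, and bound each shifted window by a multiple-of-$n$ block via~\eqref{eq:Irate_of_block_less_than_sequence_of_blocks0} (the paper packages this last step as its pre-established inequality~\eqref{eq:I_inequality}, while you re-derive it with an explicit ceiling estimate). Your remark that the conditioning step needs $\rvat\Perp(\tilde{\rvax}_{1}^{\infty},\tilde{\rvay}_{1}^{\infty})$ rather than only $\rvat\Perp\tilde{\rvax}_{1}^{\infty}$ is accurate and is the same implicit strengthening the paper itself relies on.
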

\begin{proof}
First, notice that~\eqref{eq:ddot_xy_is_kappa_QJS},~\eqref{subeq:xbar_ybar_def} and~\eqref{eq:ddot_x_and_ddot_y_def} imply 
\begin{align}\label{eq:ibar_ddot_equals_ibar_bar}
\bar{I}(\ddot{\rvax}_{1}^{\infty};\ddot{\rvay}_{1}^{\infty})
=
\bar{I}(\ddot{\rvax}_{\kappa}^{\infty};\ddot{\rvay}_{\kappa}^{\infty})
=
 \bar{I}(\bar{\rvax}_{1}^{\infty};\bar{\rvay}_{1}^{\infty}).
\end{align}
On the other hand
\begin{align}
I(\bar{\rvax}_{1}^{m};\bar{\rvay}_{1}^{m})
&
\overset{(a)}{\leq} 
I(\bar{\rvax}_{1}^{m};\bar{\rvay}_{1}^{m})
+
I(\bar{\rvax}_{1}^{m};\rvat|\bar{\rvay}_{1}^{m})
\\&
\overset{\text{(cr)}}{=}
I(\bar{\rvax}_{1}^{m};\bar{\rvay}_{1}^{m},\rvat)
\\&
\overset{(b)}{=}
I(\bar{\rvax}_{1}^{m};\bar{\rvay}_{1}^{m},\rvat)
-
I(\bar{\rvax}_{1}^{m};\rvat)
\\&
\overset{\text{(cr)}}{=}
I(\bar{\rvax}_{1}^{m};\bar{\rvay}_{1}^{m} |\rvat),
\label{eq:I_xybar_upperbound}
\end{align}
where 
$(a)$ is due to the non-negativity of mutual information and 
$(b)$ holds because $\rvat\Perp \bar{\rvax}_{1}^{\infty}$.
But
\begin{align}
 \frac{1}{m}
 I(\bar{\rvax}_{1}^{m};\bar{\rvay}_{1}^{m}|\rvat)
 &
 \overset{\eqref{subeq:xbar_ybar_def}}{=}
\frac{1}{m}
 I(\tilde{\rvax}_{n+\rvat+1}^{n+\rvat+m};\tilde{\rvay}_{n+\rvat+1}^{n+\rvat+m}|\rvat)
\\&
 \overset{(a)}{=}
 \frac{1}{m}
 I(\tilde{\rvax}_{\rvat+1}^{\rvat+m};\tilde{\rvay}_{\rvat+1}^{\rvat+m}|\rvat)
 \\&
  \overset{\eqref{eq:I_inequality}}{\leq}
 \frac{1}{n}I(\breve{\rvax}_{1}^{n};\breve{\rvay}_{1}^{n})
 + 
 \frac{2}{m} I(\breve{\rvax}_{1}^{n};\breve{\rvay}_{1}^{n}),
 \label{eq:bound1}
\end{align}
where 
$(a)$ follows from the fact that, for all $t\in\set{0,1,\ldots,n-1}$,
$(\tilde{\rvax}_{n+t+1}^{n+t+m},\tilde{\rvay}_{n+t+1}^{n+t+m})
\sim
(\tilde{\rvax}_{t+1}^{t+m},\tilde{\rvay}_{t+1}^{t+m})$
(see~\eqref{subeq:tilde_xy_def}). 
Substituting this into~\eqref{eq:I_xybar_upperbound} we obtain that
for every $n\in\Nl$ and $m\in\Nl$,
\begin{align}\label{eq:lalalala}
\frac{1}{m}
  I(\bar{\rvax}_{1}^{m};\bar{\rvay}_{1}^{m}) 
  &
  \leq 
  \frac{1}{n}I(\breve{\rvax}_{1}^{n};\breve{\rvay}_{1}^{n})
  +
   \frac{2}{m}I(\breve{\rvax}_{1}^{n};\breve{\rvay}_1^{n}).
  \end{align}
The proof is completed by taking the limit as $m\to\infty$ and substituting~\eqref{eq:ibar_ddot_equals_ibar_bar} in it.
\end{proof} 

\begin{proof}[Proof of Proposition~\ref{prop:there_exits_ybar_strongly_causal}]\label{proof:of_prop:there_exits_ybar_strongly_causal}
From Fact~\ref{fact:abc_and_barabc}, there exists $\bar{\rvay}_{1}^{\infty}$ such that \eqref{eq:likewise} holds and which satisfies 
\begin{align}
  \rvax_{-\infty}^{0}
  \longleftrightarrow 
 \rvax_{1}^{\infty} 
  \longleftrightarrow
   \bar{\rvay}_{1}^{\infty}.  \label{eq:conditional_past_independence}
\end{align}
Combining~\eqref{eq:likewise} with~\eqref{eq:short_causality} one obtains
\begin{align}\label{eq:esta1}
\underbrace{\rvax_{k+1}^{\infty}}_{\rvad}
  \longleftrightarrow
  \underbrace{\rvax_{1}^{k}}_{\rvac}
  \longleftrightarrow
  \underbrace{\bar{\rvay}_{1}^{k}}_{\rvaa}
  ,\fspace k\in\Nl.
\end{align}
On the other hand,~\eqref{eq:conditional_past_independence} readily  implies that
\begin{align}\label{eq:esta2}
\underbrace{\rvax_{-\infty}^{0}}_{\rvab}
  \longleftrightarrow 
    \underbrace{\rvax_{1}^{\infty}}_{\rvac,\rvad}
  \longleftrightarrow
  \underbrace{\bar{\rvay}_{1}^{k}}_{\rvaa}.
\end{align}
Applying Proposition~\ref{prop:abcd} with $\rvaa,\rvab,\rvac,\rvad$ corresponding to the labels placed under the terms in~\eqref{eq:esta1} and~\eqref{eq:esta2}, we obtain directly~\eqref{eq:strong_causality_ybar},  completing the proof. 
\end{proof}

\begin{proof}[Proof of Corollary~\ref{coro:overlineRcitD_equals_RcitD}]\label{proof:coro_overlineRcitD_equals_RcitD}
We will first show that $  \overline{R}_{c}^{it}(D)\leq R_{c}^{it}(D)$ and then that the reverse inequality is true as well.

From Theorem~\ref{thm:QJS_is_sufficient_strong}, 
$R_{c}^{it}(D) = R_{c\text{(strong)}}^{it}(D)$.
Also, Theorem~\ref{thm:QJS_is_sufficient_strong} states that the search for $R_{c\text{(strong)}}^{it}(D)$ can be confined to pairs of processes 
 $({\rvax}_{-\infty}^{\infty},{\rvay}_{1}^{\infty})$ which satisfy~\eqref{eq:MC_causality_hybrid} and
 such that   
 $
 ({\rvax}_{1}^{\infty},{\rvay}_{1}^{\infty})\in
 (\Qsp_{\kappa})\cap (\Pspa_{D}^{\infty}))$
 and 
 $
 ({\rvax}_{\kappa}^{\infty},{\rvay}_{\kappa}^{\infty})\in
 (\Pspa_{D}^{\infty})
$.
For each such pair, one can construct the pair 
of processes
 $ (\ddot{\rvax}_{-\infty}^{\infty},\ddot{\rvay}_{1}^{\infty})$ as
 \begin{subequations}\label{eq:ddot_construction}
 \begin{align}
 \ddot{\rvax}_{-\infty}^{0} &\eq {\rvax}_{-\infty}^{\kappa-1} \\
 (\ddot{\rvax}_{1}^{\infty},\ddot{\rvay}_{1}^{\infty})&
 \eq ({\rvax}_{\kappa}^{\infty},{\rvay}_{\kappa}^{\infty}).
 \end{align}
  \end{subequations}
 This construction yields that 
 \begin{align}
(\ddot{\rvax}_{1}^{\infty},\ddot{\rvay}_{1}^{\infty})&\in (\Pspa_{D}^{\infty})  
\\
(\ddot{\rvax}_{1}^{\infty},\ddot{\rvay}_{1}^{\infty})&\in (\Qsp_{1}) 
\text{  (since  $({\rvax}_{1}^{\infty},{\rvay}_{1}^{\infty})\in (\Qsp_{\kappa})$)}
 \\ 
 \bar{I}(\ddot{\rvax}_{-\kappa+2}^{\infty};\ddot{\rvay}_{1}^{\infty})
 &\overset{\eqref{eq:ddot_construction}}{=}
 \bar{I}({\rvax}_{1}^{\infty};{\rvay}_{\kappa}^{\infty})
 = 
 \bar{I}({\rvax}_{1}^{\infty};{\rvay}_{1}^{\infty}),\label{eq:estasi} 
 \end{align}
 where the last equality stems from the fact that 
 $\bar{I}({\rvax}_{1}^{\infty};{\rvay}_{1}^{\infty})
 \geq 
 \bar{I}({\rvax}_{1}^{\infty};{\rvay}_{\kappa}^{\infty})
 \geq 
 \bar{I}({\rvax}_{\kappa}^{\infty};{\rvay}_{\kappa}^{\infty})$
 and recalling that 
 $({\rvax}_{1}^{\infty},{\rvay}_{1}^{\infty})\in (\Qsp_{\kappa})$ 
 and the definition of $(\Qsp_{\kappa})$
 implies  
 $
 \bar{I}({\rvax}_{1}^{\infty};{\rvay}_{1}^{\infty})
 =
 \bar{I}({\rvax}_{\kappa}^{\infty};{\rvay}_{\kappa}^{\infty})
 $.
In addition, the fact that $(\rvax_{-\infty}^{\infty},\rvay_{1}^{\infty})$ satisfies~\eqref{eq:MC_causality_hybrid} implies that 
\begin{align}
(\ddot{\rvax}_{-\infty}^{-\kappa+1}, \ddot{\rvax}_{k+1}^{\infty})
 \longleftrightarrow 
 \ddot{\rvax}_{-\kappa+2}^{k}
\longleftrightarrow 
\ddot{\rvay}_{1}^{k} 
\end{align}
which in turn leads directly to 
\begin{align}
\ddot{\rvax}_{k+1}^{\infty}
 \longleftrightarrow 
 \ddot{\rvax}_{-\infty}^{k}
\longleftrightarrow 
\ddot{\rvay}_{1}^{k} 
\end{align}
and to
\begin{align}
 \bar{I} (\ddot{\rvax}_{\ell}^{n};\ddot{\rvay}_{1}^{n}) = 
 \bar{I}(\ddot{\rvax}_{-\kappa+2}^{n};\ddot{\rvay}_{1}^{n}), \fspace \forall \ell <-\kappa+2, \forall n\in\Nl.
\end{align}
This leads to 
\begin{align}
 \lim_{\ell\to-\infty}
 \lim_{n\to\infty}
 \frac{1}{n}
\bar{I}(\ddot{\rvax}_{\ell}^{n};\ddot{\rvay}_{1}^{n})
 =
  \lim_{n\to\infty}
 \frac{1}{n}
I(\ddot{\rvax}_{-\kappa+2}^{n};\ddot{\rvay}_{1}^{n})
=
\bar{I}(\ddot{\rvax}_{-\kappa+2}^{\infty};\ddot{\rvay}_{1}^{\infty})
\overset{\eqref{eq:estasi}}{=}
\bar{I}(\rvax_{1}^{\infty};\rvay_{1}^{\infty}).
\end{align}

Therefore, for every pair of processes which satisfies the constraints associated with $R_{c}^{it}(D)$, there exists another pair which satisfies the constraints in the definition of $\overline{R}_{c}^{it}(D)$ and yields the same information rate.
This proves that $\overline{R}_{c}^{it}(D)\leq R_{c}^{it}(D)$.

In order to show that $\overline{R}_{c}^{it}(D)\geq R_{c}^{it}(D)$, consider any pair 
$
(\rvax_{-\infty}^{\infty},\rvay_{1}^{\infty})$ 
satisfying~\eqref{eq:MC_causality_2sdd}
and such that 
$
(\rvax_{1}^{\infty},\rvay_{1}^{\infty})\in(\Qsp_{1})\cap(\Pspa_{D}^{\infty})
$.
Construct the pair of processes 
$(\dot{\rvax}_{1}^{\infty},\dot{\rvay}_{1}^{\infty})$ as
\begin{subequations}\label{eq:dot_rvars_def}
\begin{align}
 (\dot{\rvax}_{\kappa}^{\infty},\dot{\rvay}_{\kappa}^{\infty})
 &\eq
 (\rvax_{1}^{\infty},\rvay_{1}^{\infty})
 \\
 \dot{\rvax}_{1}^{\kappa-1}&\eq \rvax_{-\kappa+2}^{0},
\end{align}
\end{subequations}
and let the joint distribution of the pair 
$(\dot{\rvax}_{1}^{\kappa-1},\dot{\rvay}_{1}^{\kappa-1})$
be such that 
$(\dot{\rvax}_{1}^{\kappa-1},\dot{\rvay}_{1}^{\kappa-1})\in(\Wsp_{D}^{1,\kappa-1})\cap(\Csp^{\kappa-1})$ and 
\begin{align}\label{eq:MC_first_piece}
\dot{\rvay}_{1}^{\kappa-1}
\longleftrightarrow 
\dot{\rvax}_{1}^{\kappa-1}
\longleftrightarrow 
\dot{\rvax}_{\kappa}^{\infty},\dot{\rvay}_{\kappa}^{\infty}
\end{align}
(the existence of such pair is guaranteed by the ``first-samples condition'' in the statement of Theorem~\ref{thm:QJS_is_sufficient}).
From Assumption~\ref{assu:Pspa_D_inf_assu}, this construction yields 
\begin{align}
 (\dot{\rvax}_{1}^{\infty},\dot{\rvay}_{1}^{\infty})
 \in 
 (\Pspa_{D}^{\infty}).
\end{align}
The fact that 
$
(\rvax_{-\infty}^{\infty},\rvay_{1}^{\infty})$ 
satisfies~\eqref{eq:MC_causality_2sdd} translates into 
\begin{align}\label{eq:do}
\underbrace{\dot{\rvay}_{\kappa}^{k}}_{\rvab}
 \longleftrightarrow 
 (\underbrace{\dot{\rvax}_{-\infty}^{0}}_{\rvad}
, \underbrace{\dot{\rvax}_{1}^{k}}_{\rvac}
)
 \longleftrightarrow 
\underbrace{ \dot{\rvax}_{k+1}^{\infty}}_{\rvaa}
,\fspace k=\kappa,\kappa+1,\ldots,
\end{align}
The $\kappa$-th order Markovianity of $\dot{\rvax}_{-\infty}^{\infty}$ yields that  
\begin{align}\label{eq:re}
 \underbrace{\dot{\rvax}_{-\infty}^{0}}_{\rvad}
\longleftrightarrow
 \underbrace{\dot{\rvax}_{1}^{k}}_{\rvac}
 \longleftrightarrow 
\underbrace{ \dot{\rvax}_{k+1}^{\infty}}_{\rvaa},
\fspace k=\kappa,\kappa+1,\ldots,
\end{align}
Applying Proposition~\ref{prop:abcd} to~\eqref{eq:do} and~\eqref{eq:re} with variables in the proposition assigned according to the labels under~\eqref{eq:do} and~\eqref{eq:re}, we obtain that 
\begin{align}\label{eq:mi}
\dot{\rvay}_{\kappa}^{k} 
 \longleftrightarrow 
 \dot{\rvax}_{1}^{k} 
 \longleftrightarrow 
\dot{\rvax}_{k+1}^{\infty} 
,\fspace k=\kappa,\kappa+1,\ldots,
\end{align}
which combined with 
$(\dot{\rvax}_{1}^{\kappa-1},\dot{\rvay}_{1}^{\kappa-1})\in(\Wsp_{D}^{1,\kappa-1})\cap(\Csp^{\kappa-1})$
yields
$(\dot{\rvax}_{1}^{\infty},\dot{\rvay}_{1}^{\infty})\in(\Pspa_{D}^{\infty})\cap(\Csp^{\infty})$.
In addition, 
\begin{align}
  I(\dot{\rvax}_{1}^{n};\dot{\rvay}_{1}^{n})
  &
  \overset{(\text{cr})}{=}
  I(\dot{\rvax}_{1}^{n};\dot{\rvay}_{\kappa}^{n})
  +
  I(\dot{\rvax}_{1}^{n};\dot{\rvay}_{1}^{\kappa-1}|\dot{\rvay}_{\kappa}^{n})
\\&
  \overset{(\textbf{P\ref{property:Iabc_larger_than_Iab_gvn_c}})}{\leq}
  I(\dot{\rvax}_{1}^{n};\dot{\rvay}_{\kappa}^{n})
  +
  I(\dot{\rvax}_{1}^{n},\dot{\rvay}_{\kappa}^{n};\dot{\rvay}_{1}^{\kappa-1})
\\&
  \overset{(\text{cr})}{=}
  I(\dot{\rvax}_{1}^{n};\dot{\rvay}_{\kappa}^{n})
  +
  I(\dot{\rvax}_{1}^{n};\dot{\rvay}_{1}^{\kappa-1})
  +
    I(\dot{\rvay}_{\kappa}^{n};\dot{\rvay}_{1}^{\kappa-1}|\dot{\rvax}_{1}^{n})
\\&
\overset{\eqref{eq:MC_first_piece}}{=}
  I(\dot{\rvax}_{1}^{n};\dot{\rvay}_{\kappa}^{n})
  +
  I(\dot{\rvax}_{1}^{\kappa-1};\dot{\rvay}_{1}^{\kappa-1}).
  \end{align}
On the other hand,
\begin{align}
  I(\dot{\rvax}_{1}^{n};\dot{\rvay}_{\kappa}^{n})
  \overset{\eqref{eq:dot_rvars_def}}{=}
  I( \rvax_{-\kappa+2}^{n-\kappa+1}; \rvay_{1}^{n-\kappa+1})
  \overset{(\textbf{P\ref{property:Iabc_larger_than_Iab})}}{\leq} 
   I( \rvax_{\ell}^{n-\kappa+1}; \rvay_{1}^{n-\kappa+1})
\end{align}
for all $\ell< -\kappa+2$.
Thus
\begin{align}
  \lim_{n\to\infty} \frac{1}{n}  I(\dot{\rvax}_{1}^{n};\dot{\rvay}_{1}^{n})
  &\leq
  \lim_{\ell\to-\infty}
  \lim_{n\to\infty}\frac{1}{n} 
  \left[
      I( \rvax_{\ell}^{n-\kappa+1}; \rvay_{1}^{n-\kappa+1})
      + I(\dot{\rvax}_{1}^{\kappa-1};\dot{\rvay}_{1}^{\kappa-1})
      \right]
     \\& 
     =
       \lim_{\ell\to-\infty} \lim_{k\to\infty}\frac{1}{k} 
      I( \rvax_{\ell}^{k}; \rvay_{1}^{k}).\label{eq:limits_ineq}
\end{align}
Hence, 
for every pair $(\rvax_{-\infty}^{\infty},\rvay_{1}^{\infty})\in (\Csp_{-\infty}^{\infty})$ such that 
$
(\rvax_{1}^{\infty},\rvay_{1}^{\infty})\in(\Qsp_{1})\cap(\Pspa_{D}^{\infty})
$
there exists a pair 
$(\dot{\rvax}_{1}^{\infty},\dot{\rvay}_{1}^{\infty})\in(\Pspa_{D}^{\infty})\cap(\Csp^{\infty})$
satisfying~\eqref{eq:limits_ineq}.
This readily implies that $R_{c}^{it}(D)\leq \overline{R}_{c}^{it}(D)$, completing the proof.
\end{proof}


\subsection{Other Technical Results}
\label{sec:Other_Technical_Results}

\begin{prop}
\label{prop:I_for_MC}
For any random elements $\rvaa_{1},\rvaa_{2},\rvab_{1},\rvab_{2}$ 
satisfying the Markov chains
\begin{align}
  \rvaa_{2},\rvab_{2}
  & \longleftrightarrow
  \rvaa_{1}
  \longleftrightarrow
  \rvab_{1}
  \label{eq:a1_and_b1}
  \\
  \rvaa_{1},\rvab_{1}
  & \longleftrightarrow
  \rvaa_{2}
  \longleftrightarrow
  \rvab_{2}
  \label{eq:a2_and_b2}
\end{align}
it holds that 
\begin{align}
I(\rvaa_{1},\rvaa_{2};\rvab_{1},\rvab_{2}) 
=
I(\rvaa_{1};\rvab_{1})
	+
I(\rvaa_{2};\rvab_{2})
    -
I(\rvab_{1};\rvab_{2})
\end{align}
\finenunciado
\end{prop}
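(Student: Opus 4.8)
The plan is to reduce the identity to repeated application of the chain rule for mutual information, using the two hypothesized Markov chains only to annihilate certain conditional mutual informations. First I would translate the hypotheses into the conditional‑independence statements they encode. The chain $\rvaa_{2},\rvab_{2}\longleftrightarrow\rvaa_{1}\longleftrightarrow\rvab_{1}$ means $I(\rvab_{1};\rvaa_{2},\rvab_{2}\mid\rvaa_{1})=0$, and in particular $I(\rvab_{1};\rvaa_{2}\mid\rvaa_{1})=0$. The chain $\rvaa_{1},\rvab_{1}\longleftrightarrow\rvaa_{2}\longleftrightarrow\rvab_{2}$ means $I(\rvab_{2};\rvaa_{1},\rvab_{1}\mid\rvaa_{2})=0$; since this is a sum of two nonnegative conditional mutual informations, it yields both $I(\rvab_{2};\rvab_{1}\mid\rvaa_{2})=0$ and $I(\rvab_{2};\rvaa_{1}\mid\rvaa_{2},\rvab_{1})=0$.

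Next I would expand the left‑hand side by the chain rule, peeling off $\rvab_{1}$ first:
\[
I(\rvaa_{1},\rvaa_{2};\rvab_{1},\rvab_{2})
= I(\rvaa_{1},\rvaa_{2};\rvab_{1}) + I(\rvaa_{1},\rvaa_{2};\rvab_{2}\mid\rvab_{1}).
\]
By Property~\ref{property:Iabc_larger_than_Iab} and $I(\rvab_{1};\rvaa_{2}\mid\rvaa_{1})=0$, the first term equals $I(\rvaa_{1};\rvab_{1})$. For the second term, peeling off $\rvaa_{1}$ gives $I(\rvaa_{1},\rvaa_{2};\rvab_{2}\mid\rvab_{1}) = I(\rvaa_{2};\rvab_{2}\mid\rvab_{1}) + I(\rvaa_{1};\rvab_{2}\mid\rvaa_{2},\rvab_{1})$, and the last summand vanishes by $I(\rvab_{2};\rvaa_{1}\mid\rvaa_{2},\rvab_{1})=0$; hence $I(\rvaa_{1},\rvaa_{2};\rvab_{1},\rvab_{2}) = I(\rvaa_{1};\rvab_{1}) + I(\rvaa_{2};\rvab_{2}\mid\rvab_{1})$.

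Finally I would rewrite $I(\rvaa_{2};\rvab_{2}\mid\rvab_{1})$ via the chain rule as $I(\rvaa_{2},\rvab_{1};\rvab_{2}) - I(\rvab_{1};\rvab_{2})$, and then invoke Property~\ref{property:Iabc_larger_than_Iab} together with $I(\rvab_{2};\rvab_{1}\mid\rvaa_{2})=0$ to replace $I(\rvaa_{2},\rvab_{1};\rvab_{2})$ by $I(\rvaa_{2};\rvab_{2})$. Substituting back yields exactly $I(\rvaa_{1},\rvaa_{2};\rvab_{1},\rvab_{2}) = I(\rvaa_{1};\rvab_{1}) + I(\rvaa_{2};\rvab_{2}) - I(\rvab_{1};\rvab_{2})$. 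I do not expect a genuine obstacle here beyond bookkeeping: the two hypotheses play asymmetric roles — the first is used only once (to get $\rvab_{1}$ independent of $\rvaa_{2}$ given $\rvaa_{1}$), whereas the second is invoked twice — so the only thing to be careful about is matching each vanishing conditional mutual information to the correct hypothesis. Equivalently, one may first observe that the two hypotheses together imply the single Markov chain $\rvab_{1}\longleftrightarrow\rvaa_{1}\longleftrightarrow\rvaa_{2}\longleftrightarrow\rvab_{2}$ and then run the same expansion.
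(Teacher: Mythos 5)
Your proof is correct and follows essentially the same route as the paper's: both expand $I(\rvaa_{1},\rvaa_{2};\rvab_{1},\rvab_{2})$ by the chain rule, use the first Markov chain to reduce $I(\rvaa_{1},\rvaa_{2};\rvab_{1})$ to $I(\rvaa_{1};\rvab_{1})$, and use the second to collapse the remaining terms to $I(\rvaa_{2};\rvab_{2})-I(\rvab_{1};\rvab_{2})$. The only difference is bookkeeping — you split the second hypothesis into two vanishing conditional mutual informations and apply them separately, whereas the paper kills $I(\rvaa_{1},\rvab_{1};\rvab_{2}\mid\rvaa_{2})$ in one step — which changes nothing of substance.
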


\begin{proof}[Proof of Proposition~\ref{prop:I_for_MC}]
The mutual information between $\rvaa_{1},\rvaa_{2}$ and $\rvab_{1},\rvab_{2}$ is given by
\begin{align}
    I(\rvaa_{1},\rvaa_{2};\rvab_{1},\rvab_{2})
    &
    \overset{\text{(cr)}}{=}
    I(\rvaa_{1},\rvaa_{2};\rvab_{1})
    +
    I(\rvaa_{1},\rvaa_{2};\rvab_{2}|\rvab_{1})
    \\&
    \overset{\text{(cr)}}{=}
    I(\rvaa_{1};\rvab_{1})
    +
    I(\rvaa_{2};\rvab_{1}|\rvaa_{1})
    +
    I(\rvaa_{1},\rvaa_{2};\rvab_{2}|\rvab_{1})
    \\&
    \overset{\eqref{eq:a1_and_b1}}{=}
    I(\rvaa_{1};\rvab_{1})
    +
    I(\rvaa_{1},\rvaa_{2};\rvab_{2}|\rvab_{1})
    \\&
    \overset{\text{(cr)}}{=}
    I(\rvaa_{1};\rvab_{1})
    +
    I(\rvaa_{1},\rvaa_{2},\rvab_{1};\rvab_{2})
    -
    I(\rvab_{1};\rvab_{2})
\\&
    \overset{\text{(cr)}}{=}
    I(\rvaa_{1};\rvab_{1})
    +
    I(\rvaa_{2};\rvab_{2})
    +
    I(\rvaa_{1},\rvab_{1};\rvab_{2}|\rvaa_{2})
    -
    I(\rvab_{1};\rvab_{2})
\\&
    \overset{\eqref{eq:a2_and_b2}}{=}
    I(\rvaa_{1};\rvab_{1})
    +
    I(\rvaa_{2};\rvab_{2})
    -
    I(\rvab_{1};\rvab_{2}),
    \end{align}
where all equalities labeled $\text{(cr)}$ stem from the chain rule of mutual information. 
\end{proof}

\begin{prop}\label{prop:abcd}
Let $\rvaa,\rvab,\rvac,\rvad$.
Then
  \begin{align}\label{eq:prop_abcd_claim2}
 \set{\rvaa 
  \longleftrightarrow
  \rvac
  \longleftrightarrow
  \rvad\fspace 
  \wedge
  \fspace 
   \rvaa 
  \longleftrightarrow
  \rvac,\rvad 
  \longleftrightarrow
  \rvab}
  \fspace 
  \iff
  \fspace
  \rvaa 
  \longleftrightarrow
  \rvac
  \longleftrightarrow
  (\rvab,\rvad).
  \end{align}
 \finenunciado
\end{prop}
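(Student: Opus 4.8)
The plan is to reduce the whole equivalence to a single instance of the chain rule of mutual information together with the non-negativity of conditional mutual information. Recall that, by definition, a Markov chain $\rvaa\longleftrightarrow\rvac\longleftrightarrow\rvad$ is precisely the statement $I(\rvaa;\rvad|\rvac)=0$. Hence the three Markov chains appearing in~\eqref{eq:prop_abcd_claim2} translate, respectively, into $I(\rvaa;\rvad|\rvac)=0$, into $I(\rvaa;\rvab|\rvac,\rvad)=0$, and into $I(\rvaa;\rvab,\rvad|\rvac)=0$.

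The first step is to record the conditional chain-rule identity
\[
  I(\rvaa;\rvab,\rvad|\rvac)=I(\rvaa;\rvad|\rvac)+I(\rvaa;\rvab|\rvac,\rvad),
\]
which is simply the chain rule of mutual information with every term conditioned on $\rvac$ (first reveal $\rvad$, then reveal $\rvab$ given $\rvad$). For the ``$\Longrightarrow$'' direction I would then observe that the two hypotheses say exactly that both summands on the right-hand side vanish, hence the left-hand side vanishes as well, i.e.\ $\rvaa\longleftrightarrow\rvac\longleftrightarrow(\rvab,\rvad)$. For the ``$\Longleftarrow$'' direction I would use the same identity in reverse: if $I(\rvaa;\rvab,\rvad|\rvac)=0$, then since the two summands on the right are each non-negative and add up to $0$, each of them must individually be $0$, which is precisely the pair of Markov chains on the left.

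There is essentially no obstacle here; the only point worth a remark is that the chain rule and the non-negativity of conditional mutual information must be invoked in the generality of arbitrary (possibly non-discrete) random elements. This is justified by the quantizer-supremum definition of mutual information recalled in the Notation subsection (see also~\cite{gray--11}), so the short argument above applies verbatim to the random elements considered in the proposition.
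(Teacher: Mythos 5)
Your argument is correct and is essentially identical to the paper's own proof: both rest on the chain-rule identity $I(\rvaa;\rvab,\rvad|\rvac)=I(\rvaa;\rvad|\rvac)+I(\rvaa;\rvab|\rvac,\rvad)$, the non-negativity of conditional mutual information, and the equivalence between a Markov chain and the vanishing of the corresponding conditional mutual information. No gaps; your closing remark on the generality for arbitrary random elements is a welcome (if optional) addition.
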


\begin{proof}
The chain rule of mutual information yields
 \begin{align}
  I(\rvaa;\rvab,\rvad|\rvac)
=
  I(\rvaa;\rvad|\rvac)
  +
  I(\rvaa;\rvab|\rvac,\rvad).
    \label{eq:cadena}
 \end{align}
Since mutual information is non-negative, it follows that 
$  I(\rvaa;\rvab,\rvad|\rvac)=0$ if and only if both 
$  I(\rvaa;\rvad|\rvac)$ and
$ I(\rvaa;\rvab|\rvac,\rvad)$ are zero.
The proof is completed by noting that the statement $I(\rvau;\rvaw|\rvaz)=0$ is equivalent to the Markov chain 
$\rvau \longleftrightarrow \rvaz\longleftrightarrow \rvaw$.
\end{proof}



\end{document}